\documentclass[12pt]{article}
\usepackage{amsmath,amssymb,amsthm}
\usepackage[margin=1in]{geometry}
\usepackage[colorlinks,citecolor=blue]{hyperref}
\usepackage{natbib} \bibliographystyle{ims}
\usepackage{enumitem} 
\usepackage{adjustbox}
\usepackage[flushleft]{threeparttable}
\usepackage{multirow, rotating,setspace}
\usepackage{threeparttable}
\usepackage{booktabs}
\usepackage{graphicx}
\usepackage{subcaption}

\allowdisplaybreaks

\renewcommand{\qed}{\rule{2mm}{2mm}}

\DeclareMathOperator*{\cov}{Cov}

\DeclareMathOperator*{\E}{E}
\DeclareMathOperator*{\Var}{Var}

\newtheorem{theorem}{Theorem}[section]
\newtheorem{lemma}{Lemma}[section]

\newtheorem{proposition}{Proposition}[section]

\theoremstyle{plain} 
\newtheorem{definition}{Definition}[section]
\newtheorem{procedure}{Procedure}[section]

\theoremstyle{definition} 
\newtheorem{example}{Example}[section]
\newtheorem{remark}{Remark}[section]
\newtheorem{assumption}{Assumption}[section]
\newtheorem{hypothesis}{Hypothesis}[section]

\AtEndEnvironment{remark}{~\qed}
\AtEndEnvironment{example}{}

\newcommand{\W}{\mathbf{W}}
\newcommand{\ww}{\mathbf{w}}

\newcommand{\Wobs}{\W^{obs}}

\newcommand{\Hbuyer}{H_0^{\text{buyer}}}
\newcommand{\Hseller}{H_0^{\text{seller}}}
\newcommand{\Htotal}{H_0^{\text{total}}}

\usepackage{xcolor, color, ulem}

\begin{document}
\def\spacingset#1{\renewcommand{\baselinestretch}%
{#1}\small\normalsize} \spacingset{1}

\author{
Jizhou Liu \\
PHBS Business School\\
Peking University\\
\url{jizhou.liu@phbs.pku.edu.cn}
\and 
Azeem M.\ Shaikh \\
Department of Economics \\
University of Chicago\\
\url{amshaikh@uchicago.edu}
\and
Panos Toulis \\
Booth School of Business\\
University of Chicago\\
\url{panos.toulis@chicagobooth.edu}
}

\bigskip

\title{Randomization Inference in Two-Sided Market Experiments \thanks{The second author acknowledges support from the National Science Foundation through grant SES-2419008.}}

\maketitle

\onehalfspacing

\begin{abstract}
Randomized experiments are increasingly employed in two-sided markets, such as buyer--seller platforms, to evaluate the effects of marketplace interventions. These experiments must reflect the underlying two-sided market structure in their design and can therefore be challenging to analyze. In this paper, we develop a randomization inference framework for outcomes from two-sided experiments, with a focus on testing and inference for two-sided spillover effects. Our approach is finite-sample valid under sharp null hypotheses. Regarding weak null hypotheses, we find that the commonly used Neyman-style studentization does not universally ensure asymptotic validity, and we document how it depends on the specific formulation of the null. We then propose a two-way variance estimator for studentization that restores asymptotic validity. We further propose methods to improve testing power by exploiting the two-sided structure of the problem, which we validate empirically. We demonstrate our methods through a series of simulation studies and an applied example from a network experiment in micro-lending.
\end{abstract}

\noindent KEYWORDS: Causal inference; Conditional Randomization Test; Two-sided Experiments; Multiple Randomization Design.

\hypersetup{pageanchor=false}
\thispagestyle{empty} 
\newpage
\hypersetup{pageanchor=true}
\setcounter{page}{1}

\spacingset{1.7} %

\section{Introduction}
Randomized experiments are increasingly used to assess the impact of policies in various online domains, including online marketplaces, and streaming media platforms~\citep{Kohavi2020,Athey2019,thomke2003experimentation}.
These environments generate complex datasets through two-sided interactions between multiple types of actors, such as viewers and content creators or buyers and sellers. 
In this context, policy evaluation is complicated due to \textit{market interference}, where the policy treatment on certain units may affect the outcomes of different units in the market, including those never assigned to treatment~\citep{whymarketplace, toulis2016long}.
Traditional randomized experiments, designed for single-population settings, often fall short in addressing the complexities arising from interference in two-sided markets~\citep{imbens2021,johari2021,Wager2021}. 
 
To overcome these challenges, a novel class of experimental designs, termed \textit{Multiple Randomization Designs}, was developed independently by \cite{imbens2021,bajari2023} and \cite{johari2021}, specifically for marketplace experimentation. For example, in an online marketplace with buyers and sellers, the experimenter might randomly assign half of the buyers and half of the sellers to treatment groups, applying a policy intervention (e.g., free shipping) to transactions where both the buyer and the seller are treated. Transactions where either the buyer or the seller (or both) are untreated would not receive the intervention. This design creates variation in the exposure to treatment across different transactions, which facilitates the analysis of causal effects in the presence of interference.

In this paper, we propose Fisherian-style randomization tests~\citep{Fisher1935Design} tailored specifically for analyzing two-sided market experiments of this kind. 
We formulate sharp null hypotheses on spillover effects that are naturally aligned with the two-sided structure of these experiments. Our approach adopts the conditional randomization testing framework that restricts attention to a set of ``focal units'' associated with each tested hypothesis~\citep{Athey2018, Basse2019, puelz2021}. In this framework, the spillover hypotheses are not sharp in the traditional sense as they cannot be used to impute all potential outcomes. 
However, with proper conditioning, the hypotheses become sharp and are thus amenable to standard Fisherian-style randomization.

We propose two valid randomization testing procedures: a permutation-based procedure, which is straightforward to implement but relies on specific symmetry conditions on the design that may not universally hold; and an alternative procedure that is valid under arbitrary designs but requires sampling from a potentially complex space. For a third null hypothesis on total treatment effects, we identify a novel validity-power trade-off among multiple candidate tests and provide heuristic guidance on improving power.

Furthermore, we extend our framework to test weak null hypotheses on average spillover and total treatment effects by incorporating studentized statistics into our randomization tests. While studentization is a well-established technique for ensuring asymptotic validity under weak nulls \citep{chung2013, diciccio2017robust, ZHAO2021278, Wu2021, toulis2025asymptotic}, we demonstrate that its effectiveness in two-sided experiments depends crucially on how the null hypothesis is formulated and what variance estimators are used for studentization. Specifically, we show that the commonly used Neyman-style studentization is asymptotically valid under a null hypothesis concerning \textit{focal average} effects, but it may fail to control size under a \textit{global average} null due to the additional variance introduced by the two-sided randomization. Then, we propose a modified, two-way variance estimator for studentization that restores asymptotic validity for the global weak null by absorbing the additional sampling variation. Our key technical contribution involves extending the proof techniques of \cite{ZHAO2021278} to account for these two-sided dependencies, a finding we validate through extensive simulation studies.

This paper contributes to the growing literature on randomized experiments in multi-sided online marketplaces. \cite{bajari2023} pioneered the introduction of multiple randomization designs to account for complex spillover effects occurring within marketplaces. \cite{imbens2021} delves deeper into such designs, adopting a Neymanian perspective that emphasizes randomization-based point estimation and proposes conservative variance estimators for statistical inference. By contrast, our work considers the Fisherian perspective that instead focuses on finite-sample exact $p$-values via randomization-based tests.

The rest of the paper is organized as follows. Section \ref{sec:notation} describes the setup and notation. Section \ref{sec:randomization-tests} presents the main results under sharp null hypotheses. Section \ref{sec:weak-null} discusses the randomization tests for the weak null hypotheses. Section \ref{sec:simulation} examines the finite-sample behavior of our methods through simulations. Section \ref{sec:empirical} illustrates the proposed inference methods in an empirical application based on the experiment conducted in \cite{Comola2021}. Section \ref{sec:conclusion} concludes the paper.

\section{Setup and Notation}\label{sec:notation}
We begin by introducing the two-sided randomized design following the two-population buyer-seller example in \cite{imbens2021}. Consider a marketplace comprising \(I\) buyers and \(J\) sellers, indexed by \(i = 1, \ldots, I\) and \(j = 1, \ldots, J\), respectively. 
In the marketplace, buyers and sellers interact, 
producing an outcome of interest, such as the total number of transactions between 
the buyer-seller pair within a given time period. 
We denote this observed outcome between buyer $i$ and seller $j$ as \(Y_{i,j} \in \mathbb{R}\). We denote the set of buyer-seller pairs as $\mathbb{U}= \{(i,j):i=1,\dots,I, j=1,\dots,J\}$.
The  \(I \times J\) matrix \(\mathbf{Y} = [Y_{i,j}]\) captures the outcomes observed in the marketplace. 

To examine the impact of various interventions, researchers conduct randomized experiments at the level of individual buyer-seller pairs, using a binary treatment assignment, \(W_{i,j} \in \{0, 1\}\). These interventions might include incentives such as free shipping or discounts on transaction fees. The matrix $\mathbf{W} = [W_{i,j}]\in\{0,1\}^{I\times J}$ denotes the full treatment assignment. We use 
$Y_{i,j}(\ww)$ to denote the potential outcome for the pair \((i, j)\) under 
treatment assignment $\ww\in\{0,1\}^{I\times J}$, and $\mathbf{Y}(\ww)$ for the full $I\times J$ matrix of potential outcomes under assignment \(\ww\). As usual, the observed outcomes and the potential outcomes are related to treatment assignment by the consistency relationship $\mathbf{Y}=\mathbf{Y}(\W)$.

As highlighted by \cite{imbens2021}, treatments in marketplace experiments need not be uniformly applied across all buyers associated with a particular seller, nor across all sellers interacting with a specific buyer. In this paper, we recast the Multiple Randomization Design introduced by \cite{bajari2023} as an \textit{independent two-sided randomized design}, emphasizing that two independent randomizations occur separately on the two sides (buyers and sellers) of the marketplace. Following \cite{imbens2021}, a buyer-seller pair—which constitutes the unit of analysis—is considered treated only if both the buyer and the seller are individually assigned treatment. However, we deviate slightly from Definition 3.4 in \cite{imbens2021} by relaxing the requirement that the proportion treated on each side must remain fixed. Consequently, in our setting, the full treatment matrix $\mathbf{W}$ can be represented as the outer product of the buyer treatment vector and the seller treatment vector, formalized in the following definition.

\begin{definition}[Independent Two-Sided Randomized Design]
\label{def:multiple-randomization}
Let \(w^{B} = (w_1^B, \dots, w_I^{B})^{\top} \in \mathbb{W}^B \subseteq \{0,1\}^{I}\) represent a binary treatment assignment for buyers with probability distribution \(p^B\); and \(w^{S} = (w_1^S, \dots, w_J^S)^{\top} \in \mathbb{W}^S \subseteq \{0,1\}^{J}\) denote the treatment assignment for sellers with distribution \(p^S\). An independent two-sided randomized design is then defined by the induced probability distribution of \(\W = w^B (w^S)^{\top} \subseteq \{0,1\}^{I\times J} = \mathbb{W}\), where the buyer and seller assignments are independent.
\end{definition}

\begin{example}
    In equation (\ref{eqn:w}), we present three examples of treatment assignment matrices \(\W_1\), \(\W_2\), and \(\W_3\), corresponding to buyer-side, seller-side, and two-sided designs, respectively. Treatment assignments for a single population, \(\W_1\) and \(\W_2\), can be equivalently represented by the assignment vectors \(w^B = (0, 1, 0, 1)\) and \(w^S = (1, 0, 1, 0, 0)\), as defined in Definition \ref{def:multiple-randomization}. The assignment matrix for a two-sided randomized design, \(\W_3\), is then defined as the element-wise product (Hadamard product) of \(\W_1\) and \(\W_2\), \(\W_3 = \W_1 \odot \W_2\), or as the outer product of the two assignment vectors, \(\W_3 = w^B (w^S)^{\top}\), as specified in Definition \ref{def:multiple-randomization}.
    \begin{equation}\label{eqn:w}
    \begin{array}{ccc}
    \W_1 =\left(\begin{array}{ccccc}
    0 & 0 & 0 & 0 & 0 \\
    1 & 1 & 1 & 1 & 1 \\
    0 & 0 & 0 & 0 & 0 \\
    1 & 1 & 1 & 1 & 1 
    \end{array}\right) &
    \W_2 =\left(\begin{array}{ccccc}
    1 & 0 & 1 & 0 & 0 \\
    1 & 0 & 1 & 0 & 0 \\
    1 & 0 & 1 & 0 & 0 \\
    1 & 0 & 1 & 0 & 0 
    \end{array}\right) &  
    \W_3 =\left(\begin{array}{ccccc}
    0 & 0 & 0 & 0 & 0 \\
    1 & 0 & 1 & 0 & 0 \\
    0 & 0 & 0 & 0 & 0 \\
    1 & 0 & 1 & 0 & 0 
    \end{array}\right) \\
    \text{Buyer-side randomization} & \text{Seller-side randomization} & \text{Two-sided randomization}
    \end{array}
    \end{equation}
\end{example}

In Example \ref{example:crd}, we demonstrate the concept of an independent two-sided experiment under complete randomization. In this design, treatment assignments for each population are uniformly distributed,  ensuring a fixed fraction of treated units. Another variant of the two-sided design assigns treatments based on independent and identically distributed (i.i.d.) Bernoulli trials for each buyer and seller. As discussed later, both designs enjoy a nice property known as ``design symmetry'' or ``exchangeability'', which motivates the use of permutation tests as a computationally straightforward approach.

\begin{example}[Two-Sided Design under Complete Randomization]\label{example:crd}

For fixed integers, $0 < I_1 < I$ and $0 < J_1 < J$, $I_0 = I - I_1$ and $J_0 = J - J_1$, let $p^B \sim \text{Unif}(\mathbb{W}^B) \text{ and }  p^S \sim \text{Unif}(\mathbb{W}^S)$ where $\mathbb{W}^B= \{w^B \in \{0,1\}^I: \sum_{i=1}^I w_i^B = I_1\}$ and $\mathbb{W}^S = \{w^S \in \{0,1\}^J: \sum_{j=1}^J w_j^S = J_1\}$. 

This induces a uniform two-sided randomized design as follows.
\begin{equation*}
p(\boldsymbol W)
=
\begin{cases}
\displaystyle
\bigl[C(I,I_1)\,C(J,J_1)\bigr]^{-1},
& \text{if } \boldsymbol W = w^B (w^S)^\top
\text{ for some } w^B\in\mathbb W^B,\; w^S\in\mathbb W^S,\\[1em]
0, & \text{otherwise},
\end{cases}
\end{equation*}
\end{example}

Next, we introduce the concept of \textit{local interference}, which forms the basis for the two null hypotheses of interest. Under the classical Stable Unit Treatment Value Assumption (SUTVA)~\citep{Rubin1972, Rubin1980}, the potential outcome can be represented by only the treatment status of a given pair, i.e. \(Y_{i,j}(W_{i,j})\). However, recent literature on online marketplace experiments~\citep{whymarketplace, toulis2016long, johari2021, imbens2021} has pointed out that this assumption is untenable as there is likely interference 
due to a single seller interacting with multiple buyers or a single buyer interacting with multiple sellers. On the other hand, without any restrictions on the potential outcome function, the analysis would be intractable as the treatment space is exponentially large. To make progress, we follow \cite{imbens2021} and assume that interference may only occur between the activities of a given buyer or seller in any given buyer-seller pair.

\begin{assumption}[Local interference]\label{ass:local-interference}
Potential outcomes satisfy $Y_{i,j}(\ww) = Y_{i,j}(\ww^\prime)$ for all pairs $(i,j)$ and assignments $\ww,\ww'$, provided the following conditions hold: (a) the assignments for the pair $(i, j)$ coincide, such that $w_{i,j}=w_{i,j}^\prime$; (b) the fraction of treated sellers for buyer $i$ coincide under $\ww$ and $\ww^\prime$, and (c) the fraction of treated buyers for seller $j$ coincide under $\ww$ and $\ww^\prime$.
\end{assumption}

Following Lemma 3.5 in \cite{imbens2021} and under Assumption \ref{ass:local-interference}, we can express the potential outcomes in terms of four distinct treatment exposure cases as follows:\footnote{
To be more specific, by Assumption \ref{ass:local-interference}, the potential outcomes can be represented as \(Y_{i,j}(W_{i,j}, \bar w^B_i, \bar w^S_j)\), where \(\bar w^B_i = \frac{1}{J} \sum_{j=1}^J W_{i,j}\) and \(\bar w^S_j = \frac{1}{I} \sum_{i=1}^I W_{i,j}\) are observed treated fractions. Since $\bar w^B_i = \frac{1}{J} \sum_{j=1}^J w_i^B w_j^S = w_i^B r^S$ and $\bar w^S_j = w_j^S r^B$, where $r^S$ and $r^B$ are the treated fractions of sellers and buyers, respectively, we have \(Y_{i,j}(W_{i,j}, \bar w^B_i, \bar w^S_j) = Y_{i,j}(w_i^B w_j^S, w_i^B r^S, w_j^S r^B)\). Therefore, the potential outcomes are 
a function of only $w_i^B$ and $w_j^S$ whenever our testing procedures do not alter the treated fractions. This justifies the simplified notation of Equation~\eqref{eqn:potential-outcomes}.
 }
\begin{equation}\label{eqn:potential-outcomes}
    Y_{i,j}(\W) = Y_{i,j}(w_i^B, w_j^S) =
    \begin{cases}
      Y_{i,j}(0,0) & \text{if both buyer and seller are untreated}, \\
      Y_{i,j}(1,0) & \text{if buyer is treated and seller is untreated}, \\
      Y_{i,j}(0,1) & \text{if buyer is untreated and seller is treated}, \\
      Y_{i,j}(1,1) & \text{if both buyer and seller are treated}.
    \end{cases}
\end{equation}

Certain constrasts between the above potential outcomes express spillover effects from either the buyer side or the seller side. For example, the buyer spillover effect examines the impact on a buyer-seller pair \((i,j)\) when buyers are treated versus when no buyers are treated, holding the seller untreated. Similarly, the seller spillover effect compares the outcomes for a seller when sellers are treated versus untreated, assuming the buyer remains untreated. In Hypotheses \ref{hypothesis:buyer-spillovers} and \ref{hypothesis:seller-spillovers}, which we define below, we formalize the sharp null hypotheses to test for the existence of spillover effects from both the buyer and seller sides. Rejecting these two spillover null hypotheses provides evidence for the existence of interference, suggesting that spillover effects likely exist in the given direction. For instance, if the buyer spillover effect is non-zero, it could indicate that treated buyers' overall shopping experience is influenced by interactions with treated sellers, causing changes in behavior even when shopping from untreated sellers.

\begin{hypothesis}[Buyer spillover]\label{hypothesis:buyer-spillovers}
    The null hypothesis of no buyer spillover effects is 
    \begin{equation*}
\Hbuyer: Y_{i,j}(0,0) = Y_{i,j}(1,0), \text{ for all } i=1,\dots,I, j=1,\dots,J.
    \end{equation*}
\end{hypothesis}
\begin{hypothesis}[Seller spillover]\label{hypothesis:seller-spillovers}
    The null hypothesis of no seller spillover effects is 
    \begin{equation*}
        \Hseller: Y_{i,j}(0,0) = Y_{i,j}(0,1), \text{ for all } i=1,\dots,I, j=1,\dots,J.
    \end{equation*}
\end{hypothesis}
\begin{hypothesis}[Total effect]\label{hypothesis:total}
    The null hypothesis of no total treatment effects is given as:
    \begin{equation*}
        \Htotal: Y_{i,j}(0,0) = Y_{i,j}(1,1), \text{ for all } i=1,\dots,I, j=1,\dots,J.
    \end{equation*}
\end{hypothesis}
The total treatment effect (Hypothesis \ref{hypothesis:total}) aims to capture the difference between pairs where both the buyer and the seller are treated and pairs where neither is treated. 
We call this a ``total effect'' because it encompasses both spillover and direct treatment effects. For a clearer understanding, we discuss this concept through a linear outcome model in Example \ref{example:linear-outcome} that follows.

\begin{example}\label{example:linear-outcome}
Consider a linear additive model for the potential outcomes (with independent errors):
\begin{equation*}
    Y_{i,j}(\W) = \alpha w_{i}^B w_j^S + \beta w_i^B + \gamma w_j^S + \epsilon_{i,j} ~.
\end{equation*}
The null hypotheses in \ref{hypothesis:buyer-spillovers} to \ref{hypothesis:total} can be re-stated as 
$\Hbuyer: \beta = 0$, $\Hseller: \gamma = 0$, and 
$\Htotal: \alpha + \beta + \gamma = 0$. Here, $\beta$ and $\gamma$ denote the buyer and seller spillover effects, respectively, while $\alpha$ is the direct treatment effect.
\end{example}

Finally, we introduce Assumption \ref{ass:exchangeable}, which imposes symmetry on the experimental design. This symmetry allows us to construct permutation tests that are computationally efficient, but 
it is not necessary to construct valid randomization procedures in general.

\begin{assumption}[Design Symmetry]\label{ass:exchangeable}
    The probability distributions of treatment assignments for buyers, $p^B$, and sellers, $p^S$, are exchangeable. Specifically, $p^B(w_1^B, \dots, w_I^B) = p^B(w_{\pi_B(1)}^B, \dots, w_{\pi_B(I)}^B)$ for all permutations $\pi_B: [I] \rightarrow [I]$ and similarly $p^S(w_1^S, \dots, w_J^S) = p^S(w_{\pi_S(1)}^S, \dots, w_{\pi_S(J)}^S)$ for all permutations $\pi_S: [J] \rightarrow [J]$.
\end{assumption}


\newcommand{\mU}{\mathcal U}
\newcommand{\mW}{\mathcal{W}}
\newcommand{\mC}{\mathcal{C}}
\section{Main Method}\label{sec:randomization-tests}

\subsection{Overview of Conditional Randomization Tests}

The classical Fisher Randomization Test simulates the sampling distribution of the test statistic according to the actual treatment variation in the experiment~\citep[Chapter 5]{imbens2015causal}. 
While this procedure is valid for testing the sharp null hypothesis under interference, it is generally not valid for non-sharp hypotheses, which do not specify the complete schedule of potential outcomes.
The null hypotheses defined in the previous section fall into this category. For instance, 
if we observe outcome $Y_{ij}(1, 0)$ for a treated buyer $i$ and control seller $j$, then 
we cannot impute the outcome $Y_{ij}(0,0)$ under $\Hseller$---this null hypothesis is not sharp in the strict sense.

To address this issue, recent literature has proposed the use of conditional randomization tests that execute the resampling procedure on a subset of units, $\mU \subseteq \mathbb{U}$
and a subset of assignments, $\mW \subseteq \mathbb{W}$, such that the potential outcomes become fully specified~\citep{Aronow2012, Athey2018, Basse2019}. Note also that a unit is a buyer-seller pair in our case. In the terminology of~\citet{Basse2019}, $\mC = (\mU, \mW)$ is the \textit{conditioning event}, 
which may depend on the observed treatment assignment in a random way. Let $p(\mC \mid \W)$ denote its distribution, which is under the analyst's control. The conditional Fisher Randomization Test then randomizes treatment according to its conditional distribution:
\begin{equation}\label{eq:condFRT}
    p(\W \mid  \mC) \propto p(\mC \mid \W) p(\W).
\end{equation}

Such a test is finite-sample valid even for a non-sharp null hypothesis, provided that the conditioning event, $\mC$, is constructed 
in a way such that the potential outcomes for all units and assignments in $\mC$ can be imputed under the null hypothesis~\citep[Theorem 1]{Basse2019}. 
Sampling from the conditional distribution in~\eqref{eq:condFRT} may be challenging in general. 
However, under certain conditions on the design, $p(\W)$, and the conditioning mechanism, $p(\mC|\W)$, 
the distribution in~\eqref{eq:condFRT} can be reduced to a permutation distribution that is easy to sample from. We will consider this approach in the following sections.

\subsection{Testing spillover effects}

In this section, we develop conditional randomization tests for the 
spillover hypotheses $\Hbuyer$ and \(\Hseller\). Under symmetric designs, our randomization procedure entails straightforward permutations of treatment assignments on one side (buyer or seller), conditioned on the assignment of the other side (seller or buyer, respectively).  We begin with the buyer spillover hypothesis and describe the associated randomization procedure in detail.  The analysis of the
seller spillover hypothesis will be brief as it is completely symmetrical to the buyer case. 

For the buyer spillover hypothesis~\ref{hypothesis:buyer-spillovers}, 
we propose to condition on control sellers and fix those sellers to control across randomizations to ensure that the potential outcomes can be imputed.  Moreover, we will condition on the marginal number of treated buyer-sellers observed in the sample. Under Assumption~\ref{ass:exchangeable}, the conditional randomization distribution is exchangeable, resulting in a permutation test that can be efficiently implemented.

Formally, we define the following conditioning events:
\begin{equation}
     \mathcal{U} = \{(i,j): w^{obs, S}_j = 0\} \text{ and } \mathcal{W} = \{v (w^{obs, S})^{\top}: v \in  \mathbb{W}^B\} \cap \mathcal{M}~,\label{eq:cond_buyer}
\end{equation}
where $\mathcal{M} = \{v u^\top: v\in \mathbb{W}^B, u \in \mathbb{W}^S \text{ s.t. } \sum_{i=1}^I v_i = \sum_{i=1}^I w^{obs, B}_i, \sum_{j=1}^J u_j = \sum_{j=1}^J w^{obs, S}_j\}$. Here, we use the superscript `obs' to emphasize that we are referring to the actual treatments realized in the observed data.
Thus, $\mU$ denotes the buyer-seller pairs for which the seller part of the pair is assigned control. Set $\mW$ denotes the assignments that these sellers in control, while $\mathcal{M}$ is the set of assignments that match the marginal number of treated buyers and sellers observed in the sample. We note that all these sets are random as they all depend on the observed treatment assignment vector, $\Wobs$. We are now ready to define the main randomization test for the buyer spillover hypothesis.

\begin{procedure}[Test for Buyer Spillover Effect $\Hbuyer$]\label{procedure:spillover}
Let $\mC = (\mU, \mW)$ denote the conditioning event implied by the definitions in Equation~\eqref{eq:cond_buyer}.
\begin{enumerate}
    \item Compute the test statistic $T(\W^{obs}\mid \mathbf{Y}^{obs}, \mathcal{C})$ under the observed treatment assignment. For example, we can use the difference-in-means,
    \begin{equation*}
        T(\W \mid \mathbf{Y}, \mathcal{C}) = \frac{1}{n_{1}} \sum_{(i,j)\in\mU} Y_{i,j} (1-w^{S}_j) w^{B}_i - \frac{1}{n_{0}}  \sum_{(i,j)\in\mU} Y_{i,j} (1-w^{S}_j) (1- w^{B}_i) ~,
    \end{equation*}
    where $n_{1} =  \sum_{(i,j)\in\mU} (1-w^{S}_j) w^{B}_i, \quad n_{0} =  \sum_{(i,j)\in\mU} (1-w^{S}_j) (1-w^{B}_i).$
    \item 
    For \( l = 1, 2, \dots, L \), generate
    \( \W^{(l)} = w^{B,(l)} (w^{S, obs})^\top \) where \( w^{B,(l)} \) is a random permutation of \( w^{B, obs} \), and compute \( T(\W^{(l)} \mid \mathbf{Y}^{obs}, \mathcal{C}) \).
    
    \item Reject $\Hbuyer$, denoted by $\phi(\W^{obs}; \mathcal{C})$, if
    \begin{equation*}
        \frac{1}{L+1} 
        \left[1+\sum_{l=1}^{L} \mathbb{I}\left\{T(\W^{(l)}\mid \mathbf{Y}^{obs}, \mathcal{C}) \geq T(\W^{obs}\mid \mathbf{Y}^{obs}, \mathcal{C}) \right\} \right] \leq \alpha.
    \end{equation*}
\end{enumerate}
\end{procedure}

Figure~\ref{fig:buyer} is a graphical depiction of Procedure~\ref{procedure:spillover} for testing the 
buyer spillover effect. The shaded parts of the array in the figure illustrates 
that the test conditions on the set of sellers that were not treated. Procedure~\ref{procedure:spillover} then randomizes the treatments of buyers (i.e., rows in the array). Under certain assumptions, such randomization is equivalent to permutations as explained in the following remark.

\begin{figure}[h!]
    \centering
    \includegraphics[width=0.65\textwidth]{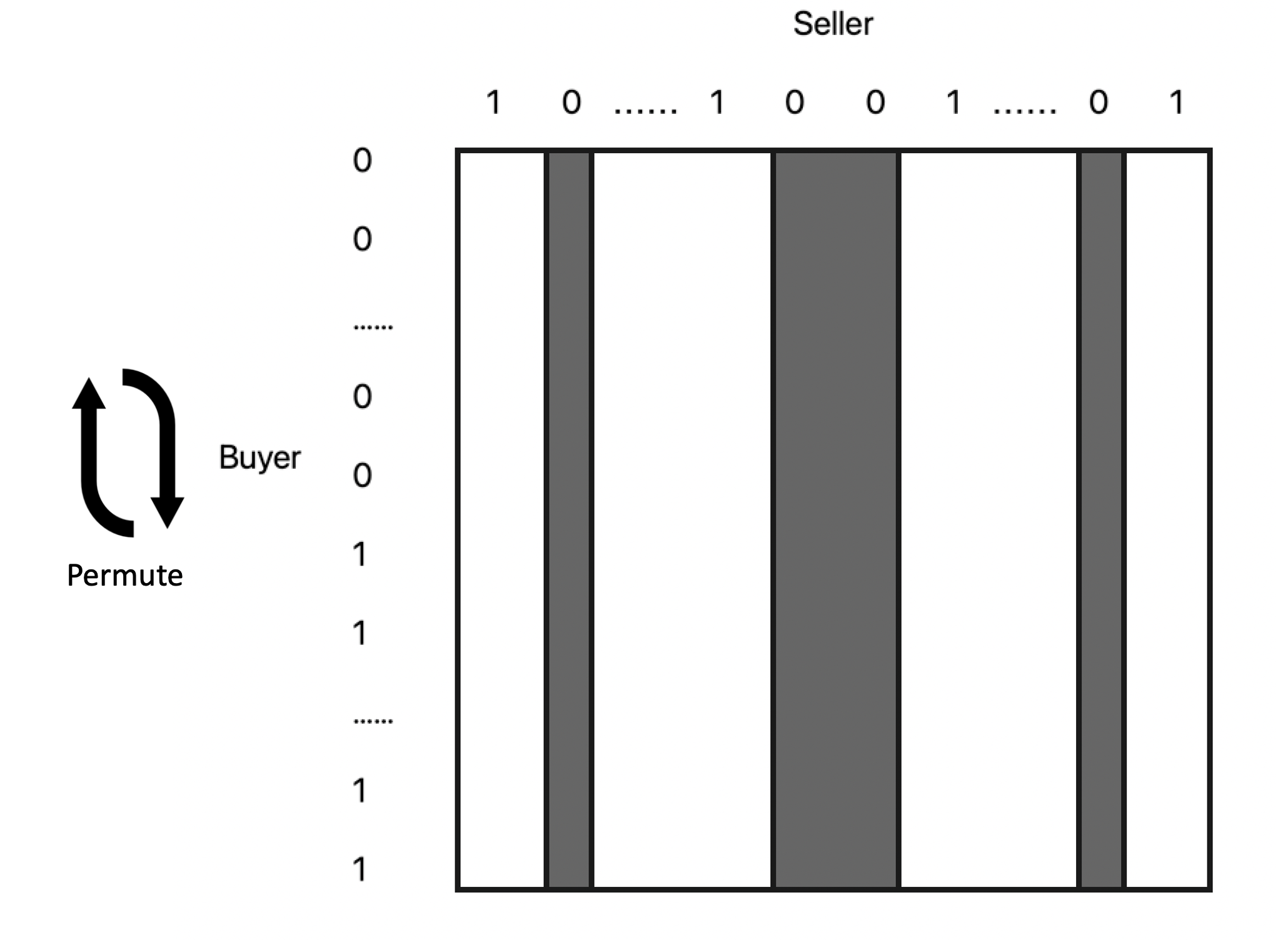}
    \caption{Graphical illustration of the conditioning event for Procedure~\ref{procedure:spillover}. 
    The shaded area illustrates that the procedure conditions on control sellers (columns), and then randomizes the treatments on buyers (rows).}
    \label{fig:buyer}
\end{figure}

\begin{remark}\label{remark:proof-explain}
Under Assumption \ref{ass:exchangeable}, we can show that the permutation of the buyer treatments in step 2 leads to a valid permutation test by extending the equivariance result of~\cite{basse2024}. In particular, following the definitions in Equation~\eqref{eq:cond_buyer}, we can show that a permutation in the assignment of buyer-seller treatment
does not affect the selected buyer-seller pairs in $\mU$. 
Moreover, such permutation also implies an analogous permutation of the treatment exposures of these pairs, a property known as equivariance. We give the full details of the proof in the Appendix.
\end{remark}

\begin{remark}[Randomization tests when design symmetry fails]\label{remark:sample-conditional}
If Assumption \ref{ass:exchangeable} does not hold, the permutation of the buyer treatments in step 2 no longer leads to a valid randomization test. 
In such settings, we need to modify step 2 of Procedure \ref{procedure:spillover} as follows:
    \begin{itemize}
        \item[2'.] For $l=1,2,\dots, L$, draw $\W^{(l)} \sim p(\W\mid \mathcal{C})$ and compute $T(\W^{(l)}\mid \mathbf{Y}^{obs}, \mathcal{C})$. Here, $\W^{(l)}$ is defined as $w^{B, (l)} (w^{S, obs})^\top$, where
        \begin{equation*}
            w^{B, (l)} \sim p^B\left(w \mid \sum_{i=1}^I w_i = \sum_{i=1}^I w_i^{B, obs}\right)~.
        \end{equation*}
    \end{itemize}
     That is, we repeatedly sample from the buyer-side design, conditioning on the treated fraction in the observed buyer-side assignment. However, sampling from such conditional distribution---e.g., through rejection sampling---may be computationally challenging depending on the particular experimental design. 
     See  Appendix \ref{app:details} for more details and the proof.
\end{remark}
\begin{remark}[Seller spillover hypothesis]
    Testing for the seller spillover hypothesis is completely analogous to Procedure~\ref{procedure:spillover}. The idea is to condition on control buyers, and change the definitions in Equation~\eqref{eq:cond_buyer} into
    \begin{equation*}
    \mathcal{U} = \{(i,j): w^{obs, B}_i = 0\} \text{ and } \mathcal{W} = \{w^{obs, B} u^{\top}: u \in  \mathbb{W}^S\} \cap \mathcal{M}.
\end{equation*}
    Operationally, we may simply transpose the buyer-seller array and execute Procedure~\ref{procedure:spillover}.
\end{remark}


The validity of our proposed randomization tests in Procedure \ref{procedure:spillover} follows by adapting Theorem 2 of \cite{puelz2021} in the setting of two-sided experiments. We provide the statement of the validity result in the following theorem.

\begin{theorem}\label{thm:validity1}
Consider an independent two-sided randomized design (Definition \ref{def:multiple-randomization}) where Assumption \ref{ass:local-interference} holds. Let $\mathcal{C}$ be a buyer spillover conditioning event, as defined in Equation~\eqref{eq:cond_buyer}.
\begin{enumerate}
    \item Under Assumption \ref{ass:exchangeable}, the testing procedure $\phi(\W^{obs}; \mathcal{C})$, as defined in Procedure \ref{procedure:spillover}, is valid. Specifically, for any significance level $\alpha \in (0,1)$, the expectation $E[\phi(\W^{obs}; \mathcal{C})\mid\mathcal{C}] \leq \alpha$ holds under the null hypothesis $\Hbuyer$ (or $\Hseller$).    \item If Step (b) of Procedure \ref{procedure:spillover} is replaced with the one in Remark \ref{remark:sample-conditional}, the validity of $\phi(\W^{obs}; \mathcal{C})$ is maintained without assuming Assumption \ref{ass:exchangeable}.
\end{enumerate}
\end{theorem}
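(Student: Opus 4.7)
The plan is to reduce both parts of Theorem~\ref{thm:validity1} to the general conditional randomization test framework of~\citet{Basse2019}. The core requirement of that framework is an imputability condition on the conditioning event, and in our setting this condition will be a direct consequence of how we constructed $\mathcal{C}$ in Equation~\eqref{eq:cond_buyer} together with Assumption~\ref{ass:local-interference} and $H_0^{buyer}$.

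I would first handle part 2, which is the more general statement. By construction, any $\W \in \mathcal{W}$ has the form $v(w^{obs,S})^\top$, so its seller-side vector coincides with $w^{obs,S}$. Therefore for every $(i,j) \in \mathcal{U}$ and every $\W \in \mathcal{W}$, $w^S_j = w^{obs,S}_j = 0$. Combining the potential-outcomes representation in Equation~\eqref{eqn:potential-outcomes} with $H_0^{buyer}$ yields
\begin{equation*}
    Y_{i,j}(\W) \;=\; Y_{i,j}(v_i, 0) \;=\; Y_{i,j}(0,0) \;=\; Y_{i,j}(w^{obs,B}_i, 0) \;=\; Y_{i,j}^{obs},
\end{equation*}
so every potential outcome entering $T(\W \mid \mathbf{Y}^{obs}, \mathcal{C})$ is imputable from the observed data. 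Because step 2' of Remark~\ref{remark:sample-conditional} draws $\W^{(l)}$ exactly from $p(\W \mid \mathcal{C})$, Theorem~1 of~\citet{Basse2019} then delivers $E[\phi(\Wobs; \mathcal{C}) \mid \mathcal{C}] \leq \alpha$ without invoking Assumption~\ref{ass:exchangeable}.

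For part 1, I would argue that under Assumption~\ref{ass:exchangeable} the permutation step in Procedure~\ref{procedure:spillover} reproduces precisely the conditional law used in part 2. Since $\mathcal{W}$ fixes the seller vector at $w^{obs,S}$ and restricts the buyer vector to have the same total as $w^{obs,B}$, we have
\begin{equation*}
    p(\W \mid \mathcal{C}) \;\propto\; p^B(w^B)\,\mathbb{I}\!\left\{w^S = w^{obs,S},\ \sum_{i=1}^I w^B_i = \sum_{i=1}^I w^{obs,B}_i\right\}.
\end{equation*}
By exchangeability of $p^B$, its conditional distribution given the total is uniform over permutations of any fixed vector with that total, in particular uniform over permutations of $w^{obs,B}$. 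Hence the permuted assignments generated in step 2 are i.i.d.\ draws from $p(\W \mid \mathcal{C})$, and validity follows immediately from part 2.

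The step I expect to require the most care is formally justifying the equivariance claim sketched in Remark~\ref{remark:proof-explain}: one must check that the event $\mathcal{C}$, as a random function of $\Wobs$, is invariant under relabeling of buyer indices, so that a permuted assignment still produces the \emph{same} $\mathcal{C}$ and thus plugs into the Basse--Feller--Toulis framework. A secondary subtlety is that the theorem statement mentions $H_0^{total}$ alongside $H_0^{buyer}$; for the buyer-side conditioning event this either requires the reading of $H_0^{total}$ that implies $Y_{i,j}(0,0)=Y_{i,j}(1,0)$ on the control-seller stratum, or a parallel construction with an adapted conditioning event, and I would treat it as a separate corollary rather than bundling it into the main argument.
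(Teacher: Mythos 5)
Your proposal is correct and follows essentially the same route as the paper: verify that the test statistic is imputable on $\mathcal{C}$ (since every $\W \in \mathcal{W}$ pins the seller vector at $w^{obs,S}$, only $Y_{i,j}(\cdot,0)$ on control-seller pairs is needed, and $H_0^{buyer}$ supplies the missing values), invoke Theorem~1 of \citet{Basse2019} for the conditional-sampling version, and then use exchangeability of $p^B$ to identify the permutation step with the conditional law $p(\W \mid \mathcal{C})$ --- the paper phrases this last step as the randomization hypothesis $\W^{(l)} \mid \mathcal{C} \stackrel{d}{=} \W^{obs} \mid \mathcal{C}$, which is the same fact. Your side remark that the parenthetical $H_0^{total}$ does not directly yield imputability under the buyer-side conditioning event is also well taken: the paper's own imputability argument for Procedure~\ref{procedure:spillover} in fact only uses $H_0^{buyer}$, and its ``(by $H_0^{total}$)'' annotation there appears to be a typo.
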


\subsection{Testing Total effects}

\subsubsection{Testing Procedure with $k$-Block Conditioning Event}
In this section, we proceed to discuss the testing procedure for total effects. We propose a randomization procedure that relies on a \textit{\(k\)-Block Conditioning Event}. 
The idea is to split focal units into \(k\)-by-\(k\) blocks, with the distinctive arrangement ensuring that units within a specific block do not share rows or columns with units from other blocks. Furthermore, every unit within a block shares the same treatment status for both buyers and sellers. 

Figure \ref{fig:total} illustrates an example of a $k$-block conditioning event corresponding to the shaded diagonal blocks. The treatment assignments within each block are either $(w_i^B, w_j^S) = (0,0)$ or $(w_i^B, w_j^S) = (1,1)$, with no overlap of blocks across columns or rows. Such arrangement is important as it allows the use of efficient permutation procedures under symmetric two-sided randomized designs. In other words,
under symmetric treatment assignment designs, such as complete randomization and Bernoulli trials, our proposed tests involve permuting the treatment assignments of focal units across blocks.

\newcommand{\IK}{\mathcal{I}^{(k)}(\Wobs)}
\newcommand{\JK}{\mathcal{J}^{(k)}(\Wobs)}
\newcommand{\UK}{\mathcal{U}^{(k)}(\Wobs)}
\newcommand{\WK}{\mathcal{W}^{(k)}(\Wobs)}
\newcommand{\CK}{\mathcal{C}^{(k)}}

Towards testing the total effect,  
let $\IK$ denote a random partition of the buyer set $[I]$ into $k$ equal-sized non-overlapping subsets with all buyers having the same treatment status within each subset.\footnote{We assume $I$ and $\sum_{i=1}^Iw_i^{obs, B}$ are both divisible by $k$ for simplicity. When this is not the case, our methods remain valid, and the framework can be extended accordingly, though at the cost of more cumbersome notation.} That is,
\begin{equation}\label{eqn:k-partition}
\IK= \{ \mathcal{I}_s : 1\leq s \leq I/k\},
\end{equation}
such that for all $1 \leq s, s^\prime \leq I/k$: 
\begin{itemize}
    \item[(i)] $|\mathcal{I}_s|=k$, and $\cup_s~\mathcal{I}_s = [I]$,  and $\mathcal{I}_s \cap \mathcal{I}_{s^\prime} = \emptyset$.
    \item[(ii)] $w_i^{obs, B} = w_{i'}^{obs,B}$ for all $i,i' \in \mathcal{I}_s$.
\end{itemize}
Similarly, we define $\JK $ as a random partition of the seller set $[J]$ into $k$-sized subsets with 
sellers having the same treatment status within each subset. Next, define
\begin{equation}\label{eq:Uk}
\UK = \bigcup_{ 1 \leq s \leq \min(I/k, J/k)} \{ (i, j) : w_i^{obs,B} = w_j^{obs, S}, i \in \mathcal{I}_s, j\in\mathcal{J}_{s} \},
\end{equation}
where $\mathcal{I}_s, \mathcal{J}_{s}$ are defined above.
That is, $\UK$ is the collection of buyer-seller pairs constructed from the cross-product of $\IK$ and $\JK$, making sure that all buyers and sellers have the same treatment status in the group. Figure~\ref{fig:total} below illustrates this construction. In the figure, the set $\UK$ therefore corresponds to the diagonal blocks 
in the shaded area of the buyer-seller array. Note how these blocks are non-overlapping, 
they partition the sets $[I]$ and $[J]$, and, within each block, the individual treatments of every buyer and seller are identical.

\begin{figure}[h]
    \centering
    \includegraphics[width=0.57\textwidth]{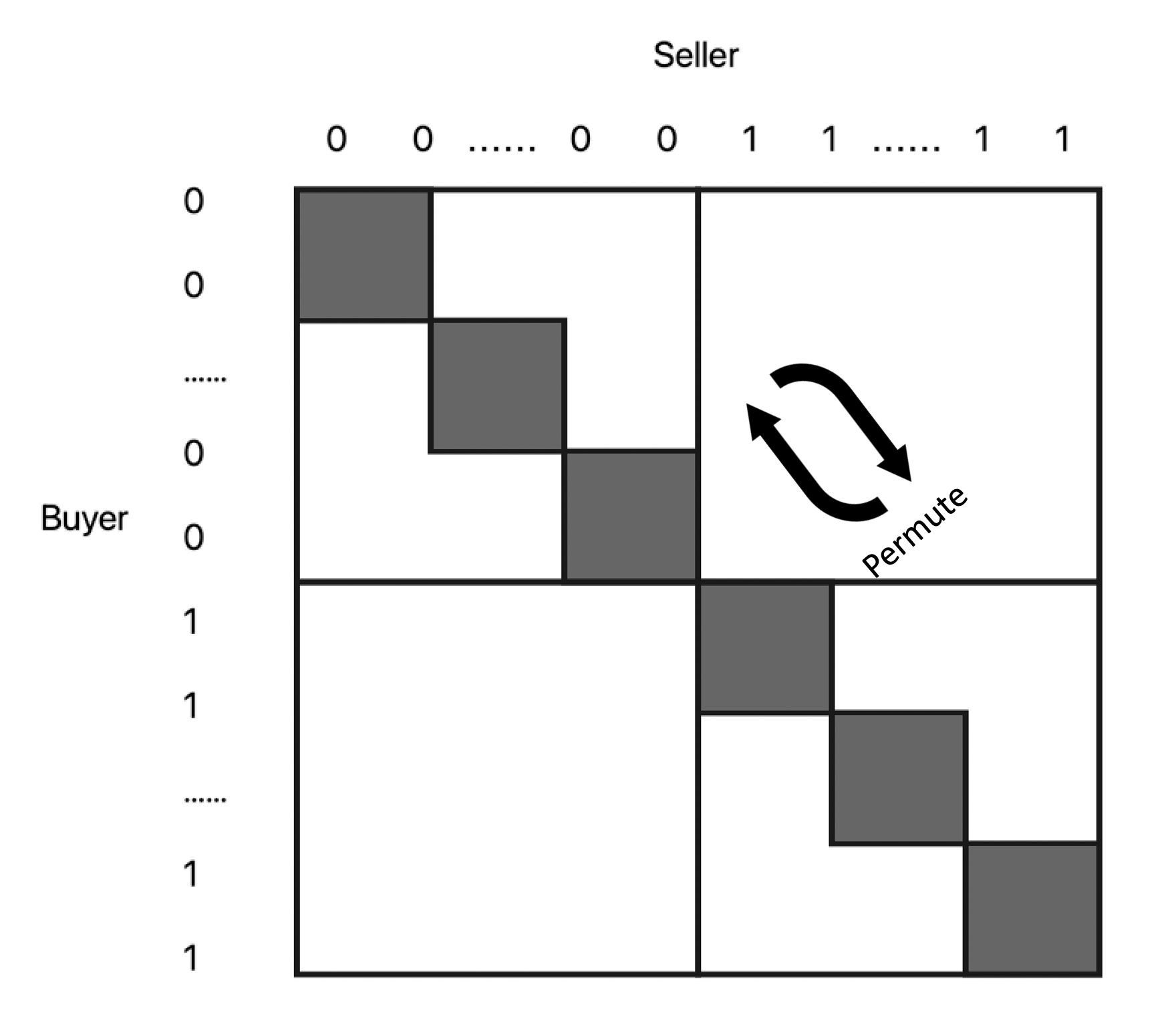}
    \caption{Graphical illustration of the conditioning event for Procedure~\ref{procedure:total}. The shaded area illustrates that the procedure permutes the treatment assignments across diagonal blocks.}
    \label{fig:total}
\end{figure}

\begin{definition}[$k$-Block Conditioning Event]\label{def:k-block}
Given treatment assignment $\Wobs$, a $k$-block conditioning event 
is defined as $\CK = (\UK, \WK)$, 
where $\UK$ is defined in Equation~\eqref{eq:Uk} and $\mathcal{W}^{(k)}$ is 
a subset of treatment assignments defined as follows:
\begin{align*}
    \WK = \big \{ \W: w_i^B  =  w_j^S \text{ for all } (i,j)\in \UK  \big \} \cap \mathcal{M}~,
\end{align*}
where $\mathcal{M}$ is defined in Equation~\eqref{eq:cond_buyer} as the set of assignments that preserve the treated fraction in each population.
\end{definition}

We are now ready to define our main randomization procedure for testing the total null hypothesis.

\begin{procedure}[Test Total Effect $\Htotal$]\label{procedure:total}
Let $\CK = (\UK, \WK)$ denote the $k$-conditioning event from Definition~\ref{def:k-block}.
\begin{enumerate}
    \item Compute test statistics $T(\W^{obs}\mid \mathbf{Y}^{obs}, \CK)$ under the observed treatment assignment. For example, we can use the difference-in-means:
    \begin{equation*}
        T(\W\mid \mathbf{Y}, \CK) = \frac{1}{n_{1,k}} \sum_{(i,j)\in\UK} Y_{i,j} w^{S}_j w^{B}_i  - \frac{1}{n_{0,k}}  \sum_{(i,j)\in\UK} Y_{i,j} (1-w^{S}_j) (1- w^{B}_i) ~,
    \end{equation*}
    where $n_{1,k} =  \sum_{(i,j)\in\UK} w^{S}_j w^{B}_i  , \quad n_{0,k} =  \sum_{(i,j)\in\UK} (1-w^{S}_j) (1-w^{B}_i) .$
    \item For $l=1,2,\dots, L$, obtain $\W^{(l)} = w^{B,(l)} (w^{S, (l)})^\top$ where $w^{B,(l)},  w^{S,(l)}$ are block-wise random permutation of $w^{B,obs},  w^{S,obs}$ given by
    \begin{align*}
        w^{B,(l)}
        &:= \Bigl\{ w^B \in \mathbb{W}^B \ \Big|\ 
        (w^B_i)_{i\in \mathcal{I}_{s}} = (w_i^{B,\mathrm{obs}})_{i\in \mathcal{I}_{\pi(s)}}
        \ \ \forall s\in\{1,\dots,\min( I/k , J/k  )\}
        \Bigr\}, \\
        w^{S,(l)}
        &:= \Bigl\{ w^S \in \mathbb{W}^S \ \Big|\ 
        (w^S_j)_{j\in \mathcal{J}_{s}} = (w_j^{S,\mathrm{obs}})_{j\in \mathcal{J}_{\pi(s)}}
        \ \ \forall s\in\{1,\dots,\min( I/k , J/k  )\}
        \Bigr\}.
    \end{align*}
    where $\pi$ is a permutation on $\{1,\dots,\min( I/k ,  J/k ) \}$. Compute the randomized statistic $T(\W^{(l)}\mid \mathbf{Y}^{obs}, \CK)$.
    \item Reject $\Htotal$, denoted by $\phi(\W^{obs};\CK)=1$, if
    \begin{equation*}
        \frac{1}{L+1}\left[1+\sum_{l=1}^{L} \mathbb{I}\left\{T(\W^{(l)}\mid \mathbf{Y}^{obs}, \CK) \geq T(\W^{obs}\mid \mathbf{Y}^{obs}, \CK) \right\} \right] \leq \alpha.
    \end{equation*}
\end{enumerate}
\end{procedure}

Intuitively, to test $\Htotal$, Procedure~\ref{procedure:total} conditions 
on a $k$-block conditioning event ---e.g., the $k$ diagonal blocks shown in Figure~\ref{fig:total}--- and permutes the block treatments $(0,0)$, $(1,1)$ at the block level. We note that the diagonal structure in the blocks is not essential and alternative constructions can be valid as long as the blocks are non-overlapping in neither the buyer nor the seller dimension.

The following result shows that Procedure~\ref{procedure:total} is valid 
under randomized designs that satisfy local interference and design symmetry.

\begin{theorem}\label{thm:validity2}
Consider an independent two-side ranomdized design (Definition \ref{def:multiple-randomization}) where Assumption \ref{ass:local-interference} holds. Let $\CK$ be a $k$-block conditioning event  as in Definition \ref{def:k-block}.
Suppose also that Assumption \ref{ass:exchangeable} holds. 
Then, $\phi(\W^{obs}; \CK)$ defined in Procedure  \ref{procedure:total} is finite-sample valid, such that $\mathbb{E}[ \phi(\Wobs; \CK) \mid \CK ] \le \alpha$ for any finite $I,J$.
\end{theorem}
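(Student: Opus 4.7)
The plan is to reduce the validity claim to Theorem 1 of \cite{Basse2019}, which requires verifying two ingredients for any conditional randomization test: (i) \emph{imputability}, meaning that for every $\W \in \WK$ and every focal pair $(i,j) \in \UK$, the potential outcome $Y_{i,j}(\W)$ can be determined from the observed data under $H_0^{total}$; and (ii) \emph{correct sampling}, meaning that Step 2 of Procedure \ref{procedure:total} draws from $P(\W \mid \CK)$. Once both are established, the $p$-value is stochastically dominated by a uniform random variable on $\{1/(L+1), \ldots, 1\}$, yielding $\mathbb{E}[\phi(\Wobs; \CK)] \le \alpha$.

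For ingredient (i), I would argue as follows. By construction of $\UK$ in Equation~\eqref{eq:Uk}, every focal pair satisfies $w_i^{B,\text{obs}} = w_j^{S,\text{obs}}$, so it is realized at exposure $(0,0)$ or $(1,1)$. Definition \ref{def:k-block} forces every $\W \in \WK$ to satisfy the same equality on $\UK$, hence each focal pair is always at exposure $(0,0)$ or $(1,1)$ across the randomizations. Under $H_0^{total}$, $Y_{i,j}(0,0) = Y_{i,j}(1,1)$, so the exposure-level potential outcomes coincide. Finally, since $\WK \subseteq \mathcal{M}$, the marginal treated fractions $p^B$ and $p^S$ are preserved, and by the footnote to Equation~\eqref{eqn:potential-outcomes} following from Assumption \ref{ass:local-interference}, the potential outcomes depend only on $(w_i^B, w_j^S)$. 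Therefore $Y_{i,j}(\W) = Y_{i,j}^{\text{obs}}$ for all $(i,j) \in \UK$ and all $\W \in \WK$, so the test statistic is fully computable under every randomized assignment.

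For ingredient (ii), the strategy is to exhibit a group action whose orbit through $\Wobs$ coincides with the support of $P(\W \mid \CK)$ and under which the conditional distribution is uniform, following the equivariance machinery of \cite{basse2024} that I would adapt to the two-sided setting. Specifically, a permutation $\pi$ on $\{1,\ldots,\min(I/k, J/k)\}$ acts simultaneously on the buyer blocks $\{\mathcal{I}_s\}$ and seller blocks $\{\mathcal{J}_s\}$ indexing the diagonal focal set in $\UK$; because all units within a block share the same treatment, this action is well-defined and preserves both (a) the pair-wise equality constraint $w_i^B = w_j^S$ on $\UK$ and (b) the marginal totals defining $\mathcal{M}$, so it maps $\WK$ into itself and fixes $\UK$ as a set. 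Under Assumption \ref{ass:exchangeable}, the product design $p(\W) = p^B(w^B) p^S(w^S)$ is invariant under within-side permutations, so the induced distribution on the orbit through $\Wobs$ is uniform; this is exactly what the block-wise permutation in Step 2 samples from. Combining the imputability and correct sampling, validity follows directly.

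The main obstacle will be ingredient (ii): one must verify carefully that the joint block permutation indexed by a single $\pi$ (rather than independent permutations on the two sides) generates the entire conditional support, and that the remaining non-diagonal blocks (the ones with $s > \min(I/k, J/k)$) can be held fixed without shrinking the orbit below $\WK$. I expect the bookkeeping to mirror the extension discussed in Remark \ref{remark:proof-explain} for the spillover case, but to require additional care because permutations here must act consistently on both sides in order to preserve $\UK$ and the $w_i^B = w_j^S$ constraint in Definition \ref{def:k-block}.
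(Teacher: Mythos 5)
Your ingredient (i) is exactly the paper's first step: the appendix verifies imputability by noting that $w_i^B = w_j^S$ on $\UK$ for every $\W \in \WK$, so each focal pair sits at exposure $(0,0)$ or $(1,1)$, and $H_0^{total}$ together with Assumption \ref{ass:local-interference} (and the preservation of treated fractions via $\mathcal{M}$) lets you replace $\mathbf{Y}(\W)$ by $\mathbf{Y}(\W')$ in the statistic. That part of your argument matches the paper.

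The issue is in how you have framed ingredient (ii). You require that the orbit of the joint block permutation through $\Wobs$ \emph{coincide with the support of} $P(\W \mid \CK)$ and that the conditional law be uniform on it, and you correctly identify this as the main obstacle---but it is an obstacle you have created for yourself, and in general it cannot be overcome: $\WK$ contains assignments that alter the treatments of non-focal units (or reassign which buyers are treated outside the diagonal blocks) while preserving the marginals and the constraint $w_i^B = w_j^S$ on $\UK$, and none of these lie in the orbit of a single block permutation $\pi$ that fixes everything off the focal blocks. The paper does not need orbit-equals-support. It proves only the \emph{randomization hypothesis} $\pi^{block}\W \mid \CK \stackrel{d}{=} \W \mid \CK$: by Assumption \ref{ass:exchangeable}, $(\pi^{block} w^B, \pi^{block} w^S) \stackrel{d}{=} (w^B, w^S)$, and since the conditioning event $\{w^B_i = w^S_j \text{ on } \UK\} \cap \mathcal{M}$ is itself invariant under $\pi^{block}$, the equality in distribution survives conditioning. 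Validity then follows from the elementary averaging argument $\sum_{l} E[\phi(\W^{(l)};\CK)\mid\CK] = L\,E[\phi(\Wobs;\CK)\mid\CK] \le \alpha L$, which holds for any group of transformations under which the conditional law is invariant, regardless of whether the group orbit exhausts the conditional support. So your proof goes through once you drop the orbit-coverage requirement and replace ``Step 2 samples from $P(\W\mid\CK)$'' with the weaker, and correct, invariance statement.
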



In the case of non-symmetric designs where Assumption~\ref{ass:exchangeable} does not hold, we can still construct valid randomization tests, albeit in a more complex form.
Specifically, suppose without loss of generality that $I =J$ and the $k$-partitioning on two populations are identical, i.e. 
$\IK = \JK$. Then, the permutation in step 2 of Procedure~\ref{procedure:total} with sampling from the conditional
    \begin{equation}\label{eq:conditional}
        w^{B, (l)} \sim p^B\left(w \mid \sum_{i=1}^I w_i^B = \sum_{i=1}^I w_i^{B, obs}, w_i^B=w_j^B \text{ for all } i,j \in \mathcal{I}_s \text{ for all } 1\leq s \leq I/k\right).
    \end{equation}
    and setting  $\W^{(l)} = w^{B, (l)} (w^{B, (l)})^\top$. 
Despite its simplicity, this approach may be complex to implement since sampling from the conditional~\eqref{eq:conditional} could be computationally challenging under arbitrary non-symmetric designs.
    


\subsubsection{Selecting $k$ Based on Power Considerations}\label{sec:select-k}

Theorem \ref{thm:validity2} establishes that Procedure \ref{procedure:total} is valid for any block size \(k\). However, the statistical power 
of the procedure likely depends on a complex trade-off relating to the value of $k$. 
A smaller value of $k$ leads to using more blocks and thus is beneficial for the test's power since it increases the support of the randomization distribution. 
However, it also leads to smaller block sizes, and thus it may decrease power due to the reduced sample size. Conversely, a higher value of $k$ leads to the reverse effect. 

To illustrate further, consider a completely randomized two-sided design following Example \ref{example:crd}, with \(I_1\) and \(J_1\) treated rows and columns, respectively. 
For the sake of simplicity, let's assume the observed data matrix is square with half of the rows and columns assigned to the treatment group, such that \(I = J = 2n\) and \(I_1 = I_0 = J_1 = J_0 = n\), with \(n/k\) being an integer. Under these conditions, the total number of unique treatment assignments ---i.e., the support of the randomization distribution--- in a $k$-block conditioning event equals \(C(2n/k, n/k)=O(2^{2n/k}/\sqrt{2n/k})\), where \(C(\cdot, \cdot)\) represents the binomial coefficient. Moreover,  the sample size of focal units is $|\UK| = 2nk$. Thus, as $k$ increases, the sample size increases but the support of the randomization distribution decreases, which 
have opposing effects on the test power. Conversely, as $k$ decreases, 
the total sample size decreases but the support of the randomization distribution increases.



 


To navigate this trade-off, we can leverage the results of \cite{puelz2021}, who 
established a lower bound for the power of general conditional randomization tests under some plausible assumptions, which we detail in Appendix \ref{app:details-power}. Let \(\phi_{n,k}\) denote the power of a conditional randomization test in our setting. The following proposition provides the bound under complete randomization, as in Example \ref{example:crd}:
\begin{proposition}\label{prop:power}
Under the assumptions of Theorem 3 (Appendix \ref{app:details-power}) in \cite{puelz2021} and 
the complete randomization design of Example \ref{example:crd}, the power of Procedure~\ref{procedure:total} satisfies:
\begin{equation*}
\phi_{|\mathcal{U}|,|\mathcal{W}|} \geq \frac{1}{1 + A e^{-a \tau \sqrt{2nk}}} - O\left(C(2n/k, n/k)^{-0.5 + \delta}\right) - \epsilon,
\end{equation*}
for any small \(\delta\) and sufficiently large \(|\mathcal{W}|\), where \(A, a > 0\) are constants.
\end{proposition}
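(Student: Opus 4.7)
The plan is to invoke Theorem~3 of \citet{puelz2021}, which provides a generic lower bound on the power of conditional randomization tests in terms of two key quantities: the effective sample size $|\mathcal{U}|$ of the focal units and the cardinality $|\mathcal{W}|$ of the conditional support of the randomization distribution. The bound has the generic form
\begin{equation*}
\phi_{|\mathcal{U}|,|\mathcal{W}|} \;\ge\; \frac{1}{1 + A\, e^{-a\tau \sqrt{|\mathcal{W}|}}} \;-\; O\bigl(|\mathcal{U}|^{-0.5 + \delta}\bigr) \;-\; \epsilon,
\end{equation*}
valid for any small $\delta>0$ and for $|\mathcal{W}|$ sufficiently large, where $A,a>0$ are absolute constants arising in the concentration argument of \citet{puelz2021}. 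Given this, the proof reduces to computing $|\mathcal{U}|$ and $|\mathcal{W}|$ explicitly for our $k$-block conditioning event under the square, balanced complete randomization design of Example~\ref{example:crd}.

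First I would compute $|\mathcal{U}|$. With $I=J=2n$, $I_1=J_1=n$, and with each of the $2n$ buyers (resp.\ sellers) partitioned into $2n/k$ blocks of size $k$ such that $n/k$ blocks consist entirely of treated units and $n/k$ of control units, the diagonal pairing defined in \eqref{eq:Uk} yields $2n/k$ blocks of focal units, each of size $k\times k$. Summing, $|\UK|=(2n/k)\cdot k^2 = 2nk$. Second, I would compute $|\WK|$. Since Procedure~\ref{procedure:total} permutes the block labels across the $2n/k$ diagonal blocks while preserving that exactly $n/k$ of them are $(1,1)$-blocks and $n/k$ are $(0,0)$-blocks, the number of distinct assignments in the conditional support is $\binom{2n/k}{n/k}$. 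By Stirling's approximation this equals $O(2^{2n/k}/\sqrt{2n/k})$, matching the expression in the statement.

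Substituting $|\mathcal{U}|=2nk$ and $|\mathcal{W}|=C(2n/k,n/k)$ into the generic Puelz--Tchetgen--Tchetgen--Miratrix bound immediately produces the claimed inequality. Before substituting, I would verify that the hypotheses of Theorem~3 of \citet{puelz2021}---notably the existence of a well-defined test statistic with tail behavior permitting the concentration step, and the validity of the conditional randomization distribution---are satisfied in our setting. Validity of the distribution is exactly Theorem~\ref{thm:validity2} applied to $\CK$, so no new argument is needed there; the remaining regularity conditions are transplanted verbatim into the assumption statement, which is why Proposition~\ref{prop:power} is stated ``under the assumptions of Theorem~3 of \citet{puelz2021}.''

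The main obstacle, and the only nontrivial step, is the bookkeeping that translates their abstract conditioning-event framework into the block structure defined by \eqref{eqn:k-partition} and \eqref{eq:Uk}: one must check that our $\UK$ and $\WK$ play precisely the roles of their focal-unit set and conditional assignment space, and that the complete-randomization design combined with block-permutation exchangeability supplies the ingredient used in their concentration argument. Once this identification is made, the bound is immediate, and the two error terms $O((2nk)^{-0.5+\delta})$ and $\epsilon$ inherit directly from the corresponding terms in Theorem~3 of \citet{puelz2021}.
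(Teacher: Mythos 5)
Your proposal is correct and follows essentially the same route as the paper, which treats Proposition~\ref{prop:power} as a direct instantiation of Theorem~3 of \cite{puelz2021} after computing $|\UK| = (2n/k)\cdot k^2 = 2nk$ and $|\WK| = C(2n/k, n/k)$ for the balanced square design, exactly as you do. If anything, your write-up is more explicit than the paper's, which only records the transplanted assumptions (A.1)--(A.3) in Appendix~\ref{app:details-power} and states the two cardinalities in Section~\ref{sec:select-k} without spelling out the substitution.
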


\begin{figure}[h]
    \centering
    \includegraphics[width=0.6\textwidth]{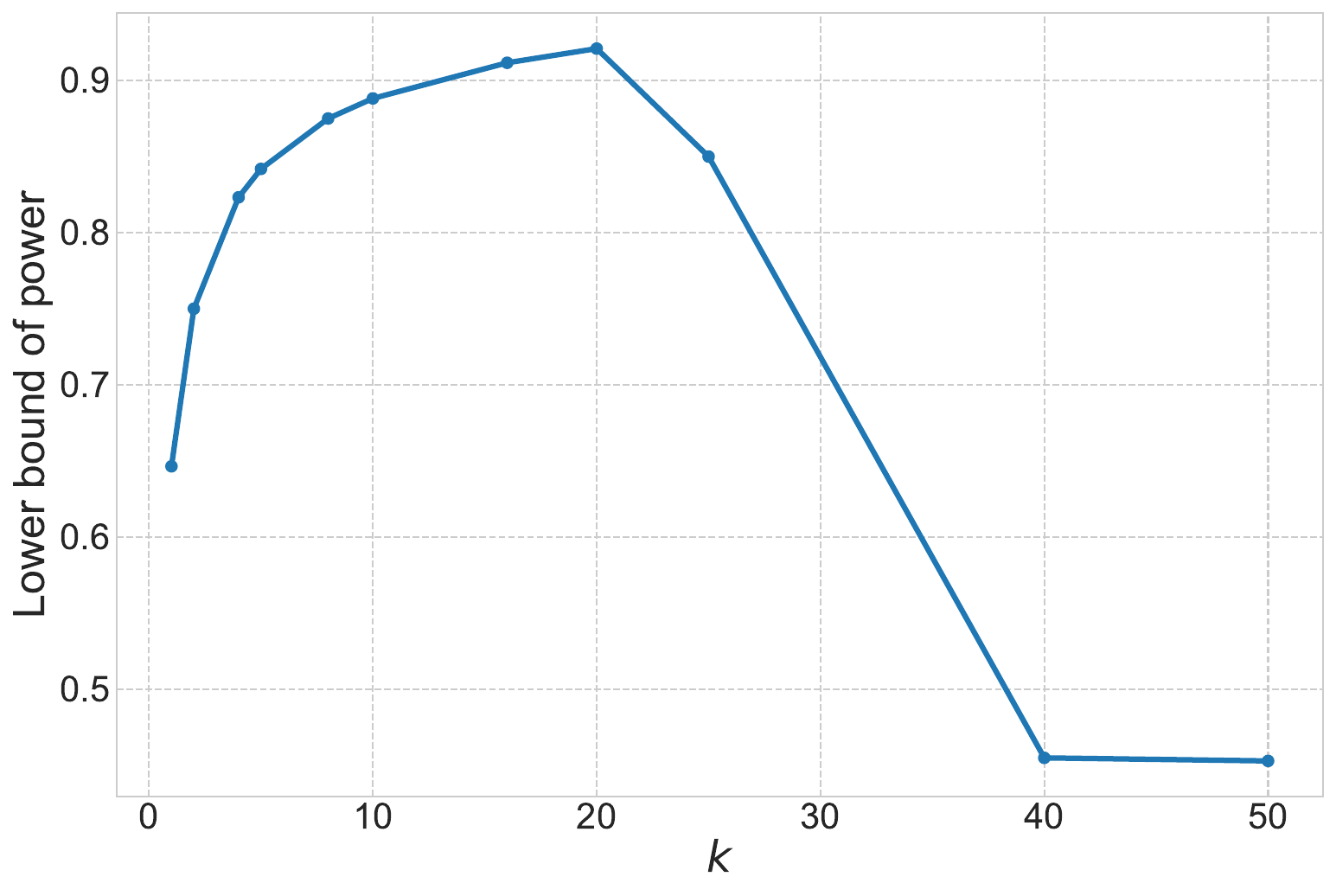}
    \caption{Graphical illustration of the lower bound on power for varying \( k \) with \( n = 400 \), \( A = 1 \), and \( a = \tau = 0.01 \).}
    \label{fig:power}
\end{figure}

In Figure \ref{fig:power}, we illustrate the lower bound on power as a function of the block size \( k \), highlighting the trade-off between sample size and the number of unique treatment assignments as \( k \) increases. While the plot identifies the optimal \( k \) that maximizes power, in practice, the parameters of the lower bound function are unknown. Consequently, the optimal \( k \) cannot be directly determined using Proposition \ref{prop:power}. However, a heuristic approach can be used to select the largest feasible \( k \) based on a predetermined maximum power threshold. Based on Proposition \ref{prop:power}, $O\left(C(2n/k, n/k)^{-0.5 + \delta}\right)$ controls the maximum power of the test, while $2nk$ determines how rapidly the power function reaches its maximum (sensitivity). Therefore, to optimize the detection of treatment effects, we recommend selecting the largest feasible block size $k$ based on a predetermined maximum power. For example, setting this power to 0.95 ensures the maximal utilization of observations to detect the minimal size of the treatment effect. In practice, we can choose $k$ to satisfy $C(2n/k,n/k)\approx\frac{1}{(1-\beta)^2}$, where $\beta$ is the predetermined power level. For example, consider the completely randomized two-sided design discussed before, where choosing a block size $k = n/6$ results in $|\mathcal{W}| = C(12,6) = 924$, achieving a maximum power approximately equal to 0.967.

\section{Randomization Tests for Weak Null Hypotheses}\label{sec:weak-null}
There has been a growing number of works on the study of randomization tests under weak null hypotheses; see, for example, \citep{chung2013,Ding2017,Canay2017,diciccio2017robust,DiCiccio&Romano2017,ZHAO2021278,Wu2021}. 
This literature has pointed out that randomization tests based on sharp nulls are not necessarily valid under the weak null. Nevertheless, in classic one-sided experiments, validity can often be restored by studentizing the test statistic. The most commonly used approach relies on a Neyman-style variance estimator, which coincides with the usual two-sample $t$-test variance.

In this section, we study use of conditional randomization tests with suitable test statistics and studentization for testing weak null hypotheses in two-sided randomization design. We document that the validity of these tests hinges on both the specific form of the null as well as the way in which studentization is carried out. 

Specifically, we consider two types of weak nulls. The first, which we call the \textit{population average spillover effect from buyers}, imposes a single mean-zero restriction on the average treatment effect after pooling over all buyers and sellers.
\begin{hypothesis}[Population Average Spillover Effect from Buyers]\label{hypothesis:weak-buyer-spillovers1}
\begin{equation}\label{eq:weak_null1}
H_0^{wb,1}: \frac{1}{IJ} \sum_{i=1}^{I}\sum_{j=1}^J Y_{i,j}(0,0) = \frac{1}{IJ} \sum_{i=1}^{I}\sum_{j=1}^J Y_{i,j}(1,0)~.
\end{equation}
\end{hypothesis}
The second, which we call the \textit{seller-specific average spillover effect from buyers}, imposes a collection of mean-zero restrictions, one for each seller, requiring that the buyer-level average treatment effect vanish seller by seller.
\begin{hypothesis}[Seller-specific Average Spillover Effect from Buyers]\label{hypothesis:weak-buyer-spillovers2}
\begin{equation}\label{eq:weak_null2}
H_0^{wb,2}: \frac{1}{I} \sum_{i=1}^{I} Y_{i,j}(0,0) = \frac{1}{I} \sum_{i=1}^{I} Y_{i,j}(1,0) \quad \text{for all } j~.
\end{equation}
\end{hypothesis}

We show that the natural test statistic, together with commonly used Neyman-style studentization, leads to a valid test for the seller-specific null, but not for the population null. Therefore, we propose a modified, two-way (buyer $\times$ seller) variance estimator that restores asymptotic validity for the population-average weak null by absorbing the additional sampling variation induced by seller randomization.




\subsection{Standard Neyman-style studentization}\label{subsec:weak-null-standard}

We begin by presenting the testing procedure under standard Neyman-style studentization:

\begin{procedure}[Test Weak Null via standard (buyer-side) studentization]\label{procedure:weak-spillover}
Let $\mC = (\mU, \mW)$ denote the conditioning event implied by the definitions in Equation~\eqref{eq:cond_buyer}.
\begin{enumerate}
    \item Compute the studentized test statistics $T^{WB}(\W^{obs}\mid \mathbf{Y}^{obs}, \mathcal{C})$ under the observed treatment assignment,
    \begin{equation*}
        T^{WB}(\W\mid \mathbf{Y}, \mathcal{C}) = \frac{T^{B}(\W\mid \mathbf{Y}, \mathcal{C})}{\sqrt{V^{B}(\W\mid \mathbf{Y}, \mathcal{C})} } ~,
    \end{equation*}
    where $T^{B}(\W \mid \mathbf{Y}, \mathcal{C})$ is identical to $T(\W \mid \mathbf{Y}, \mathcal{C})$ in Procedure \ref{procedure:spillover} and 
    \begin{equation*}
        V^{B}(\W\mid \mathbf{Y}, \mathcal{C}) = \frac{s_1^2}{I_1} + \frac{s_0^2}{I_0}
    \end{equation*}
    with $s_1^2 = \frac{1}{I_1-1} \sum_{i=1}^{I}w_i^B\left( \bar{Y}_i^B - \hat{Y}(1) \right)^2$ and $s_0^2=\frac{1}{I_0-1}\sum_{i=1}^{I}(1-w_i^B)\left( \bar{Y}_i^B - \hat{Y}(0) \right)^2$, and sample mean $\hat{Y}(z) =  \sum_{i:w_i^B=z} \bar{Y}_i^B/I_z$ and $I_z = \sum_{i=1}^I \mathbf{I}\{ w_i^B=z\}$ for $z \in \{0,1\}$, and $\bar{Y}_i^B = \frac{1}{J_0} \sum_{j=1}^J Y_{i,j} (1-w_j^S)~.$
    \item For \( l = 1, 2, \dots, L \), generate
    \( \W^{(l)} = w^{B,(l)} (w^{S, obs})^\top \) where \( w^{B,(l)} \) is a random permutation of \( w^{B, obs} \), and compute \( T^{WB}(\W^{(l)} \mid \mathbf{Y}^{obs}, \mathcal{C}) \).
    \item Reject $H_0^{wb}$ if
    \begin{equation*}
        \frac{1}{L+1}\left[ 1 +\sum_{l=1}^{L} \mathbb{I}\left\{T^{WB}(\W^{(l)}\mid \mathbf{Y}^{obs}, \mathcal{C}) \geq T^{WB}(\W^{obs}\mid \mathbf{Y}^{obs}, \mathcal{C}) \right\} \right] \leq \alpha.
    \end{equation*}
\end{enumerate}
\end{procedure}


The following theorem summarizes the asymptotic properties of Procedure~\ref{procedure:weak-spillover} 
under the two weak null hypotheses. Importantly, it establishes (i) asymptotic validity under the seller-specific weak null $H_0^{wb,2}$ in \eqref{eq:weak_null2}, and (ii) the failure of asymptotic validity in general under the population-average weak null $H_0^{wb,1}$ in \eqref{eq:weak_null1}.

\begin{theorem}\label{thm:validity-weak}
Consider an independent two-sided randomized design under complete randomization where Assumption \ref{ass:local-interference} and regularity conditions \ref{assump:bounded-fourth-moments}-\ref{assump:uniform-integrability} hold. 
For any level $\alpha \in (0,1)$, the testing procedure in Procedure \ref{procedure:weak-spillover} satisfies:
\begin{equation*}
    \lim_{I, J \rightarrow \infty} P\left\{E\left[ \mathbb{I}\left\{T^{WB}(\W\mid \mathbf{Y}^{obs}, \mathcal{C}) \geq T^{WB}(\W^{obs}\mid \mathbf{Y}^{obs}, \mathcal{C}) \right\} \right] \leq \alpha\right\} \leq \alpha~,
\end{equation*}
under the null hypothesis $H_0^{wb,2}$ in (\ref{eq:weak_null2}), where the expectation is with respect to $p(\W\mid \mathcal{C})$. Conversely, asymptotic validity does not hold in general under the null hypothesis $H_0^{wb,1}$ in (\ref{eq:weak_null1}).
\end{theorem}

The divergence in the asymptotic validity results between $H_0^{wb,1}$ and $H_0^{wb,2}$ can be understood through the standard comparison between the randomization and sampling distributions of the studentized test statistic. In the literature on randomization tests for weak null hypotheses, validity is typically established by showing that the randomization distribution---here, the conditional permutation distribution induced by $p(\W \mid \mathcal{C})$---asymptotically stochastically dominates the sampling distribution induced by the original design. A common route is to show that the randomization distribution converges to $N(0,1)$, while the sampling distribution converges to a centered normal distribution with variance no larger than 1.

In our setting, the key distinction between $H_0^{wb,1}$ and $H_0^{wb,2}$ is a mismatch between two average treatment effects. The conditional sampling distribution is centered at the \textit{focal average effect}:
\begin{equation*}
    \tau(w^S) = \frac{1}{I J_0} \sum_{i=1}^{I} \sum_{j=1}^{J} \{ Y_{ij}(1,0) - Y_{ij}(0,0) \} (1-w^S_j),
\end{equation*}
whereas the weak null hypothesis $H_0^{wb,1}$ in \eqref{eq:weak_null1} concerns the \textit{global average effect}, $\tau$, defined as:
\begin{equation*}
    \tau = \frac{1}{I J} \sum_{i=1}^{I} \sum_{j=1}^{J} \{ Y_{ij}(1,0) - Y_{ij}(0,0) \}.
\end{equation*}
Under $H_0^{wb,2}$, both quantities vanish by construction, so the centering mismatch disappears: $\tau(w^S)=\tau=0$. Under $H_0^{wb,1}$, however, only the global restriction $\tau=0$ is imposed, and $\tau(w^S)$ remains random. In particular, it can be shown that $|\tau(w^S)-\tau|=O_p(1/\sqrt{J})$. Consequently, under $I\asymp J$, the studentized statistic admits the decomposition
\begin{align*}
    \frac{T^B - \tau}{\sqrt{V^B}} = \underbrace{\frac{\tau(w^S) - \tau}{\sqrt{V^B}}}_{A_N} + \underbrace{\frac{T^B - \tau(w^S)}{\sqrt{V^B}}}_{B_N},
\end{align*}
where $T^B := T^{B}(\mathbf{W} \mid \mathbf{Y}, \mathcal{C})$ and $V^B := V^{B}(\mathbf{W} \mid \mathbf{Y}, \mathcal{C})$. As we show in Appendix \ref{app:clt-neyman}, both terms converge to normal limits:
\[
A_N \xrightarrow{d} N(0, \sigma_A^2), \qquad
B_N \xrightarrow{d} N(0, \sigma_B^2),
\]
and hence
\begin{equation*}
    \frac{T^B - \tau}{\sqrt{V^B}} \xrightarrow{d} N(0, \sigma_A^2 + \sigma_B^2).
\end{equation*}
The sampling distribution of $B_N$ is stochastically dominated by $N(0,1)$, whereas $A_N$ reflects additional variation induced by seller randomization. Under $H_0^{wb,2}$, the term $A_N$ vanishes, and the usual stochastic-dominance argument goes through. Under $H_0^{wb,1}$, by contrast, $A_N$ contributes non-negligibly to the sampling distribution, so the total variance $\sigma_A^2+\sigma_B^2$ need not be bounded by 1, the asymptotic variance of the randomization distribution. This creates the possibility of over-rejection under $H_0^{wb,1}$. Figure \ref{fig:t-stats-weak-null} illustrates the limit distributions of the relevant terms in the Neyman-style test statistic under the population-average weak null.

\begin{figure}[h]
    \centering
    \includegraphics[width=0.9\linewidth]{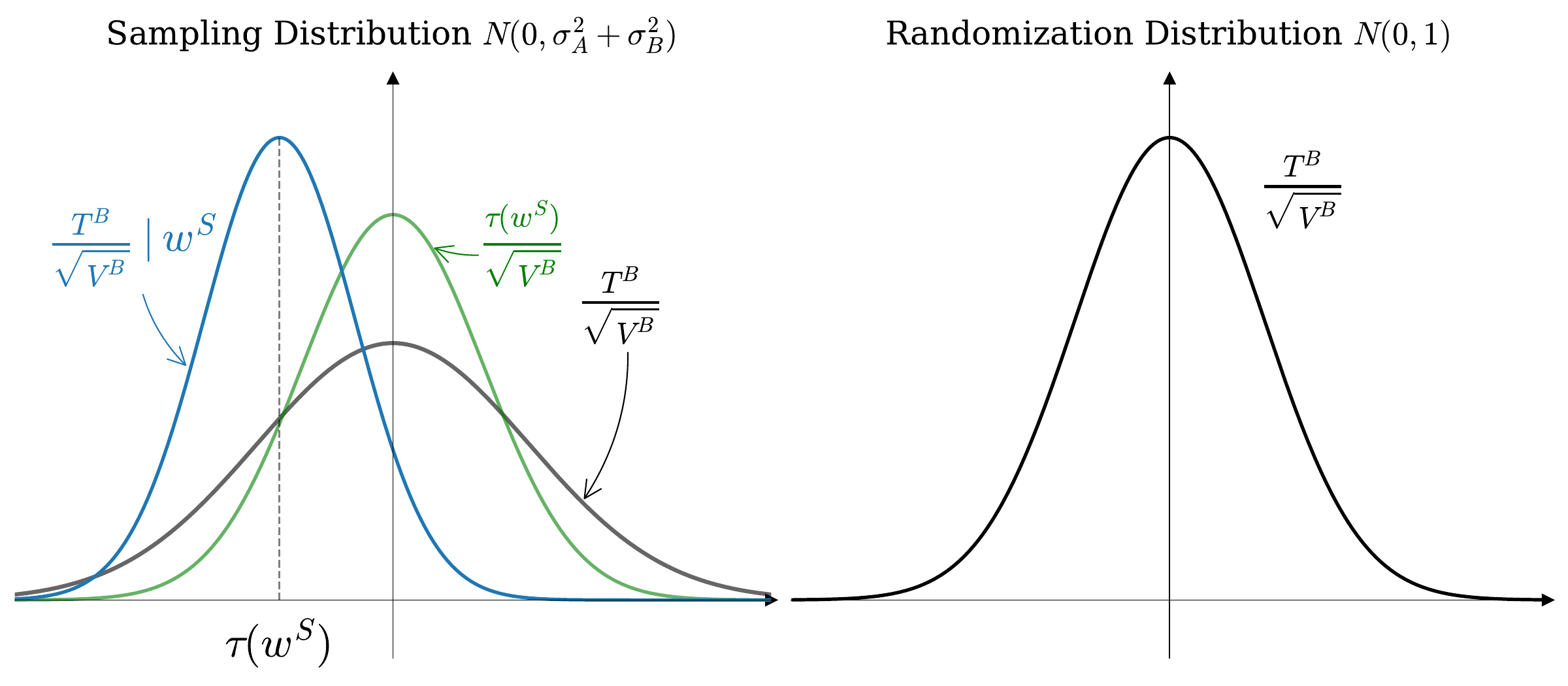}
    \caption{Limit distribution of Neyman-style studentized statistic under $H_0^{wb,1}: \tau = 0$}
    \label{fig:t-stats-weak-null}
\end{figure}

\subsection{Two-way studentization}\label{subsec:weak-null-twoway}
To restore validity under $H_0^{wb,1}$, we augment the standard Neyman variance estimator by an additional term that estimates the sampling variance of the seller-specific mean effect. The resulting variance estimator resembles a design-based version of a two-way (buyer $\times$ seller) cluster-robust variance: the original $V^B$ captures buyer-randomization variation, while the new add-on captures seller-randomization variation.

For each control seller $j$ with $w_j^S=0$, define the within-seller difference-in-means across buyers,
\begin{equation}\label{eq:mu-hat-delta}
    \hat{\mu}^{\Delta}_j(\W\mid \mathbf{Y},\mathcal{C})
    :=
    \frac{1}{I_{1}} \sum_{i=1}^{I} w_i^B \, Y_{i,j}
    -
    \frac{1}{I_{0}} \sum_{i=1}^{I} (1-w_i^B)\, Y_{i,j},
    \qquad (j: w_j^S=0),
\end{equation}
and let $\bar{\hat{\mu}}^{\Delta} := J_0^{-1}\sum_{j:w_j^S=0}\hat{\mu}^{\Delta}_j$. Define the sample variance across control sellers, $s^2_{\hat{\mu}^{\Delta}}(\W\mid \mathbf{Y},\mathcal{C}):= \frac{1}{J_0-1}\sum_{j:w_j^S=0}\left(\hat{\mu}^{\Delta}_j-\bar{\hat{\mu}}^{\Delta}\right)^2$,
and the seller-side add-on
\begin{equation}\label{eq:VS}
    V^{S}(\W\mid \mathbf{Y},\mathcal{C})
    :=
    \left(1-\frac{J_0}{J}\right)\frac{s^2_{\hat{\mu}^{\Delta}}(\W\mid \mathbf{Y},\mathcal{C})}{J_0}.
\end{equation}
Finally define the two-way variance estimator and the corresponding studentized statistic:
\begin{equation}\label{eq:VTW-and-stat}
    \begin{aligned}
        V^{TW}(\W\mid \mathbf{Y},\mathcal{C})
        &:=
        V^{B}(\W\mid \mathbf{Y},\mathcal{C}) + V^{S}(\W\mid \mathbf{Y},\mathcal{C}),\\
        T^{WB,TW}(\W\mid \mathbf{Y},\mathcal{C})
        &:=
        \frac{T^{B}(\W\mid \mathbf{Y},\mathcal{C})}{\sqrt{V^{TW}(\W\mid \mathbf{Y},\mathcal{C})}}.
    \end{aligned}
\end{equation}

To test the population-average weak null $H_0^{wb,1}$, we implement the same permutation procedure as in Procedure~\ref{procedure:weak-spillover}, but replacing $T^{WB}$ with $T^{WB,TW}$ in Steps (i)--(iii). The next theorem states that this two-way studentization restores asymptotic validity under the weak null $H_0^{wb,1}$.

\begin{theorem}[Asymptotic validity under the global weak null via two-way studentization]\label{thm:validity-weak-twoway}
Consider an independent two-sided randomized design under complete randomization where Assumption \ref{ass:local-interference} and Assumptions \ref{assump:bounded-fourth-moments}--\ref{assump:uniform-integrability} hold. Consider the permutation test obtained from Procedure \ref{procedure:weak-spillover} by replacing $T^{WB}$ with $T^{WB,TW}$ in \eqref{eq:VTW-and-stat}. Then for any level $\alpha\in(0,1)$,
\begin{equation*}
    \lim_{I, J \rightarrow \infty} 
    P\left\{
    E\left[ 
    \mathbb{I}\left\{
    T^{WB,TW}(\W\mid \mathbf{Y}^{obs}, \mathcal{C}) \geq T^{WB,TW}(\W^{obs}\mid \mathbf{Y}^{obs}, \mathcal{C})
    \right\} 
    \right] 
    \leq \alpha
    \right\}
    \leq \alpha
\end{equation*}
under the weak null hypothesis $H_0^{wb,1}$ in \eqref{eq:weak_null1}, where the expectation is with respect to $p(\W\mid \mathcal{C})$.
\end{theorem}

\begin{remark}\label{remark:tw-negligible}
The seller-side add-on $V^{S}$ in \eqref{eq:VS} is constructed to estimate the sampling variability of the focal mean effect $\tau(w^S)$ induced by seller randomization. This component is non-negligible under the sampling distribution because $\hat{\mu}_j^\Delta$ estimates a seller-level average treatment effect and the cross-seller variation of these effects potentially remains $O(1)$ as $J\to\infty$.

In contrast, under the randomization distribution induced by permuting buyer labels (holding observed outcomes fixed), each $\hat{\mu}_j^\Delta$ in \eqref{eq:mu-hat-delta} is a difference-in-means computed from a random split of a fixed finite population of buyer outcomes at seller $j$, and therefore typically fluctuates at the $I^{-1/2}$ scale. Consequently, $s^2_{\hat{\mu}^{\Delta}}=O_p(I^{-1})$ under permutations, so $V^{S}=\left(1-\frac{J_0}{J}\right)\frac{s^2_{\hat{\mu}^{\Delta}}}{J_0}=O_p\!\left(\frac{1}{IJ}\right)$, while $V^{B}=O_p(I^{-1})$. Hence $V^{S}/V^{B}=O_p(J^{-1})\to 0$ in the randomization world. This is precisely why $V^{S}$ inflates the sampling denominator—thereby reducing the variance below one—yet is asymptotically negligible for the randomization distribution, which remains $N(0,1)$.
\end{remark}

\section{Simulations}\label{sec:simulation}

\subsection{Sharp null hypotheses}\label{subsec:sim-sharp}

In this subsection, we examine the finite-sample behavior of our randomization tests for both total and spillover effects under the sharp null hypotheses. Suppose that there are $3n$ units in each population, i.e. $I=J=3n$, and the design is a completely randomized two-sided design with $n$ units assigned to treatment group in both populations. We generate the potential outcomes as follows:
\begin{equation*}
    Y_{i j}(w, h) \stackrel{\text { ind }}{\sim} \begin{cases}F_0(\cdot) & \text { if } w=h=0, \\ F_0(\cdot)+F_{B}(\cdot) & \text { if } w=1, h=0, \\ F_0(\cdot)+F_{S}(\cdot) & \text { if } w=0, h=1, \\ F_0(\cdot)+F_{1}(\cdot) & \text { if } w=h=1,\end{cases}
\end{equation*}
where $F_\ell$ are normal distributions such that $F_{\ell}(\cdot)=\mathcal{N}\left(\mu_{\ell}, \sigma_{\ell}^2\right)$ for $\ell \in \{0,B,S,1\}$. Under the sharp null hypothesis, the parameters are set as follows: $\mu_0 = \mu_S = \mu_B = \mu_1 = 0$, $\sigma_0= 0.2$ and $\sigma_B=\sigma_S = \sigma_1 = 0$. 
Under the alternative hypotheses, we set the parameters in a similar way but set the spillover effect to be 0.01 and total effects to be 0.02, i.e. $\mu_B = 0.01, \mu_1 = 0.02$.

\begin{table}[ht!]
\centering
\setlength{\tabcolsep}{3pt}
\begin{adjustbox}{max width=0.9\linewidth,center}
\begin{tabular}{ccccccccccc}
\toprule
 &      &  \multicolumn{3}{c}{Under $H_0$} & & \multicolumn{3}{c}{Under $H_1$} \\ \cmidrule{3-5} \cmidrule{7-9}
Procedure & $n$ & {FRT} & {Neymanian} & {FRT adjusted}  &  &  {FRT} & {Neymanian} & {FRT adjusted}    \\ \midrule
&10 & 5.02  & 0.68  & 4.78   &  & 8.88  & 3.84  & 8.84  \\
Test Buyer
&20 & 4.82  & 3.04  & 5.26   &  & 19.96  & 11.92  & 20.46  \\
Spillover Effects
&30 & 5.00  & 3.66  & 4.90   &  & 42.42  & 28.24  & 41.94  \\
(Procedure \ref{procedure:spillover})
&40 & 4.78  & 2.86  & 5.02   &  & 58.50  & 52.40  & 57.76  \\
&50 & 4.72  & 2.42  & 4.62   &  & 81.42  & 73.66  & 81.26  \\
&100 & 4.74  & 3.28  & 4.84  &  & 100.00  & 100.00  & 100.00  \\
\\
&10 & 4.78  & 1.82  & 5.02   &  & 5.64  & 3.66  & 5.50  \\ 
Test
&20 & 4.84  & 1.24  & 4.94   &  & 10.16  & 23.32  & 10.26  \\
Total Effects
&30 & 4.84  & 0.92  & 4.70   &  & 18.92  & 57.52  & 17.44  \\
(Procedure \ref{procedure:total})
&40 & 4.96  & 1.60  & 5.12   &  & 30.58  & 86.56  & 30.04  \\
&50 & 5.06  & 1.56  & 4.82   &  & 42.58  & 97.90  & 40.34  \\
&100 & 5.24  & 1.28  & 4.92  &  & 95.38  & 100.00  & 93.82  \\
\bottomrule
\end{tabular}
\end{adjustbox}
\caption{Rejection probabilities under sharp null and alternative hypothesis}
\label{table:size-power}
\end{table}

Table \ref{table:size-power} displays the rejection probabilities under the sharp null and alternative hypotheses, computed from 5,000 Monte Carlo replications with the $p$-values approximated by 500 independent permutations of the treatment vector in each replication. ``FRT'' stands for the Fisher Randomization Test procedure based on sharp null hypotheses as in Procedure \ref{procedure:spillover} and \ref{procedure:total}. ``FRT adjusted'' stands for the Neyman-style studentized randomization tests as in Procedure \ref{procedure:weak-spillover}.\footnote{We choose not to present the two-way studentized test statistics in this subsection, because it is developed for the weak null concerning global average effects and only available for buyer spillover effects.} ``Neymanian'' stands for $t$-tests using conservative estimators of variances from \cite{imbens2021}. The block size for Procedure \ref{procedure:total} is set to be $k=\lfloor n/4 \rfloor$, resulting in $|\mathcal{W}|=C(12,4)=495$ and a maximum power approximately equal to 0.955. 

The results show that the rejection probabilities of both ``FRT'' and ``FRT adjusted'' are universally around the nominal 5\% level under the null hypothesis, which verifies the finite-sample exactness of our tests across all designs. Meanwhile, the Neymanian inference method is conservative as expected by \cite{imbens2021}. Under the alternative hypotheses, the rejection probabilities of our tests are higher than the Neymanian method when testing spillover effects, but lower when testing total effects. The power gain by Procedure \ref{procedure:spillover} likely comes from the test's exactness, whereas the power loss from Procedure \ref{procedure:total} is due to the loss in sample size used for calculating the block test statistic.

\begin{table}[ht!]
\centering
\setlength{\tabcolsep}{6pt}
\begin{adjustbox}{max width=0.9\linewidth,center}
\begin{tabular}{cccccccccccc}
\toprule
 &      &  \multicolumn{8}{c}{Block Size $k$}  \\ \cmidrule{3-10} 
Hypothesis & Method & 1 & 2 & 4  & 5  &  10 & 20 & 25 & 50    \\ \midrule
\multirow{2}{*}{$\Htotal$}
& FRT & 5.22  & 5.40  & 5.34  & 5.04  & 4.90  & 4.98  & 4.88  & 0.84  \\
& FRT adjusted & 5.24  & 4.84  & 4.88  & 4.86  & 5.26  & 4.98  & 5.02  & 1.08  \\
\\
\multirow{2}{*}{$H_1^{total}$}
& FRT & 13.20  & 19.56  & 35.90  & 43.46  & 69.40  & 91.06  & 95.10  & 6.98  \\
& FRT adjusted & 13.32  & 20.78  & 35.92  & 41.80  & 69.28  & 89.68  & 93.56  & 13.62  \\
\bottomrule
\end{tabular}
\end{adjustbox}
\caption{Rejection probabilities under different block sizes with $n=100$}
\label{table:size-power-total}
\end{table}

We further investigate robustness by examining the influence of block size on both the size and power of the testing procedures. Table~\ref{table:size-power-total} presents our analysis of rejection probabilities under the sharp null and alternative hypotheses, following the same Monte Carlo setup in Table \ref{table:size-power}. We observe that up to a point the statistical power of our tests increases monotonically with increasing block size. Specifically, when block size is set at \(k=25\), where the randomization space encompasses \(|\mathcal{W}| = C(12,4) = 495\), near-maximum power is achieved. However, a further increase in block size to \(k=50\) results in a diminished randomization space of \(|\mathcal{W}| = C(6,2) = 15\), leading to a notable decline in test power. These results are consistent with the pattern observed in the graphical illustration of the theoretical lower bound in Figure~\ref{fig:power}. 

\subsection{Weak null hypotheses for buyer spillover effects}\label{subsec:sim-weak-null}
In this subsection, we consider two weak null hypotheses for buyer spillover effects introduced in Section~\ref{subsec:weak-null-standard}: $H_0^{wb,1}$, which concerns the population average, and $H_0^{wb,2}$, which concerns the seller-specific average. Appendix~\ref{app:total-effect-weak} presents corresponding results for weak null hypotheses on total effects.

Notably, when treatment effects are i.i.d sampled as in Section \ref{subsec:sim-sharp}, the weak null hypotheses of seller-specific average $H_0^{wb,2}$ are satisfied by the data generating process (DGP). Because the DGP generates treatment effects in a dyadic i.i.d. fashion, the average effects over any sufficiently large subset of the population—including the focal set of untreated sellers—will converge to zero asymptotically. Therefore, we adopt the same DGP from the previous subsection and change the variance parameters as follows: $\sigma_0=0.2, \sigma_B=\sigma_1 = 0.4$ and $\sigma_S = 0$.

In contrast, to examine the population average spillover effect $H_0^{wb, 1}$, we consider a new DGP that satisfies $H_0^{wb,1}$ but violates $H_0^{wb,2}$. Specifically, let $\tilde{Y}_{ij}(0,0)$ and $\tilde{\Delta}_{ij}$ denote the baseline i.i.d.\ components generated as in the previous subsection: $\tilde{Y}_{ij}(0,0)\stackrel{\text{ind}}{\sim}\mathcal{N}(0,0.2^2)$, $\tilde{\Delta}_{ij}\stackrel{\text{ind}}{\sim}\mathcal{N}(0,0.4^2)$. Then, we define $Y_{ij}(1,0)=Y_{ij}(0,0)+\Delta_{ij}$ with 
\[
        Y_{ij}(0,0)=\tilde{Y}_{ij}(0,0)+\alpha_i,
        \qquad
        \Delta_{ij}=\tilde{\Delta}_{ij}+\beta_j,
    \]
where $\alpha_i$ and $\beta_j$ are independent fixed effects, i.e. $\alpha_i \stackrel{\text{ind}}{\sim} \mathcal{N}(0,0.1^2)$ and $\beta_j \stackrel{\text{ind}}{\sim} \mathcal{N}(0,0.4^2)$.\footnote{The remaining potential outcomes $Y_{ij}(0,1)$ and $Y_{ij}(1,1)$ are generated as in the previous subsection and play no role in the buyer-spillover test considered here.} This DGP violates $H_0^{wb,2}$ because seller-specific average treatment effects vary across $j$ through $\beta_j$, so $\tau(w^S)$ need not be close to zero for a realized seller focal set. However, the global average effect $\tau=0$ remains satisfied because both $\tilde{\Delta}_{ij}$ and $\beta_j$ have mean zero.

\begin{table}[ht!]
\centering
\setlength{\tabcolsep}{10pt}
\begin{adjustbox}{max width=0.9\linewidth,center}
\begin{tabular}{ccccc}
\toprule
Setup & $n$ & {FRT} & {FRT adjusted} & {FRT two-way}   \\ \midrule
&10  & 11.24 & 4.52 & 4.38 \\
I.I.D.\ outcomes
&20  & 9.70  & 3.98 & 4.08 \\
and treatment effects
&30  & 8.88  & 3.40 & 3.40 \\
(under $H_0^{wb,2}$)
&40  & 8.46  & 3.74 & 3.28 \\
&50  & 9.20  & 3.56 & 3.42 \\
&100 & 8.60  & 3.62 & 3.44 \\
\\
&10  & 11.04 & 9.52 & 4.04 \\ 
Fixed effects in 
&20  & 16.34 & 15.16 & 4.64 \\
both outcomes and
&30  & 16.20 & 15.10 & 4.76 \\
treatment effects
&40  & 21.62 & 20.24 & 4.90 \\
(under $H_0^{wb,1}$)
&50  & 19.52 & 17.54 & 5.04 \\
&100 & 22.18 & 21.68 & 5.46 \\
\bottomrule
\end{tabular}
\end{adjustbox}
\caption{Rejection probabilities under weak null hypotheses $H_0^{wb,1}$ and $H_0^{wb,2}$.}
\label{table:size-global-weak}
\end{table}

Table~\ref{table:size-global-weak} reports rejection probabilities at the nominal 5\% level based on 5,000 Monte Carlo replications, each using 500 independent permutations. The results in the first half of the table indicate that, under $H_0^{wb,2}$, the adjusted FRT using the Neyman-style studentized statistic successfully maintains the nominal level. So does the two-way adjusted FRT. In contrast, the standard FRT (without studentization) fails to provide valid inference in this setting. This corroborates our theoretical analysis: when seller-specific average effect is zero, the Neyman-style studentization restores the asymptotic validity of the randomization test. In contrast, the results in the second half of the table indicate that, the two-way studentized procedure (``FRT two-way'') controls size well, whereas the standard Neyman-style studentization (``FRT adjusted'') fails due to focal-average randomness induced by seller heterogeneity, as predicted by Theorem \ref{thm:validity-weak}-\ref{thm:validity-weak-twoway}.

\section{Data Application}\label{sec:empirical}

In this section, we illustrate our methodology using a dataset from a randomized field experiment conducted by \citet{Comola2021}, which offered access to formal savings accounts to a random sample of 915 households across 19 villages near Pokhara, Nepal. The treatment is defined at the household level as whether a household was offered a savings account. The primary outcome of interest is the change in a network link indicating whether one household lends to another. These links are measured using survey-based adjacency matrices.\footnote{To be more specific, the network link is constructed from survey questions such as ``Who would you ask for help in case of need?'' and ``Who did you ask for help?''} The matrices are transformed into semi row-standardized versions, $\mathbf{G}^{(t)} = \{G_{i,j}^{(t)}\}$, where each row sums to one for non-isolated households and to zero otherwise and $t \in \{0,1\}$ denotes the baseline and endline periods. The final outcome is defined as $Y_{i,j} = G_{i,j}^{(1)} - G_{i,j}^{(0)}$, representing the change in standardized network links between households $i$ and $j$. 

We follow the empirical question of interest in \citet{Comola2021} and test whether access to a savings account induces financial exchanges between households. In particular, we are interested in exploring heterogeneity of the treatment effect with respect to whether one of the households has experienced a negative shock (death or livestock loss). To formalize this, we construct a binary variable \( X_i \in \{0,1\} \), where \( X_i = 1 \) if household \( i \) experienced a death or livestock shock at baseline and \( X_i = 0 \) otherwise.\footnote{The original dataset includes only household‑pair covariates, but we recover household‑level covariates by aggregating information across all pairwise records involving each household.} This variable acts as a covariate because it is realized before treatment is assigned. The rationale behind this choice is that households experiencing shocks may be in greater financial need and are therefore interpreted as ``buyers'' of informal loans, while those without shocks are potential lenders (``sellers'').

This setting leads naturally to the same potential outcomes structure as in our two-sided experiment framework, with potential outcomes $Y_{i,j}(w_i^B, w_j^S)$, where \( w_i^B, w_j^S \in \{0,1\} \) denote the treatment assignments of households \( i \) and \( j \), respectively. The justification for this structure relies on a different version of the local interference assumption: the outcome for a given household pair depends only on the treatment assignments of the two households in that pair and not on the assignments of any other households in the network. 

The null hypotheses also retain the same structural form as before but carry different empirical interpretations. Specifically, we test three hypotheses corresponding to different treatment comparisons:
\begin{enumerate}
    \item \( H_0^{(1,0)}: Y_{i,j}(0,0) = Y_{i,j}(1,0) \) for all \( i,j \) with \( X_i = 1 \) and \( X_j = 0 \), which tests whether providing access to a savings account to the financially constrained household (buyer) affects its link with an untreated, less constrained household (seller).
    \item \( H_0^{(0,1)}: Y_{i,j}(0,0) = Y_{i,j}(0,1) \) for all \( i,j \) with \( X_i = 1 \) and \( X_j = 0 \), which tests whether treating the less constrained household (seller) alters its connection with an untreated, constrained household (buyer).
    \item \( H_0^{(1,1)}: Y_{i,j}(0,0) = Y_{i,j}(1,1) \) for all \( i,j \) with \( X_i = 1 \) and \( X_j = 0 \), which tests whether jointly treating both households changes their financial connection relative to the case where neither is treated.
\end{enumerate}

Although the original experiment was not designed as a two-sided market intervention, the setting naturally fits our framework. The outcome of interest is dyadic—measured at the household-pair level—so the experimental design can be interpreted as a two-sided randomization problem where the ``buyer'' and ``seller'' sides originate from the same population. In this setting, the outcome $\mathbf{Y}$ is a $915 \times 915$ matrix and the assignment vector is 915‑dimensional, with the buyer‑side and seller‑side assignments coinciding, i.e., $w^B = w^S$. In our analysis, we take a further step to partition households into ``buyers'' and ``sellers'' using baseline information on which households experienced shocks. This not only allows us to more directly apply our two-sided randomization framework but also reflects a meaningful economic distinction that enables analysis of treatment effect heterogeneity.

To test $H_0^{(1,0)}$ and $H_0^{(0,1)}$, we apply the one-sided permutation procedure described in Procedure~\ref{procedure:spillover}. For $\Htotal$, we leverage the fact that the network is censored as survey data record only within-village links. This allows us to use the block-wise permutation procedure in Procedure~\ref{procedure:total}, where each village defines a permutation block.
Table~\ref{table:empirical} reports $p$-values from our randomization-based tests alongside those from a standard two-sample $t$-test.\footnote{We omit the results for ``FRT two-way'' because they are more conservative than those for FRT adjusted'' and are therefore not statistically significant.} We include the $t$-test for comparison, noting that the Neymanian-style methods (\citealp{imbens2021}) are not directly applicable due to the censored nature of the data. In contrast, our approach remains valid by conditioning on the observed household pairs.

\begin{table}[ht!]
\centering
\vspace{20px}
\setlength{\tabcolsep}{15pt}
\begin{adjustbox}{max width=\linewidth,center}
\begin{tabular}{cccc}
\toprule
Hypothesis & FRT & FRT adjusted & Two-sample $t$-test    \\ \midrule
$Y_{i,j}(0,0) = Y_{i,j}(1,0)$ & 0.49 & 0.50 & 0.49\\ 
$Y_{i,j}(0,0) = Y_{i,j}(0,1)$ & 0.42 & 0.41 & 0.41\\ 
$Y_{i,j}(0,0) = Y_{i,j}(1,1)$ & 0.70 & 0.69 & 0.77  \\ 
\bottomrule
\end{tabular}
\end{adjustbox}
\caption{$p$-values on testing the treatment effects on financial network links.}
\label{table:empirical}
\end{table}

We fail to reject all three null hypotheses, indicating that providing savings accounts did not significantly affect households’ risk-sharing relationships. This finding complements \citet{Comola2021}, who document significant effects on network behavior in the full population, which suggests that such effects are not driven by households experiencing shocks. Consistent with prior evidence, the impacts of savings accounts appear highly context-dependent: some studies find limited usage among poor households \citep{Dupas2016}, while others report high take-up but no clear effects on aggregate expenditure, assets, or income \citep{PRINA2015}. Overall, these results suggest that the effectiveness of savings accounts depends critically on how households use them, and that access alone may be insufficient to change financial behavior.

\section{Conclusion}\label{sec:conclusion}
Motivated by recent advances in experimentation within online marketplaces, this paper develops randomization-based inference procedures for two-sided market experiments. Our proposed tests are finite-sample valid under sharp null hypotheses of no treatment effect, and asymptotically valid for weak null hypotheses on average treatment effects. Additionally, we offer practical guidance for test implementation based on power considerations.
Promising directions for future research include extending our framework to accommodate more complex interference structures beyond the cross-product (buyer-seller) interactions examined here. Another extension would incorporate explicit market-clearing mechanisms, such as pricing rules.

\newpage
\spacingset{1}
\bibliography{biblio}

\begin{thebibliography}{29}
\expandafter\ifx\csname natexlab\endcsname\relax\def\natexlab#1{#1}\fi
\expandafter\ifx\csname url\endcsname\relax
  \def\url#1{\texttt{#1}}\fi
\expandafter\ifx\csname urlprefix\endcsname\relax\def\urlprefix{URL }\fi
\providecommand{\eprint}[2][]{\url{#2}}

\bibitem[{Aronow(2012)}]{Aronow2012}
\textsc{Aronow, P.~M.} (2012).
\newblock A general method for detecting interference between units in randomized experiments.
\newblock \textit{Sociological Methods \& Research}, \textbf{41} 3--16.

\bibitem[{Athey et~al.(2018)Athey, Eckles and Imbens}]{Athey2018}
\textsc{Athey, S.}, \textsc{Eckles, D.} and \textsc{Imbens, G.~W.} (2018).
\newblock Exact p-values for network interference.
\newblock \textit{Journal of the American Statistical Association}, \textbf{113} 230--240.

\bibitem[{Athey and Luca(2019)}]{Athey2019}
\textsc{Athey, S.} and \textsc{Luca, M.} (2019).
\newblock Economists (and economics) in tech companies.
\newblock \textit{Journal of Economic Perspectives}, \textbf{33} 209–30.

\bibitem[{Bajari et~al.(2023)Bajari, Burdick, Imbens, Masoero, McQueen, Richardson and Rosen}]{bajari2023}
\textsc{Bajari, P.}, \textsc{Burdick, B.}, \textsc{Imbens, G.~W.}, \textsc{Masoero, L.}, \textsc{McQueen, J.}, \textsc{Richardson, T.~S.} and \textsc{Rosen, I.~M.} (2023).
\newblock {Experimental Design in Marketplaces}.
\newblock \textit{Statistical Science}, \textbf{38} 458 -- 476.

\bibitem[{Basse et~al.(2024)Basse, Ding, Feller and Toulis}]{basse2024}
\textsc{Basse, G.}, \textsc{Ding, P.}, \textsc{Feller, A.} and \textsc{Toulis, P.} (2024).
\newblock Randomization tests for peer effects in group formation experiments.
\newblock \textit{Econometrica}, \textbf{92} 567--590.

\bibitem[{Basse et~al.(2019)Basse, Feller and Toulis}]{Basse2019}
\textsc{Basse, G.~W.}, \textsc{Feller, A.} and \textsc{Toulis, P.} (2019).
\newblock {Randomization tests of causal effects under interference}.
\newblock \textit{Biometrika}, \textbf{106} 487--494.

\bibitem[{Blake and Coey(2014)}]{whymarketplace}
\textsc{Blake, T.} and \textsc{Coey, D.} (2014).
\newblock Why marketplace experimentation is harder than it seems: the role of test-control interference.
\newblock In \textit{Proceedings of the Fifteenth ACM Conference on Economics and Computation}. EC '14, Association for Computing Machinery, New York, NY, USA, 567–582.
\newblock \urlprefix\url{https://doi.org/10.1145/2600057.2602837}.

\bibitem[{Canay et~al.(2017)Canay, Romano and Shaikh}]{Canay2017}
\textsc{Canay, I.~A.}, \textsc{Romano, J.~P.} and \textsc{Shaikh, A.~M.} (2017).
\newblock Randomization tests under an approximate symmetry assumption.
\newblock \textit{Econometrica}, \textbf{85} 1013--1030.

\bibitem[{Chung and Romano(2013)}]{chung2013}
\textsc{Chung, E.} and \textsc{Romano, J.~P.} (2013).
\newblock {Exact and asymptotically robust permutation tests}.
\newblock \textit{The Annals of Statistics}, \textbf{41} 484 -- 507.

\bibitem[{Comola and Prina(2021)}]{Comola2021}
\textsc{Comola, M.} and \textsc{Prina, S.} (2021).
\newblock Treatment effect accounting for network changes.
\newblock \textit{The Review of Economics and Statistics}, \textbf{103} 597--604.

\bibitem[{DiCiccio and Romano(2017{\natexlab{a}})}]{diciccio2017robust}
\textsc{DiCiccio, C.~J.} and \textsc{Romano, J.~P.} (2017{\natexlab{a}}).
\newblock Robust permutation tests for correlation and regression coefficients.
\newblock \textit{Journal of the American Statistical Association}, \textbf{112} 1211--1220.

\bibitem[{DiCiccio and Romano(2017{\natexlab{b}})}]{DiCiccio&Romano2017}
\textsc{DiCiccio, C.~J.} and \textsc{Romano, J.~P.} (2017{\natexlab{b}}).
\newblock Robust permutation tests for correlation and regression coefficients.
\newblock \textit{Journal of the American Statistical Association}, \textbf{112} 1211--1220.

\bibitem[{Dupas et~al.(2016)Dupas, Green, Keats and Robinson}]{Dupas2016}
\textsc{Dupas, P.}, \textsc{Green, S.}, \textsc{Keats, A.} and \textsc{Robinson, J.} (2016).
\newblock 63challenges in banking the rural poor: Evidence from kenya’s western province.
\newblock In \textit{African Successes, Volume III: Modernization and Development}. University of Chicago Press.
\newblock \eprint{https://academic.oup.com/chicago-scholarship-online/book/0/chapter/183709565/chapter-ag-pdf/44675132/book\_22957\_section\_183709565.ag.pdf}.

\bibitem[{Fisher(1953)}]{Fisher1935Design}
\textsc{Fisher, R.~A.} (1953).
\newblock \textit{The Design of Experiments}.
\newblock Oliver and Boyd, Edinburgh.

\bibitem[{Imbens and Rubin(2015)}]{imbens2015causal}
\textsc{Imbens, G.~W.} and \textsc{Rubin, D.~B.} (2015).
\newblock \textit{Causal inference in statistics, social, and biomedical sciences}.
\newblock Cambridge university press.

\bibitem[{Johari et~al.(2022)Johari, Li, Liskovich and Weintraub}]{johari2021}
\textsc{Johari, R.}, \textsc{Li, H.}, \textsc{Liskovich, I.} and \textsc{Weintraub, G.~Y.} (2022).
\newblock Experimental design in two-sided platforms: An analysis of bias.
\newblock \textit{Management Science}, \textbf{68} 7069--7089.

\bibitem[{Kohavi et~al.(2020)Kohavi, Tang and Xu}]{Kohavi2020}
\textsc{Kohavi, R.}, \textsc{Tang, D.} and \textsc{Xu, Y.} (2020).
\newblock \textit{Trustworthy Online Controlled Experiments: A Practical Guide to A/B Testing}.
\newblock Cambridge University Press.

\bibitem[{Li and Ding(2017)}]{Ding2017}
\textsc{Li, X.} and \textsc{Ding, P.} (2017).
\newblock General forms of finite population central limit theorems with applications to causal inference.
\newblock \textit{Journal of the American Statistical Association}, \textbf{112} 1759--1769.

\bibitem[{Masoero et~al.(2026)Masoero, Vijaykumar, Richardson, McQueen, Rosen, Burdick, Bajari and Imbens}]{imbens2021}
\textsc{Masoero, L.}, \textsc{Vijaykumar, S.}, \textsc{Richardson, T.~S.}, \textsc{McQueen, J.}, \textsc{Rosen, I.}, \textsc{Burdick, B.}, \textsc{Bajari, P.} and \textsc{Imbens, G.} (2026).
\newblock Multiple randomization designs: estimation and inference with interference.
\newblock \textit{Journal of the Royal Statistical Society Series B: Statistical Methodology} qkaf073.

\bibitem[{Prina(2015)}]{PRINA2015}
\textsc{Prina, S.} (2015).
\newblock Banking the poor via savings accounts: Evidence from a field experiment.
\newblock \textit{Journal of Development Economics}, \textbf{115} 16--31.

\bibitem[{Puelz et~al.(2021)Puelz, Basse, Feller and Toulis}]{puelz2021}
\textsc{Puelz, D.}, \textsc{Basse, G.}, \textsc{Feller, A.} and \textsc{Toulis, P.} (2021).
\newblock {A Graph-Theoretic Approach to Randomization Tests of Causal Effects under General Interference}.
\newblock \textit{Journal of the Royal Statistical Society Series B: Statistical Methodology}, \textbf{84} 174--204.

\bibitem[{Rubin(1972)}]{Rubin1972}
\textsc{Rubin, D.} (1972).
\newblock Estimating causal effects of treatments in experimental and observational studies.
\newblock \textit{ETS Research Bulletin Series}, \textbf{1972} i--31.

\bibitem[{Rubin(1980)}]{Rubin1980}
\textsc{Rubin, D.~B.} (1980).
\newblock Comment.
\newblock \textit{Journal of the American Statistical Association}, \textbf{75} 591--593.

\bibitem[{Thomke(2003)}]{thomke2003experimentation}
\textsc{Thomke, S.} (2003).
\newblock \textit{Experimentation Matters: Unlocking the Potential of New Technologies for Innovation}.
\newblock Harvard Business School Press, Boston, MA.

\bibitem[{Toulis(2025)}]{toulis2025asymptotic}
\textsc{Toulis, P.} (2025).
\newblock Asymptotic validity and finite-sample properties of approximate randomization tests.
\newblock \textit{Biometrika} asaf085.

\bibitem[{Toulis and Parkes(2016)}]{toulis2016long}
\textsc{Toulis, P.} and \textsc{Parkes, D.~C.} (2016).
\newblock Long-term causal effects via behavioral game theory.
\newblock \textit{Advances in Neural Information Processing Systems}, \textbf{29}.

\bibitem[{Wager and Xu(2021)}]{Wager2021}
\textsc{Wager, S.} and \textsc{Xu, K.} (2021).
\newblock Experimenting in equilibrium.
\newblock \textit{Management Science}, \textbf{67} 6694--6715.

\bibitem[{Wu and Ding(2021)}]{Wu2021}
\textsc{Wu, J.} and \textsc{Ding, P.} (2021).
\newblock Randomization tests for weak null hypotheses in randomized experiments.
\newblock \textit{Journal of the American Statistical Association}, \textbf{116} 1898--1913.

\bibitem[{Zhao and Ding(2021)}]{ZHAO2021278}
\textsc{Zhao, A.} and \textsc{Ding, P.} (2021).
\newblock Covariate-adjusted fisher randomization tests for the average treatment effect.
\newblock \textit{Journal of Econometrics}, \textbf{225} 278--294.

\end{thebibliography}

\newpage

\appendix
\section{Further Details}
\subsection{Details for Remark \ref{remark:sample-conditional}}\label{app:details}
Consider the conditional density function $p(\W \mid \mathcal{C})$ with $\W = w^B (w^S)^\top$. With slight abuse of notation, we have
\begin{align*}
    p(\W \mid \mathcal{C}) &= \frac{p(\W, \mathcal{C})}{p(\mathcal{C})} = \frac{p\left(w^{B}, w^{S} = w^{S, obs}  , \sum_{i=1}^I w_i^B = \sum_{i=1}^I w_i^{B, obs}\right) }{p\left(w^{S} = w^{S, obs}, \sum_{i=1}^I w_i = \sum_{i=1}^I w_i^{B, obs}\right)} \\
    &= \frac{p\left(w^{B}, \sum_{i=1}^I w_i^B = \sum_{i=1}^I w_i^{B, obs}\right) p^{S}(w^{S, obs}) }{p^{S}(w^{S, obs}) p\left(\sum_{i=1}^I w_i = \sum_{i=1}^I w_i^{B, obs}\right)} \\
    &= p^B\left(w^{B} \mid  \sum_{i=1}^I w_i^B = \sum_{i=1}^I w_i^{B, obs}\right)~,
\end{align*}
where the final term represents the conditional distribution of $w^B$ on the event $\{\sum_{i=1}^I w_i^B = \sum_{i=1}^I w_i^{B, obs}\}$.
\subsection{Details for Proposition \ref{prop:power}}\label{app:details-power}
\begin{assumption}
    Let $n=|\mathcal{U}|$ and $m = |\mathcal{W}|$. Let the randomization distribution and the null distribution be denoted, respectively, by
    \begin{equation}\label{eqn:prop-power}
        T(\W\mid \mathbf{Y}(\W), \mathcal{C}) \sim \hat F_{1,n,m}, \text{ and } T(\W\mid \mathbf{Y}^{obs}, \mathcal{C}) \sim \hat F_{0,n,m}, \text{ where } \W \sim P(\W \mid \mathcal{C}).
    \end{equation}
    Suppose that for any fixed $n >0$:
    \begin{enumerate}
        \item[(A.1)] There exist continuous cdfs $F_{1,n}$ and $F_{0,n}$ such that $\hat F_{1,n,m}$ and $\hat F_{0,n,m}$ in (25) are the empirical distribution functions over $m$ independent samples from $F_{1,n}$ and $F_{0,n}$, respectively.
        \item[(A.2)] There exists $\sigma_n > 0$, and a continuous cdf $F$, such that $F_{0,n}(t)=F(t/\sigma_n)$, for all $t \in \mathbb{R}$.
        \item[(A.3)] The treatment effect (e.g., spillover contrast) is additive, that is, there exists a fixed $\tau \in \mathbb{R}$ such that $F_{1,n}(t) = F_{0,n}(t-\tau)$, for all $t\in\mathbb{R}$.
    \end{enumerate}
\end{assumption}

\newpage
\section{Proof of Main Results}
\subsection{Proof of Theorem \ref{thm:validity1} and \ref{thm:validity2}}
The proof is structured into two parts. The first part establishes the validity of the randomization test that samples from conditional randomization space by demonstrating that the test statistics are imputable under the conditioning events. The second part confirms the validity of the permutation test by proving the randomization hypotheses.

\subsubsection{Proof for Randomization Test}\label{subsec:proof-randomization-test}
The proof is a direct application of Theorem 2 of \citet{puelz2021}. We first verify that, within the
conditioning event $\mathcal{C}$, the test statistic is \emph{imputable} under the null: for any
$\W' \in \mathcal{W}(\mathcal{C})$,
\[
T(\W'\mid \mathbf{Y}(\W'),\mathcal{C}) = T(\W'\mid \mathbf{Y}(\W),\mathcal{C}).
\]
We first show imputability for the difference-in-means statistics in Procedure~\ref{procedure:spillover} and
Procedure~\ref{procedure:total}, and then for a generic statistic.

\paragraph{Procedure~\ref{procedure:spillover}.}
Recall
\begin{align*}
T(\W\mid \mathbf{Y}(\W), \mathcal{C})
=&\ \frac{1}{n_{1}} \sum_{i=1}^I \sum_{j=1}^J Y_{i,j}(w_i^B,w^{S}_j)\,(1-w^{S}_j)\, w^{B}_i\\
&- \frac{1}{n_{0}}  \sum_{i=1}^I \sum_{j=1}^J Y_{i,j}(w_i^B,w^{S}_j)\,(1-w^{S}_j)\,(1- w^{B}_i).
\end{align*}
Fix any $\W' = w^{B, \prime} ( w^{S, \prime})^\top \in \mathcal{W}(\mathcal{C})$. Under the conditioning event for
Procedure~\ref{procedure:spillover}, the seller-side assignment is fixed, so $w^{S,\prime}=w^S$. Hence,
\begin{align*}
&T(\W'\mid \mathbf{Y}(\W), \mathcal{C})\\
=&\ \frac{1}{n_{1}} \sum_{i=1}^I \sum_{j=1}^J Y_{i,j}(w_i^B,w^{S}_j)\,(1-w^{S}_j)\, w^{B,\prime}_i\\
&- \frac{1}{n_{0}}  \sum_{i=1}^I \sum_{j=1}^J Y_{i,j}(w_i^B,w^{S}_j)\,(1-w^{S}_j)\,(1- w^{B,\prime}_i)\\
=&\ \frac{1}{n_{1}} \sum_{i=1}^I \sum_{j=1}^J Y_{i,j}(w_i^B,0)\,(1-w^{S}_j)\, w^{B,\prime}_i
 - \frac{1}{n_{0}}  \sum_{i=1}^I \sum_{j=1}^J Y_{i,j}(w_i^B,0)\,(1-w^{S}_j)\,(1- w^{B,\prime}_i),
\end{align*}
where the last equality uses that $(1-w_j^S)$ restricts attention to pairs with $w_j^S=0$.

Under $\Hbuyer$, we have $Y_{i,j}(1,0)=Y_{i,j}(0,0)$ for all $i,j$, so for any $z\in\{0,1\}$,
$Y_{i,j}(z,0)=Y_{i,j}(0,0)$. Therefore we may replace $Y_{i,j}(w_i^B,0)$ by $Y_{i,j}(w_i^{B,\prime},0)$:
\begin{align*}
&T(\W'\mid \mathbf{Y}(\W), \mathcal{C})\\
=&\ \frac{1}{n_{1}} \sum_{i=1}^I \sum_{j=1}^J Y_{i,j}(w_i^{B,\prime},0)\,(1-w^{S}_j)\, w^{B,\prime}_i
 - \frac{1}{n_{0}}  \sum_{i=1}^I \sum_{j=1}^J Y_{i,j}(w_i^{B,\prime},0)\,(1-w^{S}_j)\,(1- w^{B,\prime}_i)\\
=&\ T(\W'\mid \mathbf{Y}(\W'), \mathcal{C}),
\end{align*}
which establishes imputability.

\paragraph{Procedure~\ref{procedure:total}.}
Recall
\[
T(\W\mid \mathbf{Y}(\W), \CK)
= \frac{1}{n_{1,k}} \sum_{(i,j)\in\mathcal{U}} Y_{i,j}(w^{B}_i, w^{S}_j)\, w^{S}_j w^{B}_i
- \frac{1}{n_{0,k}} \sum_{(i,j)\in\mathcal{U}} Y_{i,j}(w^{B}_i, w^{S}_j)\, (1-w^{S}_j) (1- w^{B}_i).
\]
Under $\CK$, we have $w_i^B = w_j^S$ for all $(i,j)\in\mathcal{U}$, so
\[
T(\W\mid \mathbf{Y}(\W), \CK)
= \frac{1}{n_{1,k}} \sum_{(i,j)\in\mathcal{U}} Y_{i,j}(w^{B}_i, w^{B}_i)\, w^{B}_i
- \frac{1}{n_{0,k}} \sum_{(i,j)\in\mathcal{U}} Y_{i,j}(w^{B}_i, w^{B}_i)\, (1- w^{B}_i).
\]
Fix $\W' \in \mathcal{W}(\CK)$. Then $\W'$ also satisfies the same block constraint, hence for all $(i,j)\in\mathcal{U}$,
$w_i^{B,\prime}=w_j^{S,\prime}$ and in particular we may write $Y_{i,j}(w_i^{B,\prime},w_i^{B,\prime})$.
By $\Htotal$, $Y_{i,j}(1,1)=Y_{i,j}(0,0)$ for all $i,j$, so $Y_{i,j}(z,z)$ is constant across $z\in\{0,1\}$.
Thus we can replace $Y_{i,j}(w_i^{B},w_i^{B})$ by $Y_{i,j}(w_i^{B,\prime},w_i^{B,\prime})$ inside $T(\W'|\cdot)$:
\begin{align*}
T(\W'\mid \mathbf{Y}(\W), \CK)
=&\ \frac{1}{n_{1,k}} \sum_{(i,j)\in\mathcal{U}} Y_{i,j}(w^{B}_i, w^{B}_i)\, w^{B,\prime}_i
- \frac{1}{n_{0,k}} \sum_{(i,j)\in\mathcal{U}} Y_{i,j}(w^{B}_i, w^{B}_i)\, (1- w^{B,\prime}_i)\\
=&\ \frac{1}{n_{1,k}} \sum_{(i,j)\in\mathcal{U}} Y_{i,j}(w^{B,\prime}_i, w^{B,\prime}_i)\, w^{B,\prime}_i
- \frac{1}{n_{0,k}} \sum_{(i,j)\in\mathcal{U}} Y_{i,j}(w^{B,\prime}_i, w^{B,\prime}_i)\, (1- w^{B,\prime}_i)\\
=&\ T(\W'\mid \mathbf{Y}(\W'), \CK),
\end{align*}
establishing imputability.

\paragraph{Generic statistics.}
The same reasoning applies to any statistic that depends on $(\mathbf{Y}(\W'),\W')$ only through outcomes of pairs
whose exposure is fixed under $\mathcal{C}$.

For Procedure~\ref{procedure:spillover}, $\mathcal{C}$ fixes $w^S$, hence fixes the set
$\{(i,j): w_j^S=0\}$. Under $\Hbuyer$, all potential outcomes $Y_{i,j}(z,0)$ coincide across $z\in\{0,1\}$, so for any
$\W'\in\mathcal{W}(\mathcal{C})$,
\[
\{Y_{i,j}(w_i^B,0)\}_{(i,j):w_j^S=0}
=
\{Y_{i,j}(w_i^{B,\prime},0)\}_{(i,j):w_j^{S,\prime}=0}.
\]
Therefore any such statistic is imputable (note: the null used here is $\Hbuyer$, not $\Htotal$).

For Procedure~\ref{procedure:total}, $\CK$ fixes the constraint $w_i^B=w_j^S$ for $(i,j)\in\mathcal{U}$ and under $\Htotal$
the values $Y_{i,j}(z,z)$ coincide across $z$, so any statistic depending on $\{Y_{i,j}(w_i^B,w_j^S)\}_{(i,j)\in\mathcal{U}}$
through $\{Y_{i,j}(w_i^B,w_i^B)\}_{(i,j)\in\mathcal{U}}$ is imputable by the same substitution.

\paragraph{Correct randomization distribution.}
We show that Procedures~\ref{procedure:spillover} and~\ref{procedure:total}—with Procedure~\ref{procedure:total} replaced by the conditional distribution described in Remark~\ref{remark:sample-conditional} and~\eqref{eq:conditional}—induce randomization distributions consistent with
\[
p(\W \mid \mathcal{C}) \;\propto\; p(\mathcal{C} \mid \W)\, p(\W).
\]
Following the argument in the proof of Theorem~2 of \citet{puelz2021}, it suffices to show that, for every \(\mathbf{w} \in \mathcal{W}\),
\[
p(\mathcal{C} \mid \W = \mathbf{w}) = \mathbb{I}\{\mathbf{w} \in \mathcal{W}\}.
\]
This condition holds for Procedure~\ref{procedure:spillover} by construction.

For Procedure~\ref{procedure:total}, the conditioning mechanism does not favor one focal assignment over another. Indeed, the partitions \(\mathcal{I}^{(k)}(\mathbf{w})\) and \(\mathcal{J}^{(k)}(\mathbf{w})\) are themselves generated at random; equivalently, one may independently shuffle the buyer and seller indices within the treated and control groups and then form blocks of size \(k\) sequentially. This construction is uniform over all assignments compatible with the partition structure: any \(\mathbf{w}\in \mathcal{W}\) induces the event \(\mathcal{C}\) with the same positive probability, whereas any assignment outside \(\mathcal{W}\) induces \(\mathcal{C}\) with probability zero. Therefore,
\[
p(\mathcal{C}\mid \W=\mathbf{w}) \propto \mathbb{I}\{\mathbf{w}\in \mathcal{W}\},
\]
and hence
\[
p(\W\mid \mathcal{C}) \propto p(\mathcal{C}\mid \W)\,p(\W)
\propto \mathbb{I}\{\W\in \mathcal{W}\}\,p(\W),
\]
which is exactly the conditional randomization distribution used in \eqref{eq:conditional}.
\qed

\subsubsection{Proof for Permutation Test}

Apparently, $\W^{(l)}$ belongs to the conditional event, i.e $\W^{(l)} \in \mathcal{W}$ for all $l=1,\dots,L$ under both Procedure \ref{procedure:spillover} and \ref{procedure:total}, meaning that the test statistics are imputable under the null hypothesis and the test procedures are well defined. Then, the validity of those procedures rests upon the \textit{randomization hypothesis} $\W^{(l)} \mid \mathcal{C} \stackrel{d}{=} \W^{obs}\mid \mathcal{C}$ for all $l=1,\dots, L$. To see this, by definition, $\sum_{l=1}^L E[\phi(\W^{(l)}; \mathcal{C}) \mid \mathcal{C}] \leq \alpha L$. By randomization hypothesis, $$\sum_{l=1}^L E[\phi(\W^{(l)}; \mathcal{C}) \mid \mathcal{C} ] = \sum_{l=1}^L E[\phi(\W^{obs}; \mathcal{C}) \mid \mathcal{C}] = L \cdot E[\phi(\W^{obs}; \mathcal{C}) \mid \mathcal{C}] \leq \alpha L~.$$ Therefore, $E[\phi(\W^{obs}; \mathcal{C})\mid \mathcal{C}] \leq \alpha$. 

Then, it suffices to show that randomization hypothesis holds under these two procedures. Let $\pi^B$ denote a permutation operator on the buyer side assignments, and $\pi^{block}$ denote the block-wise permutation. For Procedure \ref{procedure:spillover} (b), by exchangeability of $p^B$, we have $(\pi_B w^B, w^S) \stackrel{d}{=} (w^B, w^S)$. Thus, we have $(\pi_B w^B, w^S) \mid w^S \stackrel{d}{=} (w^B, w^S) \mid w^S$, which implies $\pi_B \W \mid \mathcal{C} \stackrel{d}{=} \W \mid \mathcal{C}$. Similarly, for Procedure \ref{procedure:total} (b), by exchangeability of $p^B$ and $p^S$, we have $(\pi^{block} w^B, \pi^{block} w^S) \stackrel{d}{=} (w^B, w^S)$. Without loss of generality, we assume $I=J$ and $\IK=\JK = \{(1,\dots,k), (k+1,\dots, 2k), \dots\}$, i.e. the focal units consist of $k$ by $k$ diagonal blocks from a square matrix. Thus, we have
\begin{align*}
    &(\pi^{block} w^B, \pi^{block} w^S) \mid \{\pi^{block} w^B = \pi^{block}w^S\} \\
    &=(\pi^{block} w^B, \pi^{block} w^S) \mid \{w^B = w^S\}\\
    &\stackrel{d}{=} (w^B, w^S) \mid \{w^B = w^S\}~,
\end{align*}
which implies $\pi^{block}\W \mid \mathcal{C} \stackrel{d}{=} \W \mid \mathcal{C}$. Note that for both Procedure \ref{procedure:spillover} (b) and \ref{procedure:total} (b), the conditioning event $\mathcal{C}$ has a $\mathcal{W}$ that is more restrictive conditioning set than $\{w^S\}$ (or $\{w^B = w^S\}$). For example, for Procedure \ref{procedure:spillover} (b), $\mathcal{W} = \{v (w^S)^{\top}: v \in  \mathbb{W}^B\} \cap \mathcal{M}$ imposes a restriction of $\mathcal{M}$ (fixed treatment fraction) in addition to $\{w^S\}$. For Procedure \ref{procedure:total} (b), in addition to $\mathcal{M}$, $\mathcal{W}$ requires that $w_i^S = w_j^S$ for all $i,j \in \mathcal{I}_s$ for all $s$ (same for $w^B$), i.e. treatment status should be the same within each partition. That said, those restrictions do not invalidate our argument above, because it is easy to show that $\pi w \in \mathcal{M}$ if and only if $w \in \mathcal{M}$ (same for the additional requirement by Procedure \ref{procedure:total} (b)). Therefore, they are omitted for readability. 

\qed

\subsection{Proof of Section \ref{sec:weak-null}: Randomization Tests for Weak Null Hypotheses}
This section provides the proofs of Theorem \ref{thm:validity-weak} and \ref{thm:validity-weak-twoway} and its supporting lemmas. Throughout this section, we repeatedly use large-sample results for Simple Random Sampling Without Replacement (SRSWOR) to handle the dependencies inherent in finite-population inference.

\subsubsection{Proof of Theorem \ref{thm:validity-weak}}

\begin{proof}
The asymptotic validity of Procedure \ref{procedure:weak-spillover} under $H_0^{wb,2}$ and its potential invalidity under $H_0^{wb,1}$ follow from the limit results established in Lemma \ref{lem:two-term-clt-wb}.

First, consider the null hypothesis $H_0^{wb,2}$ in \eqref{eq:weak_null2}. Under this hypothesis, $\tau(w^S) = \tau = 0$ by construction, causing the first stochastic term $A_N$ to vanish. Lemma \ref{lem:two-term-clt-wb} then implies that the sampling distribution of the studentized statistic converges to a normal distribution with variance $\sigma_B^2 \leq 1$:
\begin{equation*}
    T^{WB} \Rightarrow N(0, \sigma_B^2), \quad \sigma_B^2 \leq 1.
\end{equation*}
Concurrently, the conditional randomization distribution used in Procedure \ref{procedure:weak-spillover}—which is constructed by permuting $w^B$ while holding $w^S$ fixed—converges in probability to a standard normal distribution, $N(0, 1)$, following the general results for studentized statistics under complete randomization \citep{Wu2021,ZHAO2021278}. Because the limit of the randomization distribution has a variance of 1, which is greater than or equal to the sampling variance $\sigma_B^2$, the randomization distribution stochastically dominates the sampling distribution. This ensures that the test is asymptotically conservative, satisfying $\lim \Pr(\text{Reject}) \leq \alpha$.

In contrast, under the global weak null $H_0^{wb,1}$ in \eqref{eq:weak_null1}, Lemma \ref{lem:two-term-clt-wb} shows that the sampling distribution converges to $N(0, \sigma_A^2 + \sigma_B^2)$. As established in the subsequent remarks, the total variance $\sigma_A^2 + \sigma_B^2$ can strictly exceed 1 depending on the values of $S_{\mu^\Delta}^2$ and $\lambda$. In such cases, the sampling distribution is ``wider'' than the $N(0, 1)$ randomization distribution, leading to a systematic over-rejection of the null hypothesis and thus a loss of asymptotic validity.
\end{proof}

\subsubsection{Finite-Population Central Limit Theorems}
We now state the regularity conditions required for our asymptotic theory under two-sided experiments with complete randomization. We consider a sequence of finite populations indexed by $N$, where $I=I_N$ and $J=J_N$. Let $p_1:=I_1/I\to\pi_1\in(0,1)$, $p_0:=I_0/I\to\pi_0\in(0,1)$, $p_2:=J_1/J\to\pi_2\in(0,1)$, and assume $I\to\infty, J\to\infty, I/J\to\lambda\in(0,\infty)$.
\begin{assumption}[Bounded fourth moments]
\label{assump:bounded-fourth-moments}
The potential outcomes satisfy
\[
\sup_N \frac{1}{IJ}\sum_{i=1}^I\sum_{j=1}^J Y_{ij}(z,0)^4<\infty
\quad(z\in\{0,1\}),
\qquad
\sup_N \frac{1}{IJ}\sum_{i=1}^I\sum_{j=1}^J \Delta_{ij}^4<\infty,
\]
where $\Delta_{ij}:=Y_{ij}(1,0)-Y_{ij}(0,0)$.
\end{assumption}

\begin{assumption}[Moment control for seller-averaging error]
\label{assump:moment-control-sellers}
For each $z\in\{0,1\}$, let
\[
\mu_i(z):=\frac{1}{J}\sum_{j=1}^J Y_{ij}(z,0),
\qquad
S_{i,z}^2:=\frac{1}{J-1}\sum_{j=1}^J\big(Y_{ij}(z,0)-\mu_i(z)\big)^2,
\]
and assume
\[
\sup_N \frac{1}{I}\sum_{i=1}^I S_{i,z}^2<\infty \quad (z\in\{0,1\}).
\]
\end{assumption}

\begin{assumption}[Buyer-level nondegeneracy]
\label{assump:buyer-nondegeneracy}
Define $\delta_i:=\mu_i(1)-\mu_i(0)$ and the finite-population variances
\[
S_z^2:=\frac{1}{I-1}\sum_{i=1}^I\big(\mu_i(z)-\bar\mu(z)\big)^2,\quad z\in\{0,1\},
\qquad
S_\delta^2:=\frac{1}{I-1}\sum_{i=1}^I(\delta_i-\bar\delta)^2,
\]
where $\bar\mu(z):=I^{-1}\sum_{i=1}^I \mu_i(z)$ and $\bar\delta:=I^{-1}\sum_{i=1}^I \delta_i$. Assume $S_0^2,S_1^2$ converge to finite limits with $\liminf_N S_0^2>0$ and $\liminf_N S_1^2>0$.
\end{assumption}

\begin{assumption}[Uniform integrability for Lindeberg conditions]
\label{assump:uniform-integrability}
The following uniform integrability conditions hold:
\begin{enumerate}
\item[(i)] For the seller-level treatment effects $\mu_j^\Delta := I^{-1}\sum_{i=1}^I \Delta_{ij}$,
\[
\lim_{M\to\infty} \sup_N \frac{1}{J}\sum_{j=1}^J (\mu_j^\Delta)^2 \mathbf{1}\{(\mu_j^\Delta)^2 > M\} = 0.
\]
\item[(ii)] For each $z\in\{0,1\}$, conditional on any seller assignment $w^S$ with $|S_0(w^S)|=J_0\asymp J$,
\[
\lim_{M\to\infty} \sup_N E\left[\frac{1}{I}\sum_{i=1}^I \big(\bar Y_i^B(z;w^S)\big)^2 \mathbf{1}\big\{\big(\bar Y_i^B(z;w^S)\big)^2 > M\big\}\right] = 0.
\]
\end{enumerate}
\end{assumption}

\begin{remark}
Assumptions~\ref{assump:bounded-fourth-moments}-\ref{assump:uniform-integrability} are natural conditions required for finite-population central limit theorems. The boundedness conditions in Assumptions~\ref{assump:bounded-fourth-moments}, \ref{assump:moment-control-sellers}, and \ref{assump:uniform-integrability} are satisfied if the potential outcomes are bounded, as assumed in \cite{imbens2021}. The nondegeneracy condition in Assumption~\ref{assump:buyer-nondegeneracy} is also a standard assumption for the finite-population CLT \citep{Ding2017}. As we will show later, Assumption~\ref{assump:buyer-nondegeneracy} is essential for ensuring the validity of the studentization method based on our proposed two-way variance estimator.
\end{remark}

The following lemmas establish the necessary central limit theorems for each side of randomization. These results follow from the general finite-population CLT theory developed by \cite{Ding2017}. The key to applying these results is verifying the Lindeberg-type conditions required by \cite{Ding2017}.

\begin{lemma}[Seller-side SRSWOR CLT]
\label{lem:seller-side-clt}
Let $\mu_j^\Delta:=I^{-1}\sum_{i=1}^I \Delta_{ij}$ and define the finite-population variance
\[
S_{\mu^\Delta}^2:=\frac{1}{J-1}\sum_{j=1}^J(\mu_j^\Delta-\tau)^2,
\]
where
\[
\tau:=\frac{1}{IJ}\sum_{i=1}^I\sum_{j=1}^J \big\{Y_{ij}(1,0)-Y_{ij}(0,0)\big\}.
\]
Assume $S_{\mu^\Delta}^2$ converges to a finite positive limit. Under Assumptions~\ref{assump:bounded-fourth-moments}--\ref{assump:uniform-integrability} and the null hypothesis $H_0^{wb}:\tau=0$, we have
\[
\sqrt{J} \tau(w^S) = \sqrt{J}\cdot \frac{1}{J_0}\sum_{j:w_j^S=0}\mu_j^\Delta
\Rightarrow
N\!\left(0,\ \frac{\pi_2}{1-\pi_2}\,S_{\mu^\Delta}^2\right),
\]
where $S_0(w^S):=\{j\in[J]:w_j^S=0\}$ is an SRSWOR subset of $[J]$ of size $J_0$.
\end{lemma}

\begin{proof}
The seller assignment $w^S$ induces an SRSWOR of size $J_0$ from the finite population $\{\mu_j^\Delta\}_{j=1}^J$. Under $H_0^{wb,1}$, the population mean is $\tau=0$.

Define the SRSWOR sample mean $\tau(w^S)=J_0^{-1}\sum_{j:w_j^S=0}\mu_j^\Delta$ and the finite-population variance $S_{\mu^\Delta}^2=(J-1)^{-1}\sum_{j=1}^J (\mu_j^\Delta)^2$ (using $\tau=0$). 

\textbf{Step 1: Verify conditions for \cite{Ding2017} Theorem 1.}

The finite-population CLT for SRSWOR in \cite{Ding2017} (Theorem 1) requires:
\begin{enumerate}
\item \textbf{Sample size growth:} $J_0\to\infty$ and $J-J_0\to\infty$.

This is ensured by $J\to\infty$ and $J_1/J\to\pi_2\in(0,1)$, which implies $J_0=J-J_1\sim (1-\pi_2)J\to\infty$.

\item \textbf{Lindeberg condition:} 
\begin{equation}
\frac{\max_{j\in[J]} (\mu_j^\Delta-\tau)^2}{J\, S_{\mu^\Delta}^2}\to 0.
\label{eq:lindeberg-seller}
\end{equation}
Under $H_0^{wb,1}$, $\tau=0$, so this becomes
\[
\frac{\max_{j\in[J]} (\mu_j^\Delta)^2}{J\, S_{\mu^\Delta}^2}\to 0.
\]
\end{enumerate}

\textbf{Step 2: Derive the Lindeberg condition from Assumption~\ref{assump:uniform-integrability}.}

We need to verify the Lindeberg condition:
\[
\frac{\max_{j\in[J]} (\mu_j^\Delta-\tau)^2}{J\, S_{\mu^\Delta}^2}\to 0.
\]
Under $H_0^{wb,1}$, $\tau=0$, so this becomes
\[
\frac{\max_{j\in[J]} (\mu_j^\Delta)^2}{J\, S_{\mu^\Delta}^2}\to 0.
\]

Since $S_{\mu^\Delta}^2$ converges to a positive limit $\sigma^2>0$, we have $S_{\mu^\Delta}^2 \ge \sigma^2/2$ for sufficiently large $N$. Therefore, it suffices to show
\[
\frac{\max_{j\in[J]} (\mu_j^\Delta)^2}{J} \to 0.
\]

For any $\epsilon > 0$, we have
\[
\frac{1}{J}\max_{j\in[J]} (\mu_j^\Delta)^2
\le
\frac{1}{J}\sum_{j=1}^J (\mu_j^\Delta)^2 \mathbf{1}\{(\mu_j^\Delta)^2 > \epsilon J\}
+
\epsilon.
\]

For the first term, by Assumption~\ref{assump:uniform-integrability}(i) with $M = \epsilon J$:
\[
\sup_N \frac{1}{J}\sum_{j=1}^J (\mu_j^\Delta)^2 \mathbf{1}\{(\mu_j^\Delta)^2 > \epsilon J\} 
\le 
\sup_N \frac{1}{J}\sum_{j=1}^J (\mu_j^\Delta)^2 \mathbf{1}\{(\mu_j^\Delta)^2 > M\} 
\to 0
\]
as $M = \epsilon J \to \infty$ (since $J\to\infty$).

Therefore,
\[
\limsup_{N\to\infty} \frac{1}{J}\max_{j\in[J]} (\mu_j^\Delta)^2 \le \epsilon.
\]
Since $\epsilon>0$ was arbitrary, we conclude
\[
\frac{\max_{j\in[J]} (\mu_j^\Delta)^2}{J} \to 0,
\]
which establishes the Lindeberg condition \eqref{eq:lindeberg-seller}.

\textbf{Step 3: Apply \cite{Ding2017} Theorem 1.}

With both conditions verified, \cite{Ding2017} Theorem 1 yields
\[
\frac{\sqrt{J_0}\,\tau(w^S)}{\sqrt{(1-f_J)S_{\mu^\Delta}^2}}\Rightarrow N(0,1),
\]
where $f_J:=J_0/J$ is the sampling fraction. Since $J_0=J-J_1$ and $J_1/J\to\pi_2$, we have $f_J\to 1-\pi_2$ and thus $1-f_J\to\pi_2$. 

Rearranging gives
\[
\sqrt{J_0}\,\tau(w^S)\Rightarrow N\!\left(0,\ \pi_2 S_{\mu^\Delta}^2\right).
\]
Multiplying both sides by $\sqrt{J/J_0}$ and using $J/J_0\to 1/(1-\pi_2)$ yields
\[
\sqrt{J}\,\tau(w^S)\Rightarrow N\!\left(0,\ \frac{\pi_2}{1-\pi_2}\,S_{\mu^\Delta}^2\right),
\]
completing the proof. 
\end{proof}

For notational brevity, we suppress the superscript ``obs'' and write
\[
(w^B,w^S)=\W^{obs},\qquad \mathbf{Y}^{obs}=\mathbf{Y}(\W^{obs}),
\]
and we abbreviate
\begin{align*}
    T^{B}(w^B,w^S)&:=T^{B}(\W^{obs}\mid \mathbf{Y}^{obs},\mathcal{C})\\
    V^{B}(w^B,w^S)&:=V^{B}(\W^{obs}\mid \mathbf{Y}^{obs},\mathcal{C}) \\
    T^{WB}(w^B,w^S)&:=T^{WB}(\W^{obs}\mid \mathbf{Y}^{obs},\mathcal{C})~.
\end{align*}

\begin{lemma}[Buyer-side CRD CLT conditional on seller assignment]
\label{lem:buyer-side-clt}
Conditional on the seller assignment $w^S$, define the seller-focal buyer-level potential outcomes
\[
\bar Y_i^B(z;w^S)
:=
\frac{1}{J_0}\sum_{j=1}^J Y_{ij}(z,0)\,(1-w_j^S),
\qquad
\bar Y_{\cdot}^B(z;w^S):=\frac{1}{I}\sum_{i=1}^I \bar Y_i^B(z;w^S),
\]
and the seller-focal average treatment effect
\[
\tau(w^S):=\bar Y_{\cdot}^B(1;w^S)-\bar Y_{\cdot}^B(0;w^S).
\]
Let the difference-in-means statistic be
\[
T^{B}(w^B,w^S)
:=
\frac{1}{I_1}\sum_{i=1}^I w_i^B\,\bar Y_i^B(1;w^S)
-
\frac{1}{I_0}\sum_{i=1}^I (1-w_i^B)\,\bar Y_i^B(0;w^S).
\]
Define the conditional finite-population variance
\[
v_I(w^S):=
\frac{S_1^2(w^S)}{I_1}+\frac{S_0^2(w^S)}{I_0}-\frac{S_\tau^2(w^S)}{I},
\]
where
\[
S_z^2(w^S):=\frac{1}{I-1}\sum_{i=1}^I\big(\bar Y_i^B(z;w^S)-\bar Y_\cdot^B(z;w^S)\big)^2,
\]
\[
\delta_i(w^S):=\bar Y_i^B(1;w^S)-\bar Y_i^B(0;w^S),
\qquad
S_\tau^2(w^S):=\frac{1}{I-1}\sum_{i=1}^I\big(\delta_i(w^S)-\tau(w^S)\big)^2.
\]
Under Assumptions~\ref{assump:bounded-fourth-moments}--\ref{assump:uniform-integrability}, conditional on $w^S$, the centered difference-in-means satisfies
\[
\frac{T^{B}(w^B,w^S)-\tau(w^S)}{\sqrt{v_I(w^S)}}
\Rightarrow N(0,1).
\]
\end{lemma}

\begin{proof}
Conditional on $w^S$, the buyer assignment $w^B$ is a two-arm complete randomization with fixed margins $(I_1,I_0)$ applied to the finite population of seller-focal potential outcomes $\{\bar Y_i^B(0;w^S), \bar Y_i^B(1;w^S)\}_{i=1}^I$.

\textbf{Step 1: Verify conditions for \cite{Ding2017}.}

The finite-population CLT for complete randomization in \cite{Ding2017} requires:
\begin{enumerate}
\item \textbf{Sample size growth:} $I_1\to\infty$ and $I_0\to\infty$.

This is ensured by $I\to\infty$ and $I_1/I\to\pi_1\in(0,1)$.

\item \textbf{Lindeberg condition:} For each $z\in\{0,1\}$,
\begin{equation}
\frac{\max_{i\in[I]} \big(\bar Y_i^B(z;w^S)-\bar Y_\cdot^B(z;w^S)\big)^2}{I\, S_z^2(w^S)}\xrightarrow{p} 0.
\label{eq:lindeberg-buyer}
\end{equation}

\item \textbf{Nondegeneracy:} $S_z^2(w^S)$ does not degenerate, which will be established in the proof of the main lemma (Lemma~\ref{lem:two-term-clt-wb}, Step 1(e)) via Assumption~\ref{assump:buyer-nondegeneracy}.
\end{enumerate}

\textbf{Step 2: Derive the Lindeberg condition from Assumption~\ref{assump:uniform-integrability}.}

Fix $z\in\{0,1\}$. We need to verify the Lindeberg condition:
\begin{equation}
\frac{\max_{i\in[I]} \big(\bar Y_i^B(z;w^S)-\bar Y_\cdot^B(z;w^S)\big)^2}{I\, S_z^2(w^S)}\xrightarrow{p} 0.
\label{eq:lindeberg-buyer}
\end{equation}

Since $S_z^2(w^S) \xrightarrow{p} S_z^2 > 0$ (established in Step 1(e) of the main lemma), it suffices to show
\[
\frac{\max_{i\in[I]} \big(\bar Y_i^B(z;w^S)-\bar Y_\cdot^B(z;w^S)\big)^2}{I}\xrightarrow{p} 0.
\]

Using the inequality $\max_i (x_i - \bar x)^2 \le 2\max_i x_i^2 + 2\bar x^2$, we have
\[
\max_{i\in[I]} \big(\bar Y_i^B(z;w^S)-\bar Y_\cdot^B(z;w^S)\big)^2
\le
2\max_{i\in[I]} \big(\bar Y_i^B(z;w^S)\big)^2 + 2\big(\bar Y_\cdot^B(z;w^S)\big)^2.
\]

For the first term, for any $\epsilon > 0$:
\[
\frac{1}{I}\max_{i\in[I]} \big(\bar Y_i^B(z;w^S)\big)^2
\le
\frac{1}{I}\sum_{i=1}^I \big(\bar Y_i^B(z;w^S)\big)^2 \mathbf{1}\big\{\big(\bar Y_i^B(z;w^S)\big)^2 > \epsilon I\big\}
+
\epsilon.
\]

By Assumption~\ref{assump:uniform-integrability}(ii) with $M = \epsilon I$, the first term vanishes as $N\to\infty$ (since $I\to\infty$). Thus
\[
\limsup_{N\to\infty} \frac{1}{I}\max_{i\in[I]} \big(\bar Y_i^B(z;w^S)\big)^2 \le \epsilon.
\]
Since $\epsilon>0$ was arbitrary,
\[
\frac{1}{I}\max_{i\in[I]} \big(\bar Y_i^B(z;w^S)\big)^2 \xrightarrow{p} 0.
\]

For the second term, $\bar Y_\cdot^B(z;w^S)^2 = O_p(1)$ (shown in Step 1(c) of the main lemma), so $\bar Y_\cdot^B(z;w^S)^2/I \to 0$ in probability.

Combining these results establishes \eqref{eq:lindeberg-buyer}.

\textbf{Step 3: Apply \cite{Ding2017}.}

With all conditions verified, \cite{Ding2017} yields the stated result. The normalizing variance $v_I(w^S)$ is precisely the Neyman conservative variance formula for completely randomized experiments, which is the form given in \cite{Ding2017} for two-arm CRD.
\end{proof}

\subsubsection{CLT for Neyman-style Studentized Statistic}\label{app:clt-neyman}

We now present the main asymptotic result for the studentized buyer-spillover statistic under two-sided complete randomization.

\begin{lemma}[Two-term CLT for Neyman-studentized buyer-spillover statistic under two-sided complete randomization]
\label{lem:two-term-clt-wb}
Let there be $I=I_N$ buyers and $J=J_N$ sellers. For each buyer--seller pair $(i,j)$ and each $(b,s)\in\{0,1\}^2$, fix dyad-level potential outcomes $Y_{ij}(b,s)\in\mathbb R$.

\textbf{Two-sided complete randomization with fixed margins.}
Draw the seller assignment $w^S=(w_1^S,\ldots,w_J^S)\in\{0,1\}^J$ uniformly over all vectors with exactly $J_1$ ones and $J_0:=J-J_1$ zeros. Independently draw the buyer assignment $w^B=(w_1^B,\ldots,w_I^B)\in\{0,1\}^I$ uniformly over all vectors with exactly $I_1$ ones and $I_0:=I-I_1$ zeros.

Write $p_1:=I_1/I\to\pi_1\in(0,1)$, $p_0:=I_0/I\to\pi_0\in(0,1)$, $p_2:=J_1/J\to\pi_2\in(0,1)$, and assume
\[
I\to\infty,\qquad J\to\infty,\qquad \frac{I}{J}\to\lambda\in[0,\infty).
\]

\textbf{Neyman variance and studentized statistic.}
Define the Neyman (conservative) variance estimator
\[
V^{B}(w^B,w^S)
:=
\frac{s_1^2(w^B,w^S)}{I_1}+\frac{s_0^2(w^B,w^S)}{I_0},
\]
where $s_1^2(w^B,w^S)=s_1^2$ and $s_0^2(w^B,w^S)=s_0^2$ with $s_1^2, s_0^2$ defined in Procedure \ref{procedure:weak-spillover}, as we use notation $(w^B,w^S)$ to emphasize the dependence on buyer and seller side randomization. 

Define the studentized statistic
\[
T^{WB}(w^B,w^S):=\frac{T^{B}(w^B,w^S)}{\sqrt{V^{B}(w^B,w^S)}}.
\]

\textbf{Two-term decomposition.}
Define
\[
T^{WB}(w^B,w^S)=A_N+B_N,
\qquad
A_N:=\frac{\tau(w^S)}{\sqrt{V^{B}(w^B,w^S)}},
\quad
B_N:=\frac{T^{B}(w^B,w^S)-\tau(w^S)}{\sqrt{V^{B}(w^B,w^S)}}.
\]

\textbf{Asymptotic variances.}
Define the deterministic limit
\[
D:=\frac{S_1^2}{\pi_1}+\frac{S_0^2}{\pi_0},
\]
and
\[
\sigma_A^2:=\lambda\cdot \frac{\pi_2}{1-\pi_2}\cdot\frac{S_{\mu^\Delta}^2}{D},
\qquad
\sigma_B^2:=1-\frac{S_\delta^2}{D}.
\]

Then under Assumptions~\ref{assump:bounded-fourth-moments}--\ref{assump:uniform-integrability} and the null hypothesis $H_0^{wb,1}:\tau=0$,
\[
A_N \Rightarrow N(0,\sigma_A^2),
\qquad
B_N \Rightarrow N(0,\sigma_B^2),
\qquad
(A_N,B_N)\Rightarrow(A,B)\ \text{with }A\perp B,
\]
and therefore
\[
T^{WB}(w^B,w^S)
\Rightarrow
N\!\left(0,\ \sigma_A^2+\sigma_B^2\right).
\]
\end{lemma}

\begin{proof}
\textbf{Step 0: Two-stage structure and $\sigma$-fields.}
Let $\mathcal G_N:=\sigma(w^S)$.
The seller assignment $w^S$ is drawn first; equivalently, $S_0(w^S)$ is an SRSWOR subset of $[J]$ of size $J_0$.
Conditional on $w^S$ (i.e., conditional on $\mathcal G_N$), the buyer assignment $w^B$ is a two-arm complete randomization with fixed margins $(I_1,I_0)$ and is independent of $w^S$.

\textbf{Step 1: Unconditional LLN for the denominator $V^{B}(w^B,w^S)$.}

\textbf{1(a) SRSWOR mean properties for seller focalization.}
Fix a buyer $i$ and $z\in\{0,1\}$ and consider the finite population $\{Y_{ij}(z,0)\}_{j=1}^J$.
Because $S_0(w^S)=\{j:w_j^S=0\}$ is an SRSWOR subset of size $J_0$, the focal mean
\[
\bar Y_i^B(z;w^S)=\frac{1}{J_0}\sum_{j:w_j^S=0}Y_{ij}(z,0)
\]
is an SRSWOR mean. Therefore
\[
\E\big[\bar Y_i^B(z;w^S)\big]=\mu_i(z),
\]
and the SRSWOR variance formula gives
\begin{equation}
\Var\big(\bar Y_i^B(z;w^S)\big)
=
\Big(1-\frac{J_0}{J}\Big)\frac{S_{i,z}^2}{J_0},
\label{eq:var-srswor}
\end{equation}
where $S_{i,z}^2$ is the finite-population variance over sellers defined in Assumption~\ref{assump:moment-control-sellers}.
Since the SRSWOR mean is unbiased,
\[
\E\!\left[\big(\bar Y_i^B(z;w^S)-\mu_i(z)\big)^2\right]
=
\Var\big(\bar Y_i^B(z;w^S)\big).
\]
Combining with \eqref{eq:var-srswor},
\begin{equation}
\E\!\left[\big(\bar Y_i^B(z;w^S)-\mu_i(z)\big)^2\right]
=
\Big(1-\frac{J_0}{J}\Big)\frac{S_{i,z}^2}{J_0}.
\label{eq:l2i}
\end{equation}

\textbf{1(b) Average $L^2$ seller-averaging error vanishes.}
Average \eqref{eq:l2i} over buyers $i=1,\dots,I$:
\begin{align}
\frac{1}{I}\sum_{i=1}^I
\E\!\left[\big(\bar Y_i^B(z;w^S)-\mu_i(z)\big)^2\right]
&=
\Big(1-\frac{J_0}{J}\Big)\frac{1}{J_0}\cdot\frac{1}{I}\sum_{i=1}^I S_{i,z}^2.
\label{eq:avg-l2}
\end{align}
By Assumption~\ref{assump:moment-control-sellers}, $\sup_N I^{-1}\sum_i S_{i,z}^2<\infty$, and $J_0\asymp J$, so the right-hand side is $O(J^{-1})\to 0$.
Define the nonnegative random variable
\[
\Delta_{z,N}:=\frac{1}{I}\sum_{i=1}^I\big(\bar Y_i^B(z;w^S)-\mu_i(z)\big)^2.
\]
Then $\E[\Delta_{z,N}]\to 0$ by \eqref{eq:avg-l2}.
By Markov's inequality, for any $\epsilon>0$,
\[
\Pr(\Delta_{z,N}>\epsilon)\le \frac{\E[\Delta_{z,N}]}{\epsilon}\to 0,
\]
hence
\begin{equation}
\Delta_{z,N}\xrightarrow{p}0\qquad (z\in\{0,1\}).
\label{eq:Delta}
\end{equation}

\textbf{1(c) Stability of buyer-level means.}
By Cauchy--Schwarz,
\[
\left|
\frac{1}{I}\sum_{i=1}^I\big(\bar Y_i^B(z;w^S)-\mu_i(z)\big)
\right|^2
\le
\frac{1}{I}\sum_{i=1}^I\big(\bar Y_i^B(z;w^S)-\mu_i(z)\big)^2
=\Delta_{z,N}.
\]
Thus \eqref{eq:Delta} implies
\[
\bar Y_{\cdot}^B(z;w^S)-\bar\mu(z)\xrightarrow{p}0.
\]

\textbf{1(d) Stability of second moments.}
Write
\[
\bar Y_i^B(z;w^S)^2-\mu_i(z)^2
=
\big(\bar Y_i^B(z;w^S)-\mu_i(z)\big)\big(\bar Y_i^B(z;w^S)+\mu_i(z)\big).
\]
Averaging over $i$ and applying Cauchy--Schwarz gives
\begin{align}
\left|
\frac{1}{I}\sum_{i=1}^I\bar Y_i^B(z;w^S)^2
-
\frac{1}{I}\sum_{i=1}^I\mu_i(z)^2
\right|
&\le
\Big(\Delta_{z,N}\Big)^{1/2}
\left(
\frac{1}{I}\sum_{i=1}^I\big(\bar Y_i^B(z;w^S)+\mu_i(z)\big)^2
\right)^{1/2}.
\label{eq:2nd-mom-bound}
\end{align}
The first factor $\Delta_{z,N}^{1/2}\to_p 0$ by \eqref{eq:Delta}.
It remains to show the second factor is $O_p(1)$.

Using $(x+y)^2\le 2x^2+2y^2$,
\[
\frac{1}{I}\sum_{i=1}^I(\bar Y_i^B(z;w^S)+\mu_i(z))^2
\le
\frac{2}{I}\sum_{i=1}^I\bar Y_i^B(z;w^S)^2+\frac{2}{I}\sum_{i=1}^I\mu_i(z)^2.
\]
By Jensen,
\[
\mu_i(z)^4=\Big(\frac{1}{J}\sum_{j=1}^J Y_{ij}(z,0)\Big)^4 \le \frac{1}{J}\sum_{j=1}^J Y_{ij}(z,0)^4,
\]
and similarly
\[
\bar Y_i^B(z;w^S)^4=\Big(\frac{1}{J_0}\sum_{j:w_j^S=0} Y_{ij}(z,0)\Big)^4 \le \frac{1}{J_0}\sum_{j:w_j^S=0} Y_{ij}(z,0)^4.
\]
Averaging over $i$ and using Assumption~\ref{assump:bounded-fourth-moments} implies
\[
\sup_N \frac{1}{I}\sum_{i=1}^I \mu_i(z)^4<\infty,
\qquad
\sup_N \E\!\left[\frac{1}{I}\sum_{i=1}^I \bar Y_i^B(z;w^S)^4\right]<\infty.
\]
Hence, by Markov's inequality (tightness), both $(I^{-1}\sum_i\mu_i(z)^2)$ and $(I^{-1}\sum_i \bar Y_i^B(z;w^S)^2)$ are $O_p(1)$, which shows the second factor in \eqref{eq:2nd-mom-bound} is $O_p(1)$.
Therefore the left-hand side of \eqref{eq:2nd-mom-bound} converges to $0$ in probability:
\begin{equation}
\frac{1}{I}\sum_{i=1}^I\bar Y_i^B(z;w^S)^2
-
\frac{1}{I}\sum_{i=1}^I\mu_i(z)^2
\xrightarrow{p}0.
\label{eq:2nd-mom-stable}
\end{equation}

\textbf{1(e) Variance stability: $S_z^2(w^S)\to_p S_z^2$.}
Use the identity, valid for any $x_1,\dots,x_I$,
\[
\frac{1}{I-1}\sum_{i=1}^I(x_i-\bar x)^2=\frac{I}{I-1}\left(\frac{1}{I}\sum_{i=1}^I x_i^2-\bar x^2\right).
\]
Apply it to $x_i=\bar Y_i^B(z;w^S)$ and to $x_i=\mu_i(z)$ to get
\begin{align*}
S_z^2(w^S)-S_z^2
&=
\frac{I}{I-1}\Bigg[
\left(\frac{1}{I}\sum_{i=1}^I \bar Y_i^B(z;w^S)^2 - \bar Y_\cdot^B(z;w^S)^2\right)
-
\left(\frac{1}{I}\sum_{i=1}^I \mu_i(z)^2 - \bar\mu(z)^2\right)
\Bigg]
\\
&=
\frac{I}{I-1}\Bigg[
\left(\frac{1}{I}\sum_{i=1}^I \bar Y_i^B(z;w^S)^2-\frac{1}{I}\sum_{i=1}^I \mu_i(z)^2\right)
-
\left(\bar Y_\cdot^B(z;w^S)^2-\bar\mu(z)^2\right)
\Bigg].
\end{align*}
The first bracket converges to $0$ in probability by \eqref{eq:2nd-mom-stable}.
For the second bracket,
\[
\bar Y_\cdot^B(z;w^S)^2-\bar\mu(z)^2
=
\big(\bar Y_\cdot^B(z;w^S)-\bar\mu(z)\big)\big(\bar Y_\cdot^B(z;w^S)+\bar\mu(z)\big).
\]
The first factor $\to_p 0$ by Step 1(c); the second factor is $O_p(1)$ by the same tightness argument as in Step 1(d).
Hence $\bar Y_\cdot^B(z;w^S)^2-\bar\mu(z)^2\to_p 0$, and since $I/(I-1)\to 1$ we conclude
\begin{equation}
S_z^2(w^S)\xrightarrow{p}S_z^2\qquad (z\in\{0,1\}).
\label{eq:Sz-stable}
\end{equation}

\textbf{1(f) Denominator convergence.}
Define the finite-population limit
\[
D:=\frac{S_1^2}{\pi_1}+\frac{S_0^2}{\pi_0}.
\]
Since $p_z=I_z/I\to\pi_z$ and \eqref{eq:Sz-stable} holds,
\[
I\left(\frac{S_1^2(w^S)}{I_1}+\frac{S_0^2(w^S)}{I_0}\right)
=
\frac{S_1^2(w^S)}{p_1}+\frac{S_0^2(w^S)}{p_0}
\xrightarrow{p}
\frac{S_1^2}{\pi_1}+\frac{S_0^2}{\pi_0}
=D.
\]
Finally, the sample-variance version $V^{B}(w^B,w^S)$ differs from $S_1^2(w^S)/I_1+S_0^2(w^S)/I_0$ only by replacing finite-population variances with within-arm sample variances; under complete randomization, these are consistent for their finite-population counterparts, and in particular
\begin{equation}
I\,V^{B}(w^B,w^S)\xrightarrow{p}D.
\label{eq:denom-conv}
\end{equation}

\textbf{Step 2: Limit law for $A_N=\tau(w^S)/\sqrt{V^{B}}$.}

\textbf{2(a) Rewrite $\tau(w^S)$ as an SRSWOR mean over sellers.}
Let $\Delta_{ij}:=Y_{ij}(1,0)-Y_{ij}(0,0)$ and define seller-level averages
\[
\mu_j^\Delta:=\frac{1}{I}\sum_{i=1}^I \Delta_{ij}.
\]
Then
\[
\tau(w^S)
=
\frac{1}{IJ_0}\sum_{i=1}^I\sum_{j=1}^J \Delta_{ij}(1-w_j^S)
=
\frac{1}{J_0}\sum_{j:w_j^S=0}\mu_j^\Delta.
\]
Also,
\[
\frac{1}{J}\sum_{j=1}^J\mu_j^\Delta
=
\frac{1}{IJ}\sum_{i=1}^I\sum_{j=1}^J \Delta_{ij}
=\tau.
\]
Under $H_0^{wb}:\tau=0$, the seller-population mean of $\{\mu_j^\Delta\}$ equals $0$.

\textbf{2(b) Apply the seller-side SRSWOR CLT and rescale.}
By Lemma~\ref{lem:seller-side-clt},
\[
\sqrt{J}\,\tau(w^S)
=
\sqrt{J}\cdot \frac{1}{J_0}\sum_{j:w_j^S=0}\mu_j^\Delta
\Rightarrow
N\!\left(0,\ \frac{\pi_2}{1-\pi_2}\,S_{\mu^\Delta}^2\right).
\]
Now write
\[
A_N
=
\frac{\sqrt{I}\,\tau(w^S)}{\sqrt{I\,V^{B}}}
=
\sqrt{\frac{I}{J}}\cdot \frac{\sqrt{J}\,\tau(w^S)}{\sqrt{I\,V^{B}}}.
\]
Using $I/J\to\lambda$ and $I\,V^B\to_p D$ from \eqref{eq:denom-conv}, Slutsky's theorem yields
\[
A_N\Rightarrow
N\!\left(0,\ \lambda\cdot \frac{\pi_2}{1-\pi_2}\cdot\frac{S_{\mu^\Delta}^2}{D}\right)
=:N(0,\sigma_A^2).
\]

\textbf{Step 3: Limit law for $B_N=(T^{B}-\tau(w^S))/\sqrt{V^{B}}$.}

\textbf{3(a) Factor $B_N$ into a CLT part and a variance-ratio part.}
Define
\[
Z_N(w^S):=\frac{T^{B}(w^B,w^S)-\tau(w^S)}{\sqrt{v_I(w^S)}},
\qquad
R_N(w^S):=\sqrt{\frac{v_I(w^S)}{V^{B}(w^B,w^S)}}.
\]
Then
\begin{equation}
B_N=Z_N(w^S)\cdot R_N(w^S).
\label{eq:BN-factor}
\end{equation}
By Lemma~\ref{lem:buyer-side-clt}, conditional on $w^S$,
\begin{equation}
Z_N(w^S)\Rightarrow N(0,1).
\label{eq:ZN-clt}
\end{equation}

\textbf{3(b) Show $R_N(w^S)\to_p \sqrt{1-S_\delta^2/D}$.}
By definition,
\[
v_I(w^S)
=
\frac{S_1^2(w^S)}{I_1}+\frac{S_0^2(w^S)}{I_0}-\frac{S_\tau^2(w^S)}{I}.
\]
Also
\[
\frac{S_1^2(w^S)}{I_1}+\frac{S_0^2(w^S)}{I_0}
=
\frac{1}{I}\left(\frac{S_1^2(w^S)}{p_1}+\frac{S_0^2(w^S)}{p_0}\right).
\]
From Step 1(e), $S_z^2(w^S)\to_p S_z^2$ and hence
\[
\frac{S_1^2(w^S)}{p_1}+\frac{S_0^2(w^S)}{p_0}\to_p D,
\quad\text{so}\quad
\frac{S_1^2(w^S)}{I_1}+\frac{S_0^2(w^S)}{I_0}
=
\frac{D+o_p(1)}{I}.
\]
It remains to control $S_\tau^2(w^S)$.

Define the focal buyer-level effects
\[
\delta_i(w^S):=\bar Y_i^B(1;w^S)-\bar Y_i^B(0;w^S),
\]
and the full-seller buyer-level effects
\[
\delta_i:=\mu_i(1)-\mu_i(0).
\]
Then
\[
\delta_i(w^S)-\delta_i
=
\big(\bar Y_i^B(1;w^S)-\mu_i(1)\big)
-
\big(\bar Y_i^B(0;w^S)-\mu_i(0)\big).
\]
Using $(x-y)^2\le 2x^2+2y^2$ and \eqref{eq:Delta} for $z=1,0$,
\[
\frac{1}{I}\sum_{i=1}^I(\delta_i(w^S)-\delta_i)^2
\le
\frac{2}{I}\sum_{i=1}^I(\bar Y_i^B(1;w^S)-\mu_i(1))^2
+
\frac{2}{I}\sum_{i=1}^I(\bar Y_i^B(0;w^S)-\mu_i(0))^2
\xrightarrow{p}0.
\]
Denote $\bar\delta(w^S):=I^{-1}\sum_i\delta_i(w^S)=\tau(w^S)$ and $\bar\delta:=I^{-1}\sum_i\delta_i=\tau$.
Applying the same variance-stability argument as in Step 1(e) (with $x_i=\delta_i(w^S)$ and $x_i=\delta_i$) yields
\begin{equation}
S_\tau^2(w^S)
=
\frac{1}{I-1}\sum_{i=1}^I\big(\delta_i(w^S)-\bar\delta(w^S)\big)^2
\xrightarrow{p}
\frac{1}{I-1}\sum_{i=1}^I(\delta_i-\bar\delta)^2
=:S_\delta^2.
\label{eq:Stau-to-Sdelta}
\end{equation}

Now use the identity
\[
v_I(w^S)
=
\left(\frac{S_1^2(w^S)}{I_1}+\frac{S_0^2(w^S)}{I_0}\right)-\frac{S_\tau^2(w^S)}{I}.
\]
Multiply both sides by $I$:
\[
I\,v_I(w^S)
=
\frac{S_1^2(w^S)}{p_1}+\frac{S_0^2(w^S)}{p_0}-S_\tau^2(w^S)
\xrightarrow{p}
D-S_\delta^2,
\]
using $S_z^2(w^S)\to_p S_z^2$ and \eqref{eq:Stau-to-Sdelta}.
Also from \eqref{eq:denom-conv}, $I\,V^{B}\to_p D$.
Therefore,
\[
R_N(w^S)^2
=
\frac{v_I(w^S)}{V^{B}}
=
\frac{I\,v_I(w^S)}{I\,V^{B}}
\xrightarrow{p}
\frac{D-S_\delta^2}{D}
=
1-\frac{S_\delta^2}{D}.
\]
Taking square roots gives
\begin{equation}
R_N(w^S)\xrightarrow{p}\sqrt{1-\frac{S_\delta^2}{D}}.
\label{eq:R-limit}
\end{equation}

\textbf{3(c) Conclude $B_N\Rightarrow N(0,\sigma_B^2)$.}
From \eqref{eq:BN-factor}, conditional on $w^S$, $Z_N(w^S)\Rightarrow N(0,1)$ by \eqref{eq:ZN-clt}, and $R_N(w^S)\to_p \sqrt{1-S_\delta^2/D}$ by \eqref{eq:R-limit}. A conditional Slutsky theorem implies that conditional on $w^S$,
\[
B_N \Rightarrow N\!\left(0,\ 1-\frac{S_\delta^2}{D}\right).
\]
Since the limiting variance is deterministic, this implies the unconditional convergence
\[
B_N \Rightarrow N\!\left(0,\ 1-\frac{S_\delta^2}{D}\right)=:N(0,\sigma_B^2).
\]

\textbf{Step 4: Joint convergence and asymptotic independence.}
$A_N$ is $\mathcal G_N$-measurable (it is a function of $w^S$ and $V^B$, whose limit depends only on $w^S$ through concentrating quantities), while the randomness generating $B_N$ comes only from the buyer randomization $w^B$, which is independent of $w^S$.

For $t_1,t_2\in\mathbb R$, consider the joint characteristic function
\[
\phi_N(t_1,t_2):=\E\Big[\exp\{i t_1 A_N+i t_2 B_N\}\Big].
\]
Using the tower property and $\mathcal G_N$-measurability of $A_N$,
\[
\phi_N(t_1,t_2)
=
\E\Big[
\exp\{i t_1 A_N\}\,
\E\big[\exp\{i t_2 B_N\}\mid \mathcal G_N\big]
\Big].
\]
From Step 3(c), conditional on $\mathcal G_N$ the distribution of $B_N$ converges to $N(0,\sigma_B^2)$ with deterministic variance, so
\[
\E\big[\exp\{i t_2 B_N\}\mid \mathcal G_N\big]
\xrightarrow{p}
\exp\!\left(-\tfrac{1}{2}t_2^2\sigma_B^2\right).
\]
Also $\big|\E[\exp\{i t_2 B_N\}\mid \mathcal G_N]\big|\le 1$, so bounded convergence yields
\[
\phi_N(t_1,t_2)
-
\E\big[\exp\{i t_1 A_N\}\big]\exp\!\left(-\tfrac{1}{2}t_2^2\sigma_B^2\right)
\to 0.
\]
From Step 2, $A_N\Rightarrow N(0,\sigma_A^2)$, hence
\[
\E\big[\exp\{i t_1 A_N\}\big]\to \exp\!\left(-\tfrac{1}{2}t_1^2\sigma_A^2\right).
\]
Therefore
\[
\phi_N(t_1,t_2)\to
\exp\!\left(-\tfrac{1}{2}t_1^2\sigma_A^2\right)
\exp\!\left(-\tfrac{1}{2}t_2^2\sigma_B^2\right),
\]
the characteristic function of independent normals. Hence
\[
(A_N,B_N)\Rightarrow(A,B)\quad\text{with }A\perp B,\ A\sim N(0,\sigma_A^2),\ B\sim N(0,\sigma_B^2).
\]

\textbf{Step 5: Sum.}
Since $T^{WB}=A_N+B_N$ and the limit is the sum of independent normals,
\[
T^{WB}(w^B,w^S)
\Rightarrow
N\!\left(0,\ \sigma_A^2+\sigma_B^2\right)
=
N\!\left(0,\ 1-\frac{S_\delta^2}{D}
+\lambda\cdot \frac{\pi_2}{1-\pi_2}\cdot \frac{S_{\mu^\Delta}^2}{D}\right).
\]
\end{proof}

\subsubsection{Proof of Theorem~\ref{thm:validity-weak-twoway}}\label{app:proof-validity-weak-twoway}

\begin{proof}
Fix a sequence of finite populations satisfying Assumption~\ref{ass:local-interference} and Assumptions~\ref{assump:bounded-fourth-moments}--\ref{assump:uniform-integrability}. Throughout, the conditioning event $\mathcal{C}=(\mathcal{U},\mathcal{W})$ is as in Procedure~\ref{procedure:weak-spillover}, and the randomization distribution is $p(\W\mid \mathcal{C})$ induced by permuting buyer labels (holding $w^{S}$ fixed at $w^{S,obs}$).

Define the two-way studentized statistic
\[
T^{WB,TW}(\W\mid \mathbf{Y},\mathcal{C})
=
\frac{T^{B}(\W\mid \mathbf{Y},\mathcal{C})}{\sqrt{V^{TW}(\W\mid \mathbf{Y},\mathcal{C})}}
\quad\text{with}\quad
V^{TW}=V^{B}+V^{S},
\]
where $V^{B}$ is the standard Neyman variance in Procedure~\ref{procedure:weak-spillover} and $V^{S}$ is defined in \eqref{eq:VS}.

Let the (idealized) one-sided randomization $p$-value be
\[
p_N
:=
E\!\left[
\mathbb{I}\left\{
T^{WB,TW}(\W\mid \mathbf{Y}^{obs},\mathcal{C})
\ge
T^{WB,TW}(\W^{obs}\mid \mathbf{Y}^{obs},\mathcal{C})
\right\}
\right],
\]
where the expectation is with respect to $p(\W\mid \mathcal{C})$. The theorem is proved by establishing two ingredients:
\begin{enumerate}
\item[(I)] conditional on $(\mathbf{Y}^{obs},\mathcal{C})$, the randomization distribution of $T^{WB,TW}(\W\mid \mathbf{Y}^{obs},\mathcal{C})$ converges to $N(0,1)$;
\item[(II)] under $H_0^{wb,1}$, the sampling distribution of $T^{WB,TW}(\W^{obs}\mid \mathbf{Y}^{obs},\mathcal{C})$ converges to a centered normal with variance weakly smaller than $1$.
\end{enumerate}
Given (I)--(II), asymptotic validity follows by the standard stochastic domination argument.

\textbf{Step 1: Randomization distribution.}
Condition on $(\mathbf{Y}^{obs},\mathcal{C})$ and write $P_\pi(\cdot)$ and $E_\pi[\cdot]$ for probability and expectation under the permutation distribution $p(\W\mid\mathcal{C})$ (i.e., permuting $w^B$ with fixed margins $(I_1,I_0)$ while holding $w^{S,obs}$ fixed). All $O_p(\cdot)$ and $o_p(\cdot)$ statements in this step are with respect to $P_\pi$, conditional on $(\mathbf{Y}^{obs},\mathcal{C})$.

Treat the buyer-level focal means $\{\bar Y_i^B\}_{i=1}^I$ (computed using the fixed $w^{S,obs}$) as a fixed finite population. Under $P_\pi$, $w^B$ is a complete randomization over buyers with fixed margins $(I_1,I_0)$. Hence, by the finite-population CLT for complete randomization and standard studentization arguments (e.g., \citealp{Ding2017}), the studentized statistic
\[
T^{WB}(\W\mid \mathbf{Y}^{obs},\mathcal{C})
=
\frac{T^{B}(\W\mid \mathbf{Y}^{obs},\mathcal{C})}{\sqrt{V^{B}(\W\mid \mathbf{Y}^{obs},\mathcal{C})}}
\]
has a conditional randomization distribution converging to $N(0,1)$, in the sense that
\[
\sup_{t\in\mathbb{R}}
\left|
P_\pi\!\left(
T^{WB}(\W\mid \mathbf{Y}^{obs},\mathcal{C})\le t 
\right)
-
\Phi(t)
\right|
\;\to\; 0
\quad\text{in probability (w.r.t.\ the sampling distribution).}
\]

It remains to show that replacing $V^{B}$ by $V^{TW}=V^{B}+V^{S}$ does not change the randomization limit. We do so by establishing the relative rates
\[
V^{B}(\W\mid \mathbf{Y}^{obs},\mathcal{C}) = O_p(I^{-1})
\qquad\text{and}\qquad
V^{S}(\W\mid \mathbf{Y}^{obs},\mathcal{C}) = O_p\!\left(\frac{1}{IJ}\right),
\]
which implies $V^{S}/V^{B}=O_p(J^{-1})\to 0$ under $P_\pi$.

\textbf{Step 1(a): Rate of $V^{S}$ under $P_\pi$.}
Fix a control seller $j$ with $w^{S,obs}_j=0$ and define the fixed finite population across buyers
\[
y_{ij}:=Y^{obs}_{ij}\qquad(i=1,\dots,I),
\qquad
\bar y_{\cdot j}:=\frac{1}{I}\sum_{i=1}^I y_{ij},
\qquad
S_{j,obs}^2:=\frac{1}{I-1}\sum_{i=1}^I (y_{ij}-\bar y_{\cdot j})^2.
\]
Under $P_\pi$, the treated buyers $\{i:w_i^B=1\}$ form an SRSWOR subset of $[I]$ of size $I_1$ and the control buyers form the complement of size $I_0$. Writing
\[
\bar y_{1j}(\W):=\frac{1}{I_1}\sum_{i:w_i^B=1}y_{ij},
\qquad
\bar y_{0j}(\W):=\frac{1}{I_0}\sum_{i:w_i^B=0}y_{ij},
\]
we can rewrite \eqref{eq:mu-hat-delta} as
\[
\hat{\mu}_j^\Delta(\W\mid \mathbf{Y}^{obs},\mathcal{C})=\bar y_{1j}(\W)-\bar y_{0j}(\W).
\]
Because both $\bar y_{1j}(\W)$ and $\bar y_{0j}(\W)$ are SRSWOR means from the same finite population, their expectations equal $\bar y_{\cdot j}$. Moreover, the standard SRSWOR variance identities give
\[
\Var_\pi\!\big(\bar y_{1j}(\W)\big)=\left(1-\frac{I_1}{I}\right)\frac{S_{j,obs}^2}{I_1},
\qquad
\Var_\pi\!\big(\bar y_{0j}(\W)\big)=\left(1-\frac{I_0}{I}\right)\frac{S_{j,obs}^2}{I_0},
\]
and
\[
\cov_\pi\!\big(\bar y_{1j}(\W),\bar y_{0j}(\W)\big)=-\frac{S_{j,obs}^2}{I}.
\]
Combining these three identities yields the exact variance formula
\begin{equation}\label{eq:var-muhat-perm}
\Var_\pi\!\left(\hat{\mu}_j^\Delta(\W\mid \mathbf{Y}^{obs},\mathcal{C})\right)
=
\left(\frac{1}{I_1}+\frac{1}{I_0}\right)S_{j,obs}^2.
\end{equation}
In particular,
\[
E_\pi\!\left[\big(\hat{\mu}_j^\Delta(\W\mid \mathbf{Y}^{obs},\mathcal{C})\big)^2\right]
=
\Var_\pi\!\left(\hat{\mu}_j^\Delta(\W\mid \mathbf{Y}^{obs},\mathcal{C})\right)
=
\left(\frac{1}{I_1}+\frac{1}{I_0}\right)S_{j,obs}^2
=O\!\left(\frac{1}{I}\right),
\]
since $I_1\asymp I$ and $I_0\asymp I$. Therefore,
\begin{equation}\label{eq:muhat-rate-perm}
\hat{\mu}_j^\Delta(\W\mid \mathbf{Y}^{obs},\mathcal{C}) = O_p(I^{-1/2})
\qquad\text{under }P_\pi.
\end{equation}

Now consider the sample variance across the $J_0$ control sellers,
\[
s^2_{\hat{\mu}^{\Delta}}(\W\mid \mathbf{Y}^{obs},\mathcal{C})
=
\frac{1}{J_0-1}\sum_{j:w^{S,obs}_j=0}
\left(\hat{\mu}_j^\Delta-\bar{\hat{\mu}}^\Delta\right)^2,
\qquad
\bar{\hat{\mu}}^\Delta:=\frac{1}{J_0}\sum_{j:w^{S,obs}_j=0}\hat{\mu}_j^\Delta.
\]
Using the deterministic inequality
\[
s^2_{\hat{\mu}^{\Delta}}
\le
\frac{J_0}{J_0-1}\cdot \frac{1}{J_0}\sum_{j:w^{S,obs}_j=0}\left(\hat{\mu}_j^\Delta\right)^2,
\]
taking $E_\pi[\cdot]$ on both sides and applying \eqref{eq:var-muhat-perm} yields
\begin{align*}
E_\pi\!\left[s^2_{\hat{\mu}^{\Delta}}(\W\mid \mathbf{Y}^{obs},\mathcal{C})\right]
&\le
\frac{J_0}{J_0-1}\cdot \frac{1}{J_0}
\sum_{j:w^{S,obs}_j=0}
E_\pi\!\left[\left(\hat{\mu}_j^\Delta\right)^2\right] \\
&=
\frac{J_0}{J_0-1}\cdot \frac{1}{J_0}
\sum_{j:w^{S,obs}_j=0}
\left(\frac{1}{I_1}+\frac{1}{I_0}\right)S_{j,obs}^2
=
O\!\left(\frac{1}{I}\right),
\end{align*}
where the last order uses $I_1\asymp I_0\asymp I$ and that $\frac{1}{J_0}\sum_{j:w^{S,obs}_j=0}S_{j,obs}^2$ is $O_p(1)$ under the sampling distribution (which follows from the moment assumptions). Since $s^2_{\hat{\mu}^{\Delta}}\ge 0$, Markov's inequality implies
\begin{equation}\label{eq:s2-muhat-rate-perm}
s^2_{\hat{\mu}^{\Delta}}(\W\mid \mathbf{Y}^{obs},\mathcal{C}) = O_p(I^{-1})
\qquad\text{under }P_\pi.
\end{equation}
Substituting \eqref{eq:s2-muhat-rate-perm} into the definition \eqref{eq:VS} gives
\[
V^{S}(\W\mid \mathbf{Y}^{obs},\mathcal{C})
=
\left(1-\frac{J_0}{J}\right)\frac{s^2_{\hat{\mu}^{\Delta}}(\W\mid \mathbf{Y}^{obs},\mathcal{C})}{J_0}
=
O_p\!\left(\frac{1}{I}\cdot\frac{1}{J_0}\right)
=
O_p\!\left(\frac{1}{IJ}\right)
\qquad\text{under }P_\pi,
\]
since $J_0\asymp J$.

\textbf{Step 1(b): Rate of $V^{B}$ under $P_\pi$.}
Conditional on $(\mathbf{Y}^{obs},\mathcal{C})$, the collection $\{\bar Y_i^B\}_{i=1}^I$ is fixed and the treated group is an SRSWOR subset of $[I]$ of size $I_1$. Let
\[
S_{\bar Y}^2:=\frac{1}{I-1}\sum_{i=1}^I\left(\bar Y_i^B-\bar Y_\cdot^B\right)^2,
\qquad
\bar Y_\cdot^B:=\frac{1}{I}\sum_{i=1}^I\bar Y_i^B.
\]
Under SRSWOR, the within-arm sample variances $s_1^2(\W\mid \mathbf{Y}^{obs},\mathcal{C})$ and $s_0^2(\W\mid \mathbf{Y}^{obs},\mathcal{C})$ are unbiased for $S_{\bar Y}^2$; in particular,
\[
E_\pi\!\left[s_1^2(\W\mid \mathbf{Y}^{obs},\mathcal{C})\right]=S_{\bar Y}^2,
\qquad
E_\pi\!\left[s_0^2(\W\mid \mathbf{Y}^{obs},\mathcal{C})\right]=S_{\bar Y}^2.
\]
Therefore,
\[
E_\pi\!\left[V^{B}(\W\mid \mathbf{Y}^{obs},\mathcal{C})\right]
=
\left(\frac{1}{I_1}+\frac{1}{I_0}\right)S_{\bar Y}^2
=
O\!\left(\frac{1}{I}\right),
\]
and since $V^{B}\ge 0$, Markov's inequality yields
\begin{equation}\label{eq:VB-rate-perm}
V^{B}(\W\mid \mathbf{Y}^{obs},\mathcal{C})=O_p(I^{-1})
\qquad\text{under }P_\pi.
\end{equation}
Moreover, on events where $S_{\bar Y}^2$ is bounded away from zero (which occur with probability approaching one under the sampling distribution by the nondegeneracy condition in Assumption \ref{assump:buyer-nondegeneracy}), the same SRSWOR LLN implies
\[
I\,V^{B}(\W\mid \mathbf{Y}^{obs},\mathcal{C})
=
\left(\frac{I}{I_1}+\frac{I}{I_0}\right)S_{\bar Y}^2 + o_p(1)
\asymp 1
\qquad\text{under }P_\pi,
\]
so $1/V^{B}=O_p(I)$ under $P_\pi$ on those events.

\textbf{Step 1(c): Conclude randomization equivalence.}
Combining \eqref{eq:s2-muhat-rate-perm} and \eqref{eq:VB-rate-perm} gives
\[
\frac{V^{S}(\W\mid \mathbf{Y}^{obs},\mathcal{C})}{V^{B}(\W\mid \mathbf{Y}^{obs},\mathcal{C})}
=
O_p(J^{-1})
\to 0
\qquad\text{under }P_\pi.
\]
Consequently,
\[
T^{WB,TW}(\W\mid \mathbf{Y}^{obs},\mathcal{C})
=
T^{WB}(\W\mid \mathbf{Y}^{obs},\mathcal{C})
\cdot
\left(1+\frac{V^{S}}{V^{B}}\right)^{-1/2}
=
T^{WB}(\W\mid \mathbf{Y}^{obs},\mathcal{C})\cdot(1+o_p(1)),
\]
under $P_\pi$, which implies that the conditional randomization distribution of $T^{WB,TW}$ also converges to $N(0,1)$.

\textbf{Step 2: Sampling distribution under $H_0^{wb,1}$.}
In this step we work under the \emph{sampling} probability induced by the two-sided complete randomization design. For notational brevity, we suppress the superscript ``obs'' and write
\[
(w^B,w^S)=\W^{obs},\qquad \mathbf{Y}^{obs}=\mathbf{Y}(\W^{obs}),
\]
and we abbreviate
\begin{align*}
    V^{B}(w^B,w^S)&:=V^{B}(\W^{obs}\mid \mathbf{Y}^{obs},\mathcal{C})\\
    V^{S}(w^B,w^S)&:=V^{S}(\W^{obs}\mid \mathbf{Y}^{obs},\mathcal{C}) \\
    s^2_{\hat{\mu}^{\Delta}}(w^B,w^S)&:=s^2_{\hat{\mu}^{\Delta}}(\W^{obs}\mid \mathbf{Y}^{obs},\mathcal{C})~.
\end{align*}
and similarly for $T^{B}$ and $T^{WB,TW}$.

Under $H_0^{wb,1}$ we have $\tau=0$ but $\tau(w^S)$ is random because it depends on the seller assignment. Write the decomposition
\[
T^{B}(w^B,w^S)
=
\tau(w^S)
+
\left\{T^{B}(w^B,w^S)-\tau(w^S)\right\}.
\]
By the two-stage randomization structure, $w^S$ is drawn first and then $w^B$ is drawn independently conditional on $w^S$. Under Assumptions~\ref{assump:bounded-fourth-moments}--\ref{assump:uniform-integrability}, the seller-side SRSWOR CLT yields
\[
\sqrt{J}\,\tau(w^S)\Rightarrow N\!\left(0,\frac{\pi_2}{1-\pi_2}S_{\mu^\Delta}^2\right),
\]
while the buyer-side CRD CLT (conditional on $w^S$) yields
\[
\sqrt{I}\,\left\{T^{B}(w^B,w^S)-\tau(w^S)\right\}\Rightarrow N\!\left(0,D-S_\delta^2\right),
\]
with asymptotic independence between the two terms. 

\textbf{Step 2(a): Control of $I\,V^{B}(w^B,w^S)$.}
By the LLN for the standard Neyman variance (as in the proof of the Neyman-style studentization result),
\[
I\,V^{B}(w^B,w^S)\xrightarrow{p} D,
\qquad
D:=\frac{S_1^2}{\pi_1}+\frac{S_0^2}{\pi_0}.
\]

\textbf{Step 2(b):Limit for $s^2_{\hat{\mu}^{\Delta}}(w^B,w^S)$ when $I$ and $J$ grow together.}
Recall $\mu_j^\Delta:=I^{-1}\sum_{i=1}^I\Delta_{ij}$ and that for each control seller $j\in S_0(w^S):=\{j:w_j^S=0\}$,
\[
\hat{\mu}_j^\Delta(w^B,w^S)
:=
\frac{1}{I_1}\sum_{i:w_i^B=1}Y_{ij}(1,0)
-
\frac{1}{I_0}\sum_{i:w_i^B=0}Y_{ij}(0,0),
\]
so that $\hat{\mu}_j^\Delta$ is the within-seller difference-in-means across buyers computed using $Y_{ij}(w_i^B,0)$ (because $w_j^S=0$ for $j\in S_0(w^S)$). Define the within-seller estimation error
\[
e_j:=\hat{\mu}_j^\Delta(w^B,w^S)-\mu_j^\Delta,
\qquad (j\in S_0(w^S)).
\]

\emph{(i) The buyer-randomization estimation error is negligible uniformly in $J$.}
Conditional on $w^S$, the assignment $w^B$ is a CRD over buyers with fixed margins $(I_1,I_0)$ and is independent of $w^S$. Standard finite-population randomization algebra implies
\[
E\!\left[e_j\mid w^S\right]=0
\quad\text{and}\quad
E\!\left[e_j^2\mid w^S\right]
=
\Var\!\left(\hat{\mu}_j^\Delta(w^B,w^S)\mid w^S\right)
=
\frac{S_{j,1}^2}{I_1}+\frac{S_{j,0}^2}{I_0}-\frac{S_{j,\Delta}^2}{I},
\]
where, for $z\in\{0,1\}$,
\[
S_{j,z}^2:=\frac{1}{I-1}\sum_{i=1}^I\left(Y_{ij}(z,0)-\bar Y_{\cdot j}(z,0)\right)^2,
\qquad
S_{j,\Delta}^2:=\frac{1}{I-1}\sum_{i=1}^I\left(\Delta_{ij}-\mu_j^\Delta\right)^2,
\]
and $\bar Y_{\cdot j}(z,0)=I^{-1}\sum_i Y_{ij}(z,0)$.

Since $I_1/I\to\pi_1\in(0,1)$ and $I_0/I\to\pi_0\in(0,1)$, there exists a constant $C<\infty$ such that for all large $N$,
\[
E\!\left[e_j^2\mid w^S\right]
\le
\frac{C}{I}\left(S_{j,1}^2+S_{j,0}^2+S_{j,\Delta}^2\right).
\]
By Assumption~\ref{assump:bounded-fourth-moments} and Jensen's inequality, the averages
\[
\frac{1}{J}\sum_{j=1}^J S_{j,1}^2,\quad \frac{1}{J}\sum_{j=1}^J S_{j,0}^2,\quad \frac{1}{J}\sum_{j=1}^J S_{j,\Delta}^2
\]
are $O(1)$ uniformly in $N$.\footnote{One convenient bound is
$\frac{1}{J}\sum_{j=1}^J S_{j,z}^2 \le \frac{2}{IJ}\sum_{i,j} Y_{ij}(z,0)^2$
and similarly for $S_{j,\Delta}^2$, and bounded fourth moments imply bounded second moments by Cauchy--Schwarz.}
Because $S_0(w^S)$ is an SRSWOR subset of $[J]$ with $J_0\asymp J$, the corresponding subset averages are also $O_p(1)$. Therefore,
\[
E\!\left[\frac{1}{J_0}\sum_{j:w_j^S=0} e_j^2 \,\bigg|\, w^S\right]
\le
\frac{C}{I}\cdot \frac{1}{J_0}\sum_{j:w_j^S=0}\left(S_{j,1}^2+S_{j,0}^2+S_{j,\Delta}^2\right)
=
O_p\!\left(\frac{1}{I}\right).
\]
Since the left-hand side is nonnegative, a conditional Markov inequality yields
\begin{equation}\label{eq:avg-ej2-vanish}
\frac{1}{J_0}\sum_{j:w_j^S=0} e_j^2 \xrightarrow{p} 0,
\end{equation}
and hence also $s_e^2:=\frac{1}{J_0-1}\sum_{j:w_j^S=0}(e_j-\bar e)^2\xrightarrow{p}0$ with $\bar e=J_0^{-1}\sum_{j:w_j^S=0}e_j$.

\emph{(ii) The cross-seller sample variance of $\mu_j^\Delta$ over control sellers is consistent.}
Let
\[
\bar\mu_0(w^S):=\frac{1}{J_0}\sum_{j:w_j^S=0}\mu_j^\Delta,
\qquad
\overline{\mu^2}_0(w^S):=\frac{1}{J_0}\sum_{j:w_j^S=0}(\mu_j^\Delta)^2,
\]
and denote the full-population counterparts
\[
\bar\mu:=\frac{1}{J}\sum_{j=1}^J\mu_j^\Delta=\tau,
\qquad
\overline{\mu^2}:=\frac{1}{J}\sum_{j=1}^J(\mu_j^\Delta)^2.
\]
Since $S_0(w^S)$ is SRSWOR, the SRSWOR variance formula gives
\[
\Var\!\big(\bar\mu_0(w^S)\big)=\left(1-\frac{J_0}{J}\right)\frac{S_{\mu^\Delta}^2}{J_0}=O\!\left(\frac{1}{J}\right),
\]
so $\bar\mu_0(w^S)-\bar\mu=o_p(1)$. Under $H_0^{wb,1}$, $\bar\mu=\tau=0$, hence $\bar\mu_0(w^S)=o_p(1)$.

Next, apply the same SRSWOR variance formula to the finite population $\{(\mu_j^\Delta)^2\}_{j=1}^J$:
\[
\Var\!\big(\overline{\mu^2}_0(w^S)\big)
=
\left(1-\frac{J_0}{J}\right)\frac{S_{\mu^2}^2}{J_0},
\qquad
S_{\mu^2}^2:=\frac{1}{J-1}\sum_{j=1}^J\left((\mu_j^\Delta)^2-\overline{\mu^2}\right)^2.
\]
By Jensen and Assumption~\ref{assump:bounded-fourth-moments},
\[
\frac{1}{J}\sum_{j=1}^J(\mu_j^\Delta)^4
=
\frac{1}{J}\sum_{j=1}^J\left(\frac{1}{I}\sum_{i=1}^I\Delta_{ij}\right)^4
\le
\frac{1}{IJ}\sum_{i=1}^I\sum_{j=1}^J\Delta_{ij}^4
=O(1),
\]
so $S_{\mu^2}^2=O(1)$ and therefore $\Var(\overline{\mu^2}_0(w^S))=O(J^{-1})$. Chebyshev's inequality implies $\overline{\mu^2}_0(w^S)-\overline{\mu^2}=o_p(1)$.

Now write the sample variance of $\{\mu_j^\Delta\}_{j:w_j^S=0}$ as
\[
s_{\mu^\Delta}^2(w^S)
=
\frac{1}{J_0-1}\sum_{j:w_j^S=0}\left(\mu_j^\Delta-\bar\mu_0(w^S)\right)^2
=
\frac{J_0}{J_0-1}\left(\overline{\mu^2}_0(w^S)-\bar\mu_0(w^S)^2\right).
\]
Since $\overline{\mu^2}_0(w^S)\to_p \overline{\mu^2}$ and $\bar\mu_0(w^S)=o_p(1)$, and $J_0/(J_0-1)\to 1$, we obtain
\begin{equation}\label{eq:s-muDelta-consistent}
s_{\mu^\Delta}^2(w^S)\xrightarrow{p} \overline{\mu^2}.
\end{equation}
Under $H_0^{wb,1}$, $\tau=0$, so
\[
S_{\mu^\Delta}^2=\frac{1}{J-1}\sum_{j=1}^J(\mu_j^\Delta-\tau)^2=\frac{1}{J-1}\sum_{j=1}^J(\mu_j^\Delta)^2
=
\frac{J}{J-1}\,\overline{\mu^2},
\]
and since $J/(J-1)\to 1$, \eqref{eq:s-muDelta-consistent} implies $s_{\mu^\Delta}^2(w^S)\to_p S_{\mu^\Delta}^2$.

\emph{(iii) Transfer from $\mu_j^\Delta$ to $\hat{\mu}_j^\Delta$.}
Using $\hat{\mu}_j^\Delta=\mu_j^\Delta+e_j$ and the decomposition of sample variance,
\[
s^2_{\hat{\mu}^{\Delta}}(w^B,w^S)
=
s_{\mu^\Delta}^2(w^S)
+
s_e^2(w^B,w^S)
+
\frac{2}{J_0-1}\sum_{j:w_j^S=0}\left(\mu_j^\Delta-\bar\mu_0(w^S)\right)\left(e_j-\bar e\right),
\]
the Cauchy--Schwarz inequality yields the bound
\[
\left|
\frac{2}{J_0-1}\sum_{j:w_j^S=0}\left(\mu_j^\Delta-\bar\mu_0(w^S)\right)\left(e_j-\bar e\right)
\right|
\le
2\sqrt{s_{\mu^\Delta}^2(w^S)}\sqrt{s_e^2(w^B,w^S)}.
\]
By \eqref{eq:avg-ej2-vanish} we have $s_e^2(w^B,w^S)\to_p 0$, and by the previous paragraph $s_{\mu^\Delta}^2(w^S)=O_p(1)$. Therefore the cross term is $o_p(1)$ and
\[
s^2_{\hat{\mu}^{\Delta}}(w^B,w^S)
=
s_{\mu^\Delta}^2(w^S)+o_p(1)
\xrightarrow{p}
S_{\mu^\Delta}^2.
\]
This establishes the desired consistency of $s^2_{\hat{\mu}^{\Delta}}(w^B,w^S)$ under joint asymptotics with $I\to\infty$, $J\to\infty$, and $I/J\to\lambda\in(0,\infty)$.

\textbf{Step 2(c): Limit of $I\,V^{S}(w^B,w^S)$ and $I\,V^{TW}(w^B,w^S)$.}
Substituting $s^2_{\hat{\mu}^{\Delta}}(w^B,w^S)\to_p S_{\mu^\Delta}^2$ into \eqref{eq:VS} gives
\[
I\,V^{S}(w^B,w^S)
=
I\left(1-\frac{J_0}{J}\right)\frac{s^2_{\hat{\mu}^{\Delta}}(w^B,w^S)}{J_0}
=
\frac{I}{J}\cdot \frac{J_1}{J_0}\cdot s^2_{\hat{\mu}^{\Delta}}(w^B,w^S)
\xrightarrow{p}
\lambda\cdot\frac{\pi_2}{1-\pi_2}\,S_{\mu^\Delta}^2.
\]
Combining with $I\,V^{B}(w^B,w^S)\to_p D$ yields
\[
I\,V^{TW}(w^B,w^S)
=
I\,V^{B}(w^B,w^S)+I\,V^{S}(w^B,w^S)
\xrightarrow{p}
D+\lambda\cdot\frac{\pi_2}{1-\pi_2}\,S_{\mu^\Delta}^2
=:D_{TW}.
\]

\textbf{Step 2(d): Conclude the sampling limit of $T^{WB,TW}$.}
Since
\[
\sqrt{I}\,T^{B}(w^B,w^S)
=
\sqrt{I}\,\tau(w^S)
+
\sqrt{I}\,\{T^{B}(w^B,w^S)-\tau(w^S)\},
\]
and $\sqrt{I}\,\tau(w^S)=\sqrt{I/J}\cdot \sqrt{J}\tau(w^S)\Rightarrow N\!\left(0,\lambda\cdot\frac{\pi_2}{1-\pi_2}S_{\mu^\Delta}^2\right)$, asymptotic independence yields
\[
\sqrt{I}\,T^{B}(w^B,w^S)
\Rightarrow
N\!\left(
0,\ (D-S_\delta^2)+\lambda\cdot\frac{\pi_2}{1-\pi_2}S_{\mu^\Delta}^2
\right).
\]
By Slutsky's theorem,
\[
T^{WB,TW}(w^B,w^S)
=
\frac{\sqrt{I}\,T^{B}(w^B,w^S)}{\sqrt{I\,V^{TW}(w^B,w^S)}}
\Rightarrow
N\!\left(
0,\
\frac{(D-S_\delta^2)+\lambda\frac{\pi_2}{1-\pi_2}S_{\mu^\Delta}^2}{D_{TW}}
\right)
=
N\!\left(0,\ 1-\frac{S_\delta^2}{D_{TW}}\right),
\]
which has variance weakly smaller than $1$.

\textbf{Step 3: Asymptotic validity.}
By Step 1, conditional on $(\mathbf{Y}^{obs},\mathcal{C})$, the randomization distribution of $T^{WB,TW}(\W\mid \mathbf{Y}^{obs},\mathcal{C})$ converges to $N(0,1)$. Hence the randomization $p$-value satisfies
\[
p_N
=
\left\{1-\Phi\!\left(T^{WB,TW}(\W^{obs}\mid \mathbf{Y}^{obs},\mathcal{C})\right)\right\}
+o_p(1).
\]
By Step 2, under $H_0^{wb,1}$,
\[
T^{WB,TW}(\W^{obs}\mid \mathbf{Y}^{obs},\mathcal{C})
\Rightarrow
\sigma_{TW}Z,
\qquad
Z\sim N(0,1),
\qquad
\sigma_{TW}^2=1-\frac{S_\delta^2}{D_{TW}}\le 1.
\]
Let $z_{1-\alpha}$ denote the $(1-\alpha)$-quantile of $N(0,1)$. Then
\begin{align*}
\limsup_{I,J\to\infty}P(p_N\le \alpha)
&=
\limsup_{I,J\to\infty}P\left(1-\Phi(T^{WB,TW}(\W^{obs}))\le \alpha\right)\\
&=
\limsup_{I,J\to\infty}P\left(T^{WB,TW}(\W^{obs})\ge z_{1-\alpha}\right)\\
&\le
P(\sigma_{TW}Z\ge z_{1-\alpha})
=
1-\Phi\!\left(\frac{z_{1-\alpha}}{\sigma_{TW}}\right)
\le
1-\Phi(z_{1-\alpha})
=
\alpha,
\end{align*}
where the last inequality uses $\sigma_{TW}\le 1$, which implies $z_{1-\alpha}/\sigma_{TW}\ge z_{1-\alpha}$. This establishes asymptotic validity under $H_0^{wb,1}$.
\end{proof}

\newpage
\section{Additional Simulation Results}\label{app:additional-simulation}
\subsection{Full simulation study for the weak null} \label{app:additional-simulation-global-weak}
In this section, we present the full simulation study conducted for weak null hypotheses as in Section \ref{subsec:sim-weak-null}. We consider three DGPs that all satisfy the $H_0^{wb,1}$ (population average) but differ in whether $H_0^{wb,2}$ (seller-specific) holds and whether Assumption~\ref{assump:buyer-nondegeneracy} is satisfied.

Let $\tilde{Y}_{ij}(0,0)$ and $\tilde{\Delta}_{ij}$ denote the baseline i.i.d.\ components generated as in Section \ref{subsec:sim-sharp}:
\[
\tilde{Y}_{ij}(0,0)\stackrel{\text{ind}}{\sim}\mathcal{N}(0,0.2^2),
\qquad
\tilde{\Delta}_{ij}\stackrel{\text{ind}}{\sim}\mathcal{N}(0,0.4^2),
\]
and define $Y_{ij}(1,0)=Y_{ij}(0,0)+\Delta_{ij}$ with $\Delta_{ij}$ specified below. (The remaining potential outcomes $Y_{ij}(0,1)$ and $Y_{ij}(1,1)$ are generated as in the Section \ref{subsec:sim-sharp} and play no role in the buyer-spillover test considered here.)

\begin{enumerate}
    \item \textbf{Model 1 (dyadic i.i.d.\ outcomes and treatment effects).}
    Set
    \[
        Y_{ij}(0,0)=\tilde{Y}_{ij}(0,0),
        \qquad
        \Delta_{ij}=\tilde{\Delta}_{ij}.
    \]
    Because treatment effects are generated i.i.d.\ across dyads with mean zero, $H_0^{wb,2}$ is approximately satisfied for large focal sets (local averages concentrate around zero), and $H_0^{wb,1}$ holds by construction.

    \item \textbf{Model 2 (buyer fixed effects in outcomes and seller fixed effects in treatment effects).}
    Draw independent fixed effects
    \[
        \alpha_i \stackrel{\text{ind}}{\sim} \mathcal{N}(0,0.1^2),
        \qquad
        \beta_j \stackrel{\text{ind}}{\sim} \mathcal{N}(0,0.4^2),
    \]
    and set
    \[
        Y_{ij}(0,0)=\tilde{Y}_{ij}(0,0)+\alpha_i,
        \qquad
        \Delta_{ij}=\tilde{\Delta}_{ij}+\beta_j.
    \]
    This DGP violates $H_0^{wb,2}$ because seller-specific average treatment effects vary across $j$ through $\beta_j$, so $\tau(w^S)$ need not be close to zero for a realized seller focal set. However, the $H_0^{wb,1}$ remains satisfied because both $\tilde{\Delta}_{ij}$ and $\beta_j$ have mean zero. Moreover, the buyer fixed effects $\alpha_i$ ensure Assumption~\ref{assump:buyer-nondegeneracy} is satisfied by inducing persistent cross-buyer heterogeneity in buyer-level averages.

    \item \textbf{Model 3 (seller fixed effects in treatment effects only).}
    Draw $\beta_j \stackrel{\text{ind}}{\sim} \mathcal{N}(0,0.4^2)$ and set
    \[
        Y_{ij}(0,0)=\tilde{Y}_{ij}(0,0),
        \qquad
        \Delta_{ij}=\tilde{\Delta}_{ij}+\beta_j.
    \]
    As in Model 2, $H_0^{wb,2}$ is violated while $H_0^{wb,1}$ is satisfied. However, without buyer fixed effects, buyer-level averages concentrate as $J$ grows, so Assumption~\ref{assump:buyer-nondegeneracy} is violated in this design.
\end{enumerate}

\begin{table}[ht!]
\centering
\setlength{\tabcolsep}{10pt}
\begin{adjustbox}{max width=\linewidth,center}
\begin{tabular}{ccccc}
\toprule
 &      &  \multicolumn{3}{c}{Under $H_0^{wb,1}$}  \\ \cmidrule{3-5} 
Setup & $n$ & {FRT} & {FRT adjusted} & {FRT two-way}   \\ \midrule
&10  & 11.24 & 4.52 & 4.38 \\
Model 1: i.i.d.\ outcomes
&20  & 9.70  & 3.98 & 4.08 \\
and treatment effects
&30  & 8.88  & 3.40 & 3.40 \\
(focal weak null holds)
&40  & 8.46  & 3.74 & 3.28 \\
&50  & 9.20  & 3.56 & 3.42 \\
&100 & 8.60  & 3.62 & 3.44 \\
\\
&10  & 11.04 & 9.52 & 4.04 \\ 
Model 2: fixed effects in 
&20  & 16.34 & 15.16 & 4.64 \\
both outcomes and
&30  & 16.20 & 15.10 & 4.76 \\
treatment effects
&40  & 21.62 & 20.24 & 4.90 \\
(Assumption~\ref{assump:buyer-nondegeneracy} satisfied)
&50  & 19.52 & 17.54 & 5.04 \\
&100 & 22.18 & 21.68 & 5.46 \\
\\
&10  & 38.00 & 24.36 & 6.42 \\ 
Model 3: i.i.d.\ outcomes 
&20  & 52.64 & 41.88 & 8.82 \\
and fixed effects 
&30  & 58.82 & 50.26 & 8.62 \\
in treatment effects
&40  & 57.52 & 47.12 & 7.92 \\
(Assumption~\ref{assump:buyer-nondegeneracy} violated)
&50  & 67.64 & 59.70 & 9.60 \\
&100 & 75.38 & 70.02 & 8.98 \\
\bottomrule
\end{tabular}
\end{adjustbox}
\caption{Rejection probabilities (in percent) for testing the global weak null $H_0^{wb,1}$, based on 5{,}000 Monte Carlo replications and 500 permutations per replication.}
\label{table:size-global-weak-app}
\end{table}

Table~\ref{table:size-global-weak-app} reports the resulting rejection probabilities under $H_0^{wb,1}$. Under Model~1, both FRT adjusted and FRT two-way are close to the nominal 5\% level (with mild conservativeness for larger $n$). Under Model~2, only the two-way studentized procedure controls size, whereas the standard studentization fails due to focal-average randomness induced by seller heterogeneity. Under Model~3, even the two-way procedure fails to control size, which is consistent with our theory because Assumption~\ref{assump:buyer-nondegeneracy} is violated.

\subsection{Remarks and simulation study for weak null of total effects}\label{app:total-effect-weak}

Following the analysis in Section \ref{subsec:weak-null-standard}, the asymptotic properties of tests for the total effect mirror those established for the buyer-spillover effect when using Procedure \ref{procedure:weak-spillover}. Under a weak null hypothesis concerning the global average total effect, the test is generally expected to be asymptotically invalid. Conversely, if the weak null is defined relative to the focal average total effect, the test maintains asymptotic validity. These theoretical insights are consistent with the simulation results presented in this section.

Under the same experimental framework of the weak null $H_0^{wb,2}$ in Section \ref{subsec:sim-weak-null}, Table~\ref{table:size-power-weak} presents the rejection probabilities under the weak null and alternative hypotheses. In this setup, the block size is fixed at $2$, ensuring a sufficiently large number of blocks to guarantee the precision of the variance estimators. In this setup, the weak null hypotheses of zero focal average effects are satisfied by the data generating process (DGP). 

\begin{table}[ht!]
\centering
\setlength{\tabcolsep}{3pt}
\begin{adjustbox}{max width=0.9\linewidth,center}
\begin{tabular}{ccccccccccc}
\toprule
 &      &  \multicolumn{3}{c}{Under $H_0$} & & \multicolumn{3}{c}{Under $H_1$} \\ \cmidrule{3-5} \cmidrule{7-9}
Setup & $n$ & {FRT} & {Neymanian} & {FRT adjusted}  &  &  {FRT} & {Neymanian} & {FRT adjusted}    \\ \midrule
&10 & 13.84  & 0.82  & 8.22   &  & 15.24  & 1.36  & 8.16  \\ 
I.I.D outcomes
&20 & 12.14  & 1.18  & 6.04   &  & 16.48  & 5.36  & 10.64  \\
and treatment effects
&30 & 11.90  & 1.30  & 5.70   &  & 19.78  & 10.48  & 13.36  \\
(similar to $H_0^{wb,2}$)
&40 & 11.46  & 0.84  & 5.16   &  & 24.76  & 22.16  & 16.74  \\
&50 & 11.56  & 1.18  & 5.52   &  & 30.74  & 38.04  & 21.30  \\
&100 & 11.44  & 0.86  & 4.84  &  & 67.32  & 97.82  & 53.56  \\
\bottomrule
\end{tabular}
\end{adjustbox}
\caption{Rejection probabilities under weak the null of zero focal average effect and alternative hypothesis}
\label{table:size-power-weak}
\end{table}

Following the analysis in Section \ref{subsec:weak-null-twoway}, the two-way studentization exploits the fact that, under $H_0^{\mathrm{wb},1}$, the additional sampling uncertainty induced by seller randomization can be consistently estimated from the data via cross-seller variation in estimable seller-level effects (cf.\ \eqref{eq:mu-hat-delta}--\eqref{eq:VS}). In contrast, extending this correction to global total effects would require estimating the variation of a focal total effect across the randomization structure, which is generally infeasible under weak nulls in this settings because the relevant total-effect variation is not directly estimable. For instance, with block size $K=1$, the focal average need not converge to the global average even asymptotically, so discrepancies between focal and global total effects can persist. Consequently, a variance correction of the form \eqref{eq:VS} is generally insufficient, and valid studentization for global total-effect weak nulls is substantially more difficult.

\newpage
\section{Additional Details and Results for Data Application}\label{app:details-data}
In this section, we provide extra details and results for the data application in Section \ref{sec:empirical}, including data dictionary, mapping between notation and empirical setting and testing results on additional outcome variables. 

\paragraph{Data dictionary}
Figure \ref{fig:variables-data} presents a list of variables available in the data set of \cite{Comola2021}, including several household-pair covariates measured at both survey waves. These include the absolute difference between paired households in the number of household members, number of children, whether the female household head is married, and indicators for whether either household experienced a livestock or death shock.

\begin{figure}[h]
    \centering
    \includegraphics[width=1.0\linewidth]{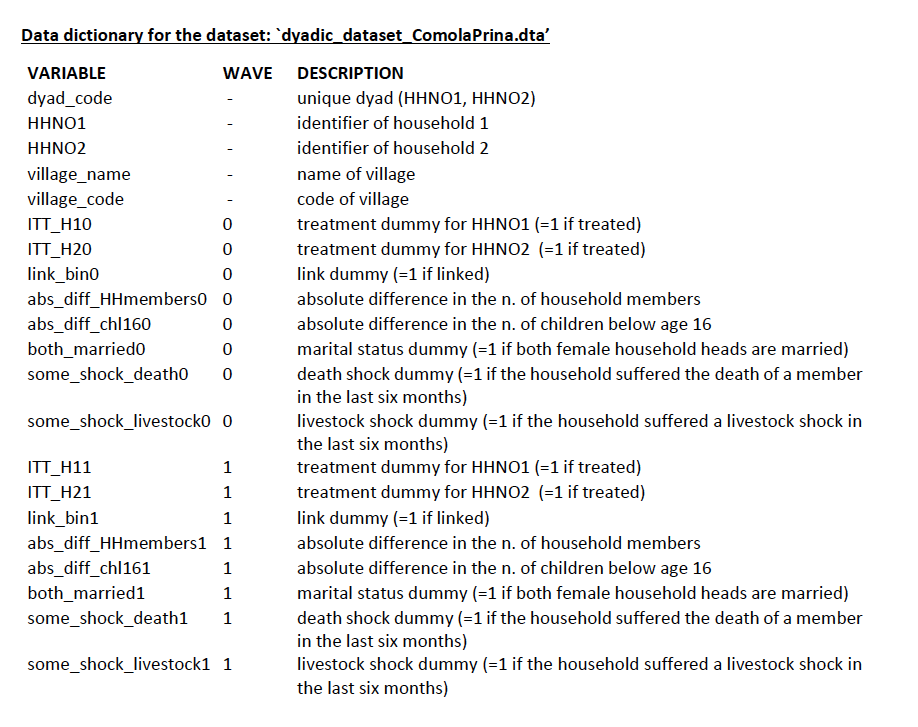}
    \caption{All available variables from the data}
    \label{fig:variables-data}
\end{figure}

\paragraph{Mapping between notation and empirical setting}
In table \ref{table:structure}, we summarize how our notation  is mapped to this empirical data application.

\begin{table}[ht!]
\centering
\vspace{20px}
\setlength{\tabcolsep}{8pt}
\begin{adjustbox}{max width=\linewidth,center}
\begin{tabular}{ll}
\toprule
\textbf{Notation} & \textbf{Description} \\
\midrule
$(i, j)$ & Household pair where $i$ experienced a shock (buyer) and $j$ did not (seller) \\
$X_i$ & Covariate indicating whether household $i$ experienced a shock ($X_i = 1$) \\
$Y_{i,j}$ & Observed change in financial link between households $i$ and $j$ \\
         & (alternative outcomes are also considered in Table~\ref{table:empirical}) \\
$w^B$ & Vector indicating which households (with shocks) were offered savings accounts \\
$w^S$ & Vector indicating which households (without shocks) were offered savings accounts \\
$Y_{i,j}(0, 0)$ & Benchmark outcome when neither household is treated \\
$Y_{i,j}(1, 0)$ & Financial link change when only the shocked household $i$ is treated \\
$Y_{i,j}(0, 1)$ & Financial link change when only the non-shocked household $j$ is treated \\
$Y_{i,j}(1, 1)$ & Financial link change when both households are treated \\
\bottomrule
\end{tabular}
\end{adjustbox}
\caption{Mapping between empirical setting and notation used in the paper}
\label{table:structure}
\end{table}

\paragraph{Additional outcome variables}
In addition to the main outcome of interest from \citet{Comola2021}—changes in network links—we also constructed auxiliary outcomes based on changes in household-pair covariates measured at baseline and endline. For instance, the dataset includes information about the absolute difference in household size between each pair in both survey waves; differencing these values across time yields outcomes that reflect changes in household characteristics such as size. 
However, as shown in Table \ref{table:empirical-app},  we don't find significant effects on these outcomes as well.

\begin{table}[ht!]
\centering
\vspace{20px}
\setlength{\tabcolsep}{8pt}
\begin{adjustbox}{max width=\linewidth,center}
\begin{tabular}{ccccc}
\toprule
Outcome & Hypothesis & FRT & FRT adjusted & Two-sample $t$-test    \\ \midrule
\multirow{3}{*}{Financial network links} & $Y_{i,j}(0,0) = Y_{i,j}(1,0)$ & 0.49 & 0.50 & 0.49\\ 
 & $Y_{i,j}(0,0) = Y_{i,j}(0,1)$ & 0.42 & 0.41 & 0.41\\ 
 & $Y_{i,j}(0,0) = Y_{i,j}(1,1)$ & 0.70 & 0.69 & 0.77  \\ \\
 \multirow{3}{*}{Number of household members} & $Y_{i,j}(0,0) = Y_{i,j}(1,0)$ & 0.64 & 0.64 & 0.63\\ 
  & $Y_{i,j}(0,0) = Y_{i,j}(0,1)$ & 0.50 & 0.49 & 0.48\\ 
 & $Y_{i,j}(0,0) = Y_{i,j}(1,1)$ & 0.98 & 0.98 & 0.94  \\ \\
 \multirow{3}{*}{Number of children} & $Y_{i,j}(0,0) = Y_{i,j}(1,0)$ & 0.22 & 0.23 & 0.23\\ 
  & $Y_{i,j}(0,0) = Y_{i,j}(0,1)$ & 0.76 & 0.74 & 0.75\\ 
 & $Y_{i,j}(0,0) = Y_{i,j}(1,1)$ & 0.95 & 0.95 & 0.81  \\ 
\bottomrule
\end{tabular}
\end{adjustbox}
\caption{Randomization $p$-values on multiple outcomes defined 
in the data example of Section~\ref{sec:empirical}.}
\label{table:empirical-app}
\end{table}

The motivation for including these auxiliary outcomes arises from the observation that only a small fraction of household pairs are linked in the network, resulting in a sparse outcome matrix with limited variation, as shown in Table~\ref{table:descriptive}. To address this limitation, we include two additional outcomes in Table~\ref{table:empirical}—the absolute difference in the number of household members and in the number of children—which exhibit greater variation. The treatment effects remain insignificant for these outcomes, providing further support for our findings.

\begin{table}[ht!]
\centering
\vspace{20px}
\setlength{\tabcolsep}{8pt}
\begin{adjustbox}{max width=\linewidth,center}
\begin{tabular}{ccc}
\toprule
Subsamples & Number of Non-zero Entries  & Sample Size  
\\ \midrule
$\{Y_i:w^B=0,w^S=0\}$ & 14 & 749 \\ 
$\{Y_i:w^B=1,w^S=0\}$ & 22 & 1049 \\ 
$\{Y_i:w^B=0,w^S=1\}$ & 9 & 697 \\ 
$\{Y_i:w^B=1,w^S=1\}$ & 21 & 1017 \\ 
\bottomrule
\end{tabular}
\end{adjustbox}
\caption{Summary statistics on network link}
\label{table:descriptive}
\end{table}

\end{document}